\newtheorem{theorem}{Theorem}[section]
\newtheorem{corollary}[theorem]{Corollary}
\newtheorem{proposition}[theorem]{Proposition}
\newtheorem{assumption}[]{Assumption}
\newtheorem{example}{Example}
\begin{document}

\title{Robust Data-Enabled Predictive Control: Tractable Formulations and Performance Guarantees}
\author{Linbin Huang$^\dag$, Jianzhe Zhen$^\dag$, John Lygeros, and Florian D{\"o}rfler
\thanks{\dag: The first two authors contributed equally to this work.}
\thanks{The authors are with the Department of Information Technology and Electrical
Engineering at ETH Z{\"u}rich, Switzerland. (Emails: \text\{linhuang, jizhen, jlygeros, dorfler\}@ethz.ch)}
\thanks{This research was supported by the SNSF under NCCR Automation, the ERC under the project OCAL (grant number 787845), and ETH Zurich Funds.}
}

	
	\maketitle
	
	\begin{abstract}
	We introduce a general framework for robust data-enabled predictive control (DeePC) for linear time-invariant (LTI) systems. The proposed framework enables us to obtain model-free optimal control for LTI systems based on noisy input/output data. More specifically, robust DeePC solves a min-max optimization problem to compute the optimal control sequence that is resilient to all possible realizations of the uncertainties in the input/output data within a prescribed uncertainty set. We present computationally tractable reformulations of the min-max problem with various uncertainty sets. Furthermore, we show that even though an accurate prediction of the future behavior is unattainable in practice due to inaccessibility of the perfect input/output data, the obtained robust optimal control sequence provides performance guarantees for the actually \textit{realized} input/output cost.
	We further show that the robust DeePC generalizes and robustifies the regularized DeePC (with quadratic regularization or 1-norm regularization) proposed in the literature. Finally, we demonstrate the performance of the proposed robust DeePC algorithm on high-fidelity, nonlinear, and noisy simulations of a grid-connected power converter system.
	
	\end{abstract}
	
	\begin{IEEEkeywords}
	Data-driven control, predictive control, regularization, robust control, robust optimization.
	\end{IEEEkeywords}
	
	\section{Introduction}\label{sec:intro}
	
	Data-driven control seeking an optimal control strategy from data is attracting increasing interest from both academia and industry. Compared to conventional model-based control, data-driven control has the advantage that it can be applied in scenarios where data is readily available, but the system and uncertainty models are too complex to obtain or maintain, e.g., large-scale power systems or energy efficient buildings.
	
	There are mainly two paradigms of data-driven control: 1)~\textit{indirect data-driven control} that first identifies a model and then conducts control design based on the identified model, and 2)~\textit{direct data-driven control} that circumvents the step of system identification and obtains control policies directly from data. Indirect data-driven control has a long history in control, where many methods have been developed for identifying the system model, e.g., prediction error methods (PEMs), maximum likelihood methods, and subspace methods~\cite{ljung1999system}; the subsequent optimal control can be conducted using, for instance, model predictive control (MPC), linear–quadratic–Gaussian control, and system-level synthesis~\cite{gevers2005identification,favoreel1999spc,morari1999model,pillonetto2014kernel,boczar2018finite}.
	Direct data-driven control gained increasing attention and became popular thanks to iterative feedback tuning \cite{hjalmarsson1998iterative}, virtual reference feedback tuning~\cite{campi2002virtual}, reinforcement learning~\cite{recht2019tour}, etc. The connection between indirect data-driven control and direct data-driven control was investigated in \cite{dorfler2021bridging}, which provides new insights from the perspective of regularization and convex relaxations in optimization. A central promise is that direct data-driven control may have higher flexibility and better performance than indirect data-driven control thanks to the data-centric representation that avoids using a specific model from identification.
	Moreover, it is generally difficult to map uncertainty specifications from system identification over to robust control in indirect data-driven control, while, as we will show in this paper, this may become easier in direct data-driven control.
	
	In recent years, a result originally formulated by Willems and co-authors in the context of behavioral system theory has received renewed attention in direct data-driven control. This result, known as the \textit{Fundamental Lemma}~\cite{willems2005note}, shows that the subspace of input/output trajectories of a linear time-invariant (LTI) system can be obtained from the column span of a data Hankel matrix, thereby avoiding a parametric system representation. This result was extended in~\cite{markovskyidentifiability} and~\cite{van2020willems} to consider mosaic Hankel matrices, (Chinese) Page matrices, and trajectory matrices as data-driven predictors. Recently, multiple direct data-driven control methods have been proposed based on the Fundamental Lemma, e.g.,~\cite{coulson2019data,de2019formulas,berberich2019data,van2019data,coulson2020distributionally,wieler2021data,lian2021nonlinear,alexandru2020data}.
	
	Here we concentrate on the {\bf d}ata-{\bf e}nabl{\bf e}d {\bf p}redictive {\bf c}ontrol (DeePC) proposed in~\cite{coulson2019data}. In the spirit of the Fundamental Lemma, the DeePC algorithm relies only on input/output data to learn the behavior of the unknown system and perform safe and optimal control to drive the system along a desired trajectory using real-time feedback. The DeePC algorithm has been successfully applied in many scenarios, including power systems~\cite{huang2019decentralized,huang2019data}, motor drives~\cite{carlet2020data}, and quadcopters~\cite{elokda2019data}.
	
	When perfect (noiseless and uncorrupted) input/output data is accessible, DeePC can accurately predict the future behavior of the system thanks to the Fundamental Lemma. In this case, DeePC has equivalent closed-loop behavior to conventional MPC with a model and perfect state estimation~\cite{coulson2019data}. However, in practice, perfect data is in general not accessible to the controller due to measurement noise and process noise, which leads to inaccurate estimations and predictions and may degrade the quality of the obtained optimal control sequence.
	In fact, a key question for data-driven control is: how does the system perform when applying control policies computed from noisy data? Usually, uncertainty quantification is needed in system identification and robustification is needed in the design of the control policy, e.g., through regularization~\cite{berberich2019data,coulson2020distributionally}, computation of reachable regions~\cite{Alanwar2021}, or robust system level synthesis~\cite{xue2020data}. In practice, the uncertainty descriptions of identification and robust control are often incompatible.
	For DeePC, it has been frequently observed that regularization is very important to ensure good performance under noisy measurements. Theoretical support for this observation was provided in~\cite{coulson2020distributionally} and \cite{coulson2019regularized}, where it was shown that regularization provides distributional robustness against stochastic disturbances. The authors of~\cite{berberich2019data} showed that a quadratic regularization is also essential for stability. Moreover, regularization leads to a convex relaxation of the indirect (first identify, then control) approach and accounts for an implicit identification step~\cite{dorfler2021bridging}.
	In our recent work~\cite{huang2020quad}, we showed that including a quadratic regularization in DeePC enables the reformulation as a min-max problem, which provides robustness against uncertainties in the output data. Regularization was also linked to a min-max formulation in~\cite{xue2020data} for system level synthesis.
	Nonetheless, it still remains unclear whether the realized input/output cost can be guaranteed by applying the control sequence computed from noisy data, and whether different structural assumptions on the uncertainties (e.g., with Hankel structures) can be taken into account to reduce the conservativeness induced from coarse robustification.
	
	To this end, we present a robust DeePC framework that involves solving a min-max optimization problem to robustify the control sequence against uncertainties in the input/output data. We justify our min-max formulation by showing that it enables a performance guarantee for the realized input/output cost of the system.
	We consider different uncertainty sets as tight estimates of input/output uncertainties when different types of data matrices (e.g., Hankel or Page matrices) are used as predictors. For instance, to make the considered set tight, one may incorporate Hankel structures on uncertainties when Hankel matrices are employed, and may consider column-wise uncertainties when Page matrices or trajectory matrices are used. For such uncertainty structures we explicitly show how the min-max problems can be reduced to tractable minimization problems. In particular, we discuss the connections between the robust DeePC and (previously proposed and novel) regularized DeePC algorithms, and illustrate how different uncertainty sets lead to different regularization terms, extending our result on quadratic regularization in \cite{huang2020quad}.
	
	The rest of this paper is organized as follows: in Section~II we give a brief review on the DeePC algorithm. Section~III presents the robust DeePC framework. Section~IV derives tractable formulations for robust DeePC under different uncertainty sets. In Section~V we discuss the connections between robustness and regularization. Section~VI tests the robust DeePC with high-fidelity simulations on a power converter system. We conclude the paper in Section~VII.
	
	\section*{Notation}
Let $\mathbb N$ denote the set of positive integers, $\mathbb R$ denote the set of real numbers, $\mathbb R_{\ge 0}$ the set of nonnegative real numbers, $\mathbb R^n$ the $n$-dimensional Euclidean space, $\mathbb R^{m \times n}$ the set of real $m$-by-$n$ matrices, and $\mathbb S^{n}$ $(\mathbb S^{n}_{>0})$ the set of real $n$-by-$n$ symmetric (positive definite) matrices. We denote the index set with cardinality $n\in \mathbb N$ as $[n]$, that is, $[n] = \{ 1, ..., n \}$.
We use $\|x\|$ ($\|A\|$) to denote the (induced) 2-norm of the vector $x$ (the matrix $A$); we use $\|x\|_1$ to denote the 1-norm of the vector $x$; for a vector $x$, we use ${\left\| x \right\|_A^2}$ to denote  $x^\top A x$; for a matrix $A$, we use $\|A\|_Q$ to denote $\|Q^{\frac{1}{2}}A\|$; we use $\|A\|_F$ to denote the Frobenius norm of the matrix $A$.
We use $\boldsymbol 1_n$ to denote a vector of ones of length $n$, $\boldsymbol 1_{m\times n}$ to denote a $m$-by-$n$ matrix of ones, and $I_n$ to denote an $n$-by-$n$ identity matrix (abbreviated as $I$ when the dimensions can be inferred from the context). We use $A^+$ to denote the right inverse of $A$, and $A^\bot = I - A^+A$ to denote the orthogonal projector onto the kernel of $A$. We use ${\rm col}(Z_0,Z_1,...,Z_\ell)$ to denote the matrix $[Z_0^{\top}\; Z_1^{\top}\; \cdots \; Z_\ell^{\top}]^{\top}$. We use $\otimes$ to denote the Kronecker product.

	\section{Data-Enabled Predictive Control}
	
	\subsection{Notation and Preliminaries on the Fundamental Lemma}
	
	Consider a discrete-time linear time-invariant (LTI) system
	\begin{equation}
		\left\{ \begin{array}{l}
			{x_{t + 1}} = A{x_t} + B{u_t}\\
			{y_t} = C{x_t} + D{u_t},
		\end{array} \right.\,		\label{eq:ABCD}
	\end{equation}
	where $A \in \mathbb{R}^{n \times n}$, $B \in \mathbb{R}^{n \times m}$, $C \in \mathbb{R}^{p \times n}$, $D \in \mathbb{R}^{p \times m}$, $x_t \in \mathbb{R}^n$ is the state of the system at~$t \in \mathbb{Z}_{ \ge 0}$,~$u_{t} \in \mathbb{R}^m$ is the input vector, and $y_{t} \in \mathbb{R}^p$ is the output vector. Recall the respective extended observability matrix and convolution (impulse-response) matrices
	\vspace{-0.5mm}
	\begin{equation*}
		\mathscr{O}_{\ell}(A,C) := {\rm col}(C,CA,...,CA^{\ell-1}) \,, \quad {\rm and}
		\vspace{-0.5mm}
	\end{equation*}
	\begin{equation}\label{eq:T_N}
\mathcal{T}_N = \begin{bmatrix}
D & 0 & 0 & \cdots & 0 \\
CB & D & 0 & \cdots & 0 \\
CAB & CB & D  & \cdots & 0 \\
\vdots & \vdots & \vdots & \ddots & \vdots \\
CA^{N-2}B & CA^{N-3}B & CA^{N-4}B & \cdots & D \\
\end{bmatrix} \,.
\end{equation}
	The \textit{lag} of the system (\ref{eq:ABCD}) is defined by the smallest integer $\ell \in \mathbb{Z}_{ \ge 0}$ such that the observability matrix $\mathscr{O}_{\ell}(A,C)$ has rank $n$, i.e., the state can be reconstructed from $\ell$ measurements.
	In a data-driven setting, $\ell$ and $n$ are generally unknown, but upper bounds can usually be inferred from knowledge of the system. Consider $L,T \in \mathbb{Z}_{ \ge 0}$ with $T \geq L > \ell$ and length-$T$ input and output trajectories of \eqref{eq:ABCD}: $u = {\rm col} (u_0,u_1,\dots u_{T-1})\in \mathbb{R}^{mT}$ and $y ={\rm col} (y_0,y_1, \dots y_{T-1})\in \mathbb{R}^{pT}$. For the inputs $u$, define the Hankel matrix of depth $L$ as
	\begin{equation}
		\mathscr{H}_L(u) := \left[ {\begin{array}{*{20}{c}}
				{{u_0}}&{{u_1}}& \cdots &{{u_{T - L}}}\\
				{{u_1}}&{{u_2}}& \cdots &{{u_{T - L + 1}}}\\
				\vdots & \vdots & \ddots & \vdots \\
				{{u_{L-1}}}&{{u_{L}}}& \cdots &{{u_{T-1}}}
		\end{array}} \right] \,.		
		\label{eq:Hankel_L}
	\end{equation}
	Accordingly, for the outputs define the Hankel matrix $\mathscr{H}_L(y)$. Consider the stacked matrix
	$\mathscr{H}_L(u,y) = \left[\begin{smallmatrix}\mathscr{H}_L(u)\\\mathscr{H}_L(y)\end{smallmatrix}\right]$. By \cite[Corollary 19]{markovskyidentifiability},  the restricted behavior (of length $L$) equals the image of $\mathscr{H}_L(u,y)$ if and only if rank$\left(\mathscr{H}_L(u,y)\right)=mL + n$. 
	Note that this result extends and includes the original  Fundamental Lemma  \cite[Theorem 1]{willems2005note} which requires controllability and persistency of excitation of order $L+n$ (i.e., $\mathscr{H}_{L+n}(u)$ must have full row rank) as sufficient conditions.
	
	These behavioral results can be leveraged for data-driven prediction and estimation as follows. Consider $T_{\rm ini},N,T \in \mathbb{Z}_{ \ge 0}$, as well as an input/output time series ${\rm col}(u^{\rm{d}},y^{\rm{d}}) \in \mathbb{R}^{(m+p)T}$ so that rank$\left(\mathscr{H}_{T_{\rm ini}+N}(u^{\rm{d}},y^{\rm{d}})\right)=m(T_{\rm ini}+N) + n$. Here the superscript ``d'' denotes data collected offline, and the rank condition is met by choosing $u^{\rm{d}}$ to be  persistently exciting of sufficiently high order. The Hankel matrix $\mathscr{H}_{T_{\rm ini}+N}(u^{\rm{d}},y^{\rm{d}})$ can be partitioned into
	\begin{equation*}
		\left[ {\begin{array}{*{20}{c}}
				{{\bar U_{\rm P}}}\\
				{{\bar U_{\rm F}}}
		\end{array}} \right] := \mathscr{H_c}_{T_{\rm ini}+N}(u^{\rm{d}}) \quad \text{and} \quad \left[ {\begin{array}{*{20}{c}}
				{{\bar Y_{\rm P}}}\\
				{{\bar Y_{\rm F}}}
		\end{array}} \right] := \mathscr{H_c}_{T_{\rm ini}+N}(y^{\rm{d}})\,,		\label{eq:partition_Huy}
	\end{equation*}
	where $\bar U_{\rm P} \in \mathbb{R}^{mT_{\rm ini} \times H_c}$, $\bar U_{\rm F} \in \mathbb{R}^{mN \times H_c}$, $\bar Y_{\rm P} \in \mathbb{R}^{pT_{\rm ini} \times H_c}$, $\bar Y_{\rm F} \in \mathbb{R}^{pN \times H_c}$, and $H_c = T-T_{\rm ini}-N+1$. In the sequel, the data in the partition with subscript P (for ``past'') will be used to implicitly estimate the initial condition of the system, whereas the data with subscript F will be used to predict the ``future'' trajectories. In this case, $T_{\rm ini}$ is the length of an initial trajectory measured in the immediate past during on-line operation, and $N$ is the length of a predicted trajectory starting from the initial trajectory. Recall that the image of $\mathscr{H}_{T_{\rm ini}+N}(u^{\rm{d}},y^{\rm{d}})$ spans all length-$(T_{\rm ini}+N)$ trajectories, that is,
 ${\rm{col}}(u_{\rm ini},u,y_{\rm ini},y) \in \mathbb R^{(m+p)(T_{\rm ini}+N)}$ is a trajectory of (\ref{eq:ABCD}) if and only if there exists $g \in \mathbb{R}^{H_c}$ so\,that
\begin{equation}\label{eq:Hankel_g}
\left[ {\begin{array}{*{20}{c}}
	{{\bar U_{\rm P}}}\\
	{{\bar Y_{\rm P}}}\\
	{{\bar U_{\rm F}}}\\
	{{\bar Y_{\rm F}}}
	\end{array}} \right]g = \left[ {\begin{array}{*{10}{c}}
	{{\bar u_{\rm ini}}}\\
	{{\bar y_{\rm ini}}}\\
	u\\
	y
	\end{array}} \right]\,.		
\end{equation}
The initial trajectory ${\rm{col}}(\bar u_{\rm ini},\bar y_{\rm ini}) \in \mathbb R^{(m+p)T_{\rm ini}}$ can be thought of as setting the initial condition for the future (to be predicted) trajectory ${\rm{col}}(u,y)\in \mathbb R^{(m+p)N}$. In particular, if $T_{\rm ini} \ge \ell$, for every given future input trajectory $u$, the future output trajectory $y$ is uniquely determined through (\ref{eq:Hankel_g}) \cite{markovsky2008data}.

In addition to Hankel matrices, one can also use more input/output data to construct (Chinese) Page matrices or trajectory matrices as data-driven predictors \cite{markovskyidentifiability}, where the Page matrix (of depth $L$) for a signal $u \in \mathbb{R}^{mLT'}$ is
\begin{equation}
		\mathscr{P}_L(u) := \left[ {\begin{array}{*{20}{c}}
				{{u_0}}&{{u_L}}& \cdots &{{u_{L(T' - 1)}}}\\
				{{u_1}}&{{u_{L+1}}}& \cdots &{{u_{L(T' - 1)+1}}}\\
				\vdots & \vdots & \ddots & \vdots \\
				{{u_{L-1}}}&{{u_{2L-1}}}& \cdots &{{u_{LT'-1}}}
		\end{array}} \right] \,,		
		\label{eq:Page_L}
\end{equation}
and a trajectory matrix (of depth $L$) can be constructed from $T'$ independent trajectories $u^i \in \mathbb{R}^{mL} $ ($i \in [T']$) as
\begin{equation}
		\mathscr{T}_L(u) := \left[ {\begin{array}{*{20}{c}}
				{{u^1}}&{{u^2}}& \cdots &{{u^{T'}}}
		\end{array}} \right] \,.		
		\label{eq:Trajectory_L}
\end{equation}
Notice that the elements in a Hankel matrix are correlated (structured), while the elements in a Page matrix or a trajectory matrix are independent (unstructured).

	\subsection{Review of the DeePC algorithm}
	
	The DeePC algorithm proposed in \cite{coulson2019data} directly uses input/output data collected from the unknown system to predict the future behaviour, and performs optimal and safe control without identifying a parametric system representation. More specifically, DeePC solves the following optimization problem to obtain the optimal future control inputs
	\begin{equation} 			\label{eq:DeePC}
		\begin{array}{cl}
			\displaystyle \mathop {{\rm{min}}}\limits_{g \atop (u, y) \in \mathcal C }  \left\{ {\left\| u \right\|_R^2} + {\left\| {y - r} \right\|_Q^2}  \, \Bigg|  \,\,   {\begin{bmatrix}
	{{\bar U_{\rm P}}}\\
	{{\bar Y_{\rm P}}}\\
	{{\bar U_{\rm F}}}\\
	{{\bar Y_{\rm F}}}
	\end{bmatrix}}g =  {\begin{bmatrix}
	{{\bar u_{\rm ini}}}\\
	{{\bar y_{\rm ini}}}\\
	u\\
	y
	\end{bmatrix}}  \right\},
		\end{array}
	\end{equation}
	where the set of input/output constraints is defined as $\mathcal C = \{ (u,y) \in \mathbb{R}^{(m+p)N} \ | \ W {\rm col}(u,y)  \le w \}$ for $W\in \mathbb R^{n_w \times (m+p)N}$ and~$w \in \mathbb R^{n_w}$. The positive definite matrix~$R \in \mathbb{S}^{mN \times mN}_{>0}$ and positive semi-definite matrix~$Q \in \mathbb{S}^{pN \times pN}_{\ge 0}$ are the cost matrices. The vector $r \in \mathbb{R}^{pN}$ is a prescribed reference trajectory for the future outputs.
	DeePC involves solving the convex quadratic problem (\ref{eq:DeePC}) in a receding horizon fashion, that is, after calculating the optimal control sequence $u^\star$, we apply $(u_t,...,u_{t+k-1}) = (u_0^{\star},...,u_{k-1}^{\star})$ to the system for $k \le N$ time steps, then, reinitialize the problem (\ref{eq:DeePC}) by updating ${\rm col}(\bar u_{\rm ini},\bar y_{\rm ini})$ to the most recent input and output measurements, and setting $t$ to $t+k$, to calculate the new optimal control for the next $k \le N$ time steps. As in MPC, the control horizon $k$ is a design parameter.

	The standard DeePC algorithm in \eqref{eq:DeePC} assumes perfect (noiseless and uncorrupted) input/output data generated from the unknown system \eqref{eq:ABCD}.
	However, in practice, the perfect data is not accessible to the controller due to, for example, measurement noise, process noise, or noise that enters through the input channels. Instead, the corresponding measured output data and the corresponding recorded input data are used in the control algorithm.
	
	{\textbf{Perfect Data and Noisy Data.}} Throughout the paper, we use $\bar U_{\rm P}$, $\bar U_{\rm F}$, $\bar u_{\rm ini}$, $\bar Y_{\rm P}$, $\bar Y_{\rm F}$, and $\bar y_{\rm ini}$ to denote the perfect data generated from the system \eqref{eq:ABCD}, which accurately captures the system dynamics according to \eqref{eq:Hankel_g}; we use $\hat U_{\rm P}$, $\hat U_{\rm F}$, and $\hat u_{\rm ini}$ to denote the corresponding recorded input data, and use $\hat Y_{\rm P}$, $\hat Y_{\rm F}$, and $\hat y_{\rm ini}$ to denote the corresponding measured output data.

	\section{Robust DeePC}
	
	In this section, we propose a novel framework for robust DeePC to incorporate different geometry and tighter estimates of uncertainty sets in the input/output data.
	
	\subsection{Robust DeePC} \label{TR_Robust_DPC}

The perfect input/output data generated from the unknown system~\eqref{eq:ABCD} precisely captures the system dynamics and accurately predicts the future behavior of the system thanks to Fundamental Lemma \cite{willems2005note}. However, in practice, the perfect data is not accessible to the controller due to, for instance, process noise, or measurement noise. This results in inaccurate prediction and may degrade performance once the control sequence is applied to the system. To deal with this problem, we propose a robust optimization framework that incorporates uncertainties in the data matrices and the initial trajectory of~\eqref{eq:DeePC} by considering the following robust counterpart
	\begin{equation}
		\begin{array}{l}
			\displaystyle \mathop {{\rm{min}}}\limits_{ g, \sigma_u, \sigma_y \atop (u, y) \in \mathcal C}  {\left\| u \right\|_R^2} + {\left\| {y - r} \right\|_Q^2}  + h(\sigma_u, \sigma_y)\\
			\;\;{\rm s.t.}\;\; {\begin{bmatrix}
					{{U_{\rm P} (\xi)}}\\
					{{Y_{\rm P} (\xi)}}\\
					{{U_{\rm F} (\xi)}}\\
					{{Y_{\rm F} (\xi)}}
			\end{bmatrix}}g = {\begin{bmatrix}
					{{u_{\rm ini} (\xi)}}\\
					{{y_{\rm ini} (\xi) }}\\
					u\\
					y
			\end{bmatrix}} + {\begin{bmatrix}
					\sigma_u \\
					\sigma_y \\
					0\\
					0
			\end{bmatrix}},\ \forall \xi \in \mathcal D
			\label{eq:RobustDeePC}
		\end{array}
	\end{equation}
	where the prescribed uncertainty set $\mathcal D \subset \mathbb R^{n_\xi}$ is a compact and convex set with nonempty relative interior, and $(\sigma_u, \sigma_y)$ captures the violation of the first two sets of equality constraints. The estimation regularizer $h(\sigma_u, \sigma_y)$ penalizes violation of the first two sets of equality constraints due to uncertainties, which can be seen as a penalty on the estimation uncertainty. We assume that $h:\mathbb R^{mN} \times \mathbb R^{pN} \rightarrow \mathbb R_{\ge 0}$ is a convex function with $h(\sigma_u, \sigma_y) = 0$ if and only if $ \sigma_u = 0$ and $\sigma_y =0$.
	
	We assume that the elements of the Hankel matrices and the initial trajectory are affected by uncertainties in an affine way. More specifically, the matrices ${{U_{\rm P} (\xi)}}, {{Y_{\rm P} (\xi)}}, {{U_{\rm F} (\xi)}}, {{Y_{\rm F} (\xi)}}$, and vectors ${{u_{\rm ini} (\xi)}}$ and ${{y_{\rm ini} (\xi) }}$ are affine functions of the uncertain parameter $\xi\in \mathcal{D}$, with ${{U_{\rm P} (0)}} = \hat U_{\rm P}, {{Y_{\rm P} (0)}} = \hat Y_{\rm P}, {{U_{\rm F} (0)}} = \hat U_{\rm F}, {{Y_{\rm F} (0)}} = \hat Y_{\rm F}$, ${{u_{\rm ini} (0)}} = \hat u_{\rm ini}$ and ${{y_{\rm ini} (0) }} = \hat y_{\rm ini}$.
	For instance,
	\[
	Y_{\rm P} (\xi) =  \hat Y_{\rm P} + \sum_{j= 1}^{n_\xi} Y_{\rm P}^{(j)} \xi_j  \quad \text{and} \quad y_{\rm ini} (\xi) =  \hat y_{\rm ini} + \sum_{j= 1}^{n_\xi} y^{(j)}_{\rm ini} \xi_j,
	\]
	and $Y_{\rm P}^{(j)} \in \mathbb{R}^{pT_{\rm ini} \times H_c}$, $y^{(j)}_{\rm ini} \in \mathbb{R}^{pT_{\rm ini}}$ for every $j \in [n_\xi]$.

	
In problem~\eqref{eq:RobustDeePC}, one seeks a vector~$(u,y,g,\sigma_u, \sigma_y)$ that minimizes the objective function value and is feasible for all possible uncertainties $\xi$ residing in $\mathcal D$.

Problem~\eqref{eq:RobustDeePC} is overly conservative, or even infeasible because there may not exist a~$(u,y,g, \sigma_u, \sigma_y)$ that satisfies the equality constraints for all possible realization of $\xi$ in $\mathcal D$.
To reduce the conservativeness of~\eqref{eq:RobustDeePC}, we propose the following two-stage robust DeePC decision problem
	\begin{equation}
		\begin{array}{l}
			\displaystyle \mathop {{\rm{min}}}\limits_{g} \max_{\xi \in \mathcal D} \min_{(u, y) \in \mathcal C \atop \sigma_u, \sigma_y }  {\left\| u \right\|_R^2} + {\left\| {y - r} \right\|_Q^2} + h(\sigma_u, \sigma_y)\\
			\;{\rm s.t.}\;\; {\begin{bmatrix}
					{{U_{\rm P} (\xi)}}\\
					{{Y_{\rm P} (\xi)}}\\
					{{U_{\rm F} (\xi)}}\\
					{{Y_{\rm F} (\xi)}}
			\end{bmatrix}} g = {\begin{bmatrix}
					{{u_{\rm ini} (\xi)}}\\
					{{y_{\rm ini} (\xi) }}\\
					u\\
					y
			\end{bmatrix}} + {\begin{bmatrix}
					\sigma_u \\
					\sigma_y \\
					0\\
					0
			\end{bmatrix}} .
			\label{eq:RobustDeePC2}
		\end{array}
	\end{equation}
	Here, we assume $g \in \mathbb R^{H_c}$ to be a first stage variable that is decided \textit{before} the realization of the disturbances $\xi$, and~$(u,y,\sigma_u, \sigma_y)$ to be the second stage variable which is determined \textit{after} the value of $\xi$ is revealed. Therefore, the second stage variable $(u,y,\sigma_u, \sigma_y)$ is a general function of~$\xi$ that minimizes the objective function after $\xi$ is realized. Problem~\eqref{eq:RobustDeePC2} should be read as: the designer chooses $g$, then nature strikes and chooses an adversarial $\xi$. In the end, the control sequence $(u,y)$ and estimation errors $(\sigma_u,\sigma_y)$ realize.
	We will later show that this two-stage problem provides robust control input sequence to the system and guarantees a performance certificate for the realized input/output cost.
	
	The coefficient of the second stage variable $(u,y,\sigma_u, \sigma_y)$ is constant in~\eqref{eq:RobustDeePC2}, which corresponds to the stochastic programming format known as \textit{fixed recourse} \cite{bl11}. In this case, since~$(u,y,\sigma_u, \sigma_y)$ are fully characterized by the uncertain linear equality constraints in~\eqref{eq:RobustDeePC2}, linear decision rules are optimal for~$(u,y,\sigma_u, \sigma_y)$. Without loss of generality, one can eliminate all the second stage variables~$(u,y,\sigma_u, \sigma_y)$ in~\eqref{eq:RobustDeePC2}, which is equivalent to imposing linear decision rules on $(u,y,\sigma_u, \sigma_y)$ \cite[Lemma 2]{zhen2018computing}, and problem~\eqref{eq:RobustDeePC2} can be equivalently reformulated as the following min-max robust optimization problem
	\begin{equation}
		\begin{array}{cl}
			\displaystyle \mathop {{\rm{min}}}\limits_{g \in \mathcal G} \max_{ \xi \in \mathcal D} & \left\| U_{\rm F} (\xi) g \right\|_R^2 + \left\| { Y_{\rm F} (\xi)g  - r} \right\|_Q^2\\
			& + h(U_{\rm P} (\xi) g  - u_{\rm ini} (\xi), Y_{\rm P} (\xi)g - y_{\rm ini} (\xi)),
			\label{eq:minmaxDeePC-penalty}
		\end{array}
	\end{equation}
	where $\mathcal G = \{ g  \ | \ \forall \xi \in \mathcal D:  \,  (U_{\rm F} (\xi) g , Y_{\rm F} (\xi)g) \in \mathcal C \}$.
		\begin{table}
		\centering
		\caption{Tractable robust counterparts for $\forall \xi \in \mathcal D: (a + B^\top \xi)^\top x  \le b$. LO: linear optimization problems. CQO: convex quadratic optimization problems. Note that $\nu$ is an additional optimization variable in the resulting tractable robust counterparts. This table is adapted from~\cite{bdv15}.}
		\label{tab:g.examples}
		\begin{tabular}{c| cc}
			\hline
			\makecell{Type\\(Tractability)}  & $\mathcal D$ & Robust Counterpart \rule{0pt}{4ex}\rule[-2.6ex]{0pt}{0pt} \\
			\hline
			\makecell{Box\\ (LO)} & $\| \xi \|_\infty \le \rho$ & $a^\top x + \rho \|B x\|_1 \le b $ \rule{0pt}{4ex}\rule[-2.6ex]{0pt}{0pt} \\ \hline
			\makecell{Ellipsoidal\\ (CQO)} & $\| \xi\| \le \rho $ & $a^\top x + \rho \|B x\| \le b $ \rule{0pt}{4ex}\rule[-2.6ex]{0pt}{0pt} \\  \hline
			\makecell{Budget\\ (LO)} 
			& $D\xi \le d $ & $\begin{cases}
				a^\top x + d^\top \nu \le b \\
				D^\top \nu = Bx \\
				\nu \ge 0
			\end{cases} $ \\ \hline
			\makecell{Polyhedral\\ (LO)} 
			& $\begin{cases}
				\| \xi \|_\infty \le \rho  \\
			 \| \xi \|_1 \le \tau
			\end{cases}$ & $a^\top x + \rho \| \nu \|_1 +  \tau \|B x - \nu \|_\infty \le b $ \\ \hline
		\end{tabular}
	\end{table}
	Note that since $\mathcal C$ is a polytope, and the uncertainty set $\mathcal D$ is a compact and convex 
	set with nonempty relative interior, then each semi-infinite constraint in~$\mathcal G$ admits a tractable robust counterpart, that is, it can be reformulated into a finite set of convex constraints by using standard techniques from robust optimization~\cite{ben09}; see Table~\ref{tab:g.examples} for examples. Note that if some of the linear equality constraints in~\eqref{eq:RobustDeePC} are certain, that is, not affected by uncertainties, then one can include such equalities in the set $\mathcal G$, in a similar way as in~\cite{huang2020quad}, instead of including them in the objective function of~\eqref{eq:minmaxDeePC-penalty}.

\begin{figure}
\begin{center}
\includegraphics[width=8.8cm]{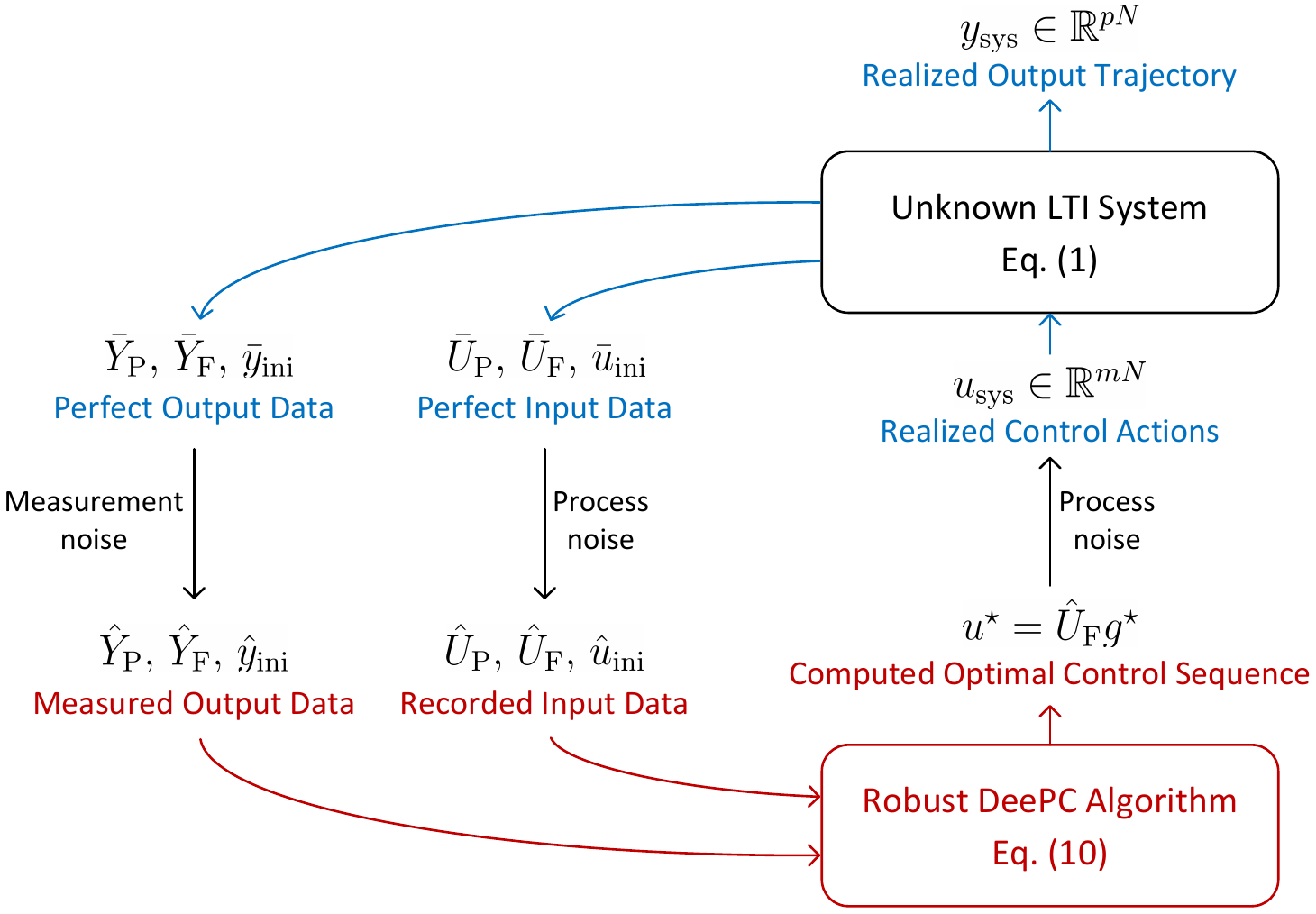}
\vspace{-5mm}
\caption{Interaction between the system and the robust DeePC algorithm. The data matrices $\hat U_{\rm P}$, $\hat Y_{\rm P}$, $\hat U_{\rm F}$, and $\hat Y_{\rm F}$ are obtained offline.}
\vspace{-4mm}
\label{Fig_data_loop}
\end{center}
\end{figure}

The minimizer $g^\star$ of~\eqref{eq:minmaxDeePC-penalty} can be used to compute the optimal control sequence $u^\star = \hat U_{\rm F} g ^\star$.
We illustrate the relationship between the unknown system and the robust DeePC algorithm in Fig.~\ref{Fig_data_loop}. Note that the realized control actions $u_{\rm sys}$ of the system may be different from $u^\star$ due to input disturbances. The realized output trajectory $y_{\rm sys}$ of the system (in response to $u_{\rm sys}$) is in general different from the predicted trajectory $y^\star = \hat Y_{\rm F}g^\star$ in the algorithm because the perfect data is not accessible to the controller. Note that $u^\star$ can also be applied in a receding horizon fashion as done for DeePC.

In the remainder of this paper, we focus on~\eqref{eq:minmaxDeePC-penalty} with a square quadratic estimation regularizer, that is,
	\begin{equation}\label{eq:Regl}
		h(\sigma_u, \sigma_y) = \lambda_{u} \| \sigma_u \|^2 + \lambda_{y} \| \sigma_y \|^2,
	\end{equation}
	where $\lambda_u$ and $\lambda_y$ are positive scalars.
	
	
\subsection{Performance Guarantees Induced by Robust DeePC}
	
	As a first result, we link the realized cost of the system, defined by $c_{\rm realized} = \left\| u_{\rm sys} \right\|_R^2 + \left\| y_{\rm sys} -r \right\|_Q^2$, to the optimization cost of \eqref{eq:minmaxDeePC-penalty}, and show how the realized cost can be certified by applying the robust DeePC. Here ``realized cost'' refers to the cost accrued by the real system if the optimizer of~\eqref{eq:minmaxDeePC-penalty} is applied in open loop for the whole horizon $N$ (see Fig.~\ref{Fig_data_loop}).
	We assume that the uncertainty set $\mathcal D$ contains the realization of the uncertain input/output data deviations.
	 \begin{assumption} \label{assum:D}
    There exists a $\bar{\xi} \in \mathcal D$ such that ${{U_{\rm P} (\bar{\xi})}} = \bar U_{\rm P}$, ${{Y_{\rm P} (\bar{\xi})}} = \bar Y_{\rm P}$, ${{U_{\rm F} (\bar{\xi})}} =\bar U_{\rm F}$, ${{Y_{\rm F} (\bar{\xi})}} = \bar Y_{\rm F}$, ${{u_{\rm ini} (\bar{\xi})}} = \bar u_{\rm ini}$ and ${{y_{\rm ini} (\bar{\xi}) }} = \bar y_{\rm ini}$.
    \end{assumption}

    Assumption~\ref{assum:D} ensures that the perfect input/output data is captured by the considered uncertainty set $\mathcal D$. However, even with Assumption~\ref{assum:D} in place, it is still not obvious that the realized cost can be guaranteed by solving \eqref{eq:minmaxDeePC-penalty} and then applying $u^\star = \hat U_{\rm F}g^\star$ to the system. This is because the violations of the initial conditions are penalized in \eqref{eq:minmaxDeePC-penalty}, and thus we generally have $\bar U_{\rm P}g \ne \bar u_{\rm ini}$ and $\bar Y_{\rm P}g \ne \bar y_{\rm ini}$, hence, the decision variable $g$ cannot be used to accurately predict the realized future trajectory. In the following theorem, we relate the optimal cost of the proposed robust DeePC algorithm~\eqref{eq:minmaxDeePC-penalty} to the realized cost on the unknown system.



		
\begin{theorem}\label{thm:real_cost}
If Assumption~\ref{assum:D} holds and $\|u_{\rm sys}  - \bar U_{\rm F}g^\star \| \le \eta_p$ where $g^\star$ is the minimizer of \eqref{eq:minmaxDeePC-penalty}, then there exists sufficiently large $(\lambda_u, \lambda_y) \in \mathbb R^2_{>0}$ such that
$$
2\sqrt{c_{\rm opt}} +  \eta_p \left( \sqrt{2} \left\| I_{mN}  \right\|_R  +  \left\|  {\mathcal{T}_N} \right\|_Q \right) \ge  \sqrt{c_{\rm realized}} \,,
$$
where $c_{\rm opt}$ denotes the minimum of \eqref{eq:minmaxDeePC-penalty}, and $\mathcal{T}_N$ is defined in \eqref{eq:T_N}.
\end{theorem}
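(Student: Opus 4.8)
The plan is to evaluate the inner worst case of \eqref{eq:minmaxDeePC-penalty} at the single realization $\bar\xi$ supplied by Assumption~\ref{assum:D}, and then to propagate the resulting estimate to the real system using the Fundamental Lemma. First I would note that, since $g^\star$ is the outer minimizer and $\bar\xi\in\mathcal D$, the worst-case value dominates the value at $\bar\xi$, giving
\begin{equation*}
c_{\rm opt} \ge \|\bar U_{\rm F} g^\star\|_R^2 + \|\bar Y_{\rm F} g^\star - r\|_Q^2 + \lambda_u\|\sigma_u^\star\|^2 + \lambda_y\|\sigma_y^\star\|^2,
\end{equation*}
where $\sigma_u^\star := \bar U_{\rm P} g^\star - \bar u_{\rm ini}$ and $\sigma_y^\star := \bar Y_{\rm P} g^\star - \bar y_{\rm ini}$ are the residuals of the past-data equations evaluated on the \emph{perfect} data. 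Writing $c_{\rm pred} := \|\bar U_{\rm F} g^\star\|_R^2 + \|\bar Y_{\rm F} g^\star - r\|_Q^2$, this yields the two consequences $\sqrt{c_{\rm pred}}\le\sqrt{c_{\rm opt}}$ and $\min(\lambda_u,\lambda_y)\,\|(\sigma_u^\star,\sigma_y^\star)\|^2\le c_{\rm opt}$, which I will use separately.

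Next I would connect the perfect-data prediction $(\bar U_{\rm F} g^\star,\bar Y_{\rm F} g^\star)$ to the realized pair $(u_{\rm sys},y_{\rm sys})$. By the Fundamental Lemma the vector $\bar{\mathscr H}g^\star$, with $\bar{\mathscr H}={\rm col}(\bar U_{\rm P},\bar Y_{\rm P},\bar U_{\rm F},\bar Y_{\rm F})$, is an exact trajectory of \eqref{eq:ABCD} with initial data $(\bar u_{\rm ini}+\sigma_u^\star,\bar y_{\rm ini}+\sigma_y^\star)$, while $(\bar u_{\rm ini},\bar y_{\rm ini},u_{\rm sys},y_{\rm sys})$ is the genuine trajectory of \eqref{eq:ABCD}. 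Subtracting the two and invoking linearity shows that $(-\sigma_u^\star,-\sigma_y^\star,\Delta u,\,y_{\rm sys}-\bar Y_{\rm F} g^\star)$ is itself a trajectory, where $\Delta u:=u_{\rm sys}-\bar U_{\rm F} g^\star$ satisfies $\|\Delta u\|\le\eta_p$. Since $T_{\rm ini}\ge\ell$, the initial segment $(-\sigma_u^\star,-\sigma_y^\star)$ fixes a unique initial state $\hat x_0$ depending linearly on $(\sigma_u^\star,\sigma_y^\star)$, so the future-output mismatch splits into free and forced parts,
\begin{equation*}
y_{\rm sys}-\bar Y_{\rm F} g^\star = \mathscr{O}_N(A,C)\,\hat x_0 + \mathcal{T}_N\,\Delta u .
\end{equation*}

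Finally I would assemble the estimate in the cost ``norm'' $\sqrt{\|\cdot\|_R^2+\|\cdot-r\|_Q^2}$. A triangle inequality in this product metric gives $\sqrt{c_{\rm realized}}\le\sqrt{c_{\rm pred}}+\sqrt{\|\Delta u\|_R^2+\|y_{\rm sys}-\bar Y_{\rm F} g^\star\|_Q^2}$; substituting the free/forced split and using $\|\Delta u\|\le\eta_p$ together with the operator-norm estimates $\|\Delta u\|_R\le\|I_{mN}\|_R\,\eta_p$ and $\|\mathcal{T}_N\Delta u\|_Q\le\|\mathcal{T}_N\|_Q\,\eta_p$ controls every contribution except the free term $\|\mathscr{O}_N(A,C)\hat x_0\|_Q$. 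For the latter I would bound $\|\hat x_0\|\le C_\Phi\|(\sigma_u^\star,\sigma_y^\star)\|$, with $C_\Phi$ the norm of the fixed state-reconstruction map, and combine this with $\|(\sigma_u^\star,\sigma_y^\star)\|\le\sqrt{c_{\rm opt}/\min(\lambda_u,\lambda_y)}$ from the first step; choosing $(\lambda_u,\lambda_y)$ large enough that $\|\mathscr{O}_N(A,C)\|_Q\,C_\Phi/\sqrt{\min(\lambda_u,\lambda_y)}\le 1$ makes the free term at most $\sqrt{c_{\rm opt}}$, which supplies the second copy of $\sqrt{c_{\rm opt}}$ and explains the ``sufficiently large $\lambda$'' hypothesis. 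Collecting terms, and absorbing the input deviation that enters through both the direct cost channel and the forced channel into the factor $\sqrt2$ on $\|I_{mN}\|_R$, yields the claimed inequality.

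I expect the main obstacle to be the second step: one must justify, through the Fundamental Lemma and linearity of \eqref{eq:ABCD}, that the difference of the predicted and realized trajectories is again an admissible trajectory whose future output decomposes cleanly into an initial-condition (free) response plus an input-driven (forced) response $\mathcal{T}_N\Delta u$, and that the free response is governed \emph{solely} by the penalized residuals $(\sigma_u^\star,\sigma_y^\star)$. This is precisely what lets the estimation regularizer $\lambda_u\|\sigma_u\|^2+\lambda_y\|\sigma_y\|^2$ buy down the prediction error: a larger penalty forces a smaller admissible initial-condition mismatch, and hence a smaller free response, at the price of only a bounded increase in $c_{\rm opt}$.
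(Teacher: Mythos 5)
Your proof is correct, and it reaches the theorem by a genuinely different route than the paper's own proof. The common ground is the first step: both you and the paper evaluate the inner maximum at $\bar\xi$ to obtain $c_{\rm opt} \ge \|\bar U_{\rm F}g^\star\|_R^2 + \|\bar Y_{\rm F}g^\star - r\|_Q^2 + \lambda_u\|\sigma_u^\star\|^2 + \lambda_y\|\sigma_y^\star\|^2$, and both rely on the error decomposition $y_{\rm sys}-\bar Y_{\rm F}g^\star = (\text{linear map of the past residuals}) + \mathcal{T}_N(u_{\rm sys}-\bar U_{\rm F}g^\star)$. You derive that decomposition by behavioral linearity (the difference of two trajectories is a trajectory) plus the state-space free/forced-response split; the paper stays in the data domain, introducing $\bar g$ for the realized trajectory, setting $\Delta_g=\bar g-g^\star$, and extracting the ARX matrix $[K\ \ \mathcal{T}_N]$ via the pseudo-inverse of ${\rm col}(\bar U_{\rm P},\bar Y_{\rm P},\bar U_{\rm F})$ together with $\bar Y_{\rm F}[{\rm col}(\bar U_{\rm P},\bar Y_{\rm P},\bar U_{\rm F})]^\bot=0$ --- your $\mathscr{O}_N(A,C)\hat x_0$ is exactly the paper's $K\epsilon_{\rm ini}$, and both arguments lean on the standing assumption $T_{\rm ini}\ge\ell$. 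The real divergence is the assembly. The paper keeps one chain of lower bounds on $c_{\rm opt}$: it applies $a^2+b^2\ge\tfrac12(a+b)^2$ twice and the reverse triangle inequality, using the matrix condition $\Lambda\succeq K^\top QK$ to cancel $\|K\epsilon_{\rm ini}\|_Q$ against the penalty $\|\epsilon_{\rm ini}\|_\Lambda$ inside a square; the factor $2$ on $\sqrt{c_{\rm opt}}$ comes from the two halvings, and the asymmetric halving is also where the $\sqrt2$ on $\|I_{mN}\|_R$ originates. You instead upper-bound $\sqrt{c_{\rm realized}}$ directly by the triangle inequality in the product metric and spend $c_{\rm opt}$ twice additively: once as $\sqrt{c_{\rm pred}}\le\sqrt{c_{\rm opt}}$, and once to dominate the free response through $\|(\sigma_u^\star,\sigma_y^\star)\|\le\sqrt{c_{\rm opt}/\min(\lambda_u,\lambda_y)}$; your largeness condition $\min(\lambda_u,\lambda_y)\ge(\|\mathscr{O}_N\|_Q C_\Phi)^2$ plays the same role as the paper's $\Lambda\succeq K^\top QK$, depends only on fixed system quantities, and so involves no circularity with $c_{\rm opt}$. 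What your route buys: each copy of $\sqrt{c_{\rm opt}}$ has a transparent origin, and you in fact prove a marginally sharper bound, with coefficient $\|I_{mN}\|_R + \|\mathcal{T}_N\|_Q$ on $\eta_p$ rather than $\sqrt2\|I_{mN}\|_R + \|\mathcal{T}_N\|_Q$, which implies the stated inequality. What the paper's route buys: it never needs to explicitly invert the observability map or introduce the constant $C_\Phi$, since the cancellation happens directly at the level of the data-defined matrix $K$.
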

		\begin{proof} 
		If Assumption~\ref{assum:D} holds, it follows that
			\begin{equation}\label{eq:real_perf1}
			\begin{split}
				c_{\rm opt} \ge & \left\|\bar U_{\rm F}g^\star \right\|_R^2 + \left\| \bar Y_{\rm F}g^\star-r \right\|_Q^2 + \lambda_u \left\| \bar U_{\rm P}g^\star - \bar u_{\rm ini} \right\|^2 \\
				&+ \lambda_y \left\| \bar Y_{\rm P}g^\star - \bar y_{\rm ini} \right\|^2  \,.
			\end{split}
			\end{equation}
		{Since the realized $(u_{\rm sys},y_{\rm sys})$ is an input/output trajectory departing from $(\bar u_{\rm ini}, \bar y_{\rm ini})$, according} to the Fundamental Lemma \cite{willems2005note}, there exists a {$\bar g$} that satisfies
			\begin{equation}\label{eq:true_system}
				\left[ {\begin{array}{*{20}{c}}
						{\bar U_{\rm P}}\\
						{\bar Y_{\rm P}}\\
						{\bar U_{\rm F}}\\
						{\bar Y_{\rm F}}
				\end{array}} \right]\bar g = \left[ {\begin{array}{*{20}{c}}
						{\bar u_{\rm ini}}\\
						{\bar y_{\rm ini}}\\
						u_{\rm sys}\\
						y_{\rm sys}
				\end{array}} \right].  
			\end{equation}
			By defining $\Delta_g = \bar g - g^\star$, we have
			\begin{equation}\label{eq:real_system1}
				\left[ {\begin{array}{*{20}{c}}
						{\bar U_{\rm P}}\\
						{\bar Y_{\rm P}}\\
						{\bar U_{\rm F}}\\
						{\bar Y_{\rm F}}
				\end{array}} \right] \Delta_g   = \left[ {\begin{array}{*{20}{c}}
						{\bar u_{\rm ini} - \bar U_{\rm P}g^\star}\\
						{\bar y_{\rm ini} - \bar Y_{\rm P}g^\star}\\
						u_{\rm sys} - \bar U_{\rm F}g^\star \\
						y_{\rm sys} - \bar Y_{\rm F}g^\star
				\end{array}} \right] =: \left[ {\begin{array}{*{20}{c}}
						\epsilon_{u_{\rm ini}}\\
						\epsilon_{y_{\rm ini}}\\
						\epsilon_{u_{\rm sys}} \\
						y_{\rm sys} - \bar Y_{\rm F}g^\star
				\end{array}} \right] \,,
			\end{equation}
			which implies that
			\begin{equation*}
			\begin{split}
				\Delta_g &= \begin{bmatrix}
					\bar U_{\rm P} \\
					\bar Y_{\rm P} \\
					\bar U_{\rm F}
				\end{bmatrix}^+\begin{bmatrix}
					\epsilon_{u_{\rm ini}}\\
					\epsilon_{y_{\rm ini}}\\
					\epsilon_{u_{\rm sys}}
				\end{bmatrix} + \begin{bmatrix}
					\bar U_{\rm P} \\
					\bar Y_{\rm P} \\
					\bar U_{\rm F}
				\end{bmatrix}^\bot x\,, \\
				\text{and} \quad \bar Y_{\rm F} \Delta_g &= \bar Y_{\rm F}\begin{bmatrix}
					\bar U_{\rm P} \\
					\bar Y_{\rm P} \\
					\bar U_{\rm F}
				\end{bmatrix}^+\begin{bmatrix}
					\epsilon_{u_{ \rm ini}}\\
					\epsilon_{y_{ \rm ini}}\\
					\epsilon_{u_{ \rm sys}}
				\end{bmatrix} =: K\epsilon_{\rm ini} + {\mathcal{T}_N} \epsilon_{u_{\rm sys}}  \,,
			\end{split}
			\end{equation*}
			where $x \in \mathbb{R}^{H_c}$ and $\epsilon_{\rm ini} = {\rm col}(\epsilon_{u_{\rm ini}}, \epsilon_{y_{\rm ini}})$ , while the second equality follows from $\bar Y_{\rm F}[{\rm col}(
				\bar U_{\rm P},
				\bar Y_{\rm P},
				\bar U_{\rm F})]
			^\bot = 0$ because the output  trajectory  is uniquely determined  given an initial trajectory and a future input trajectory \cite{markovsky2008data}. Note that $\begin{bmatrix} K & \mathcal{T}_N	\end{bmatrix}$ is the $N$-step auto-regressive matrix with extra input (ARX) of the system \eqref{eq:ABCD}. Since $\mathcal{T}_N$ maps the $N$-step future inputs to the $N$-step future outputs, it can be obtained as \eqref{eq:T_N}.

			By definding $\Lambda := {\rm diag}(\lambda_u I_{mT_{\rm ini}},\lambda_y I_{pT_{\rm ini}})$, it follows from~\eqref{eq:real_perf1} and~\eqref{eq:real_system1} that
			\begin{equation}\label{eq:real_perf2}
				\begin{split}
					c_{\rm opt}  \ge & \left\| u_{\rm sys} - \epsilon_{u_{\rm sys}} \right\|_R^2 \\
					& + \left\| y_{\rm sys} - K\epsilon_{  \rm ini} - { \mathcal{T}_N} \epsilon_{u_{  \rm sys}}  -r \right\|_Q^2 + \left\| \epsilon_{  \rm ini} \right\|^2_\Lambda \\
					\ge & \left\| u_{\rm sys} - \epsilon_{u_{ \rm sys}} \right\|_R^2 \\
					&+ \frac{1}{2}\left(\left\| y_{\rm sys} - r - K\epsilon_{  \rm ini} - {  \mathcal{T}_N} \epsilon_{u_{  \rm sys}}  \right\|_Q + \left\| \epsilon_{  \rm ini} \right\|_\Lambda  \right)^2\\
                   \ge & \left\| u_{\rm sys} - \epsilon_{u_{  \rm sys}} \right\|_R^2 \\
                   &+ \frac{1}{2}\left(  \left\| y_{\rm sys} - r  - {  \mathcal{T}_N} \epsilon_{u_{  \rm sys}} \right\|_Q  -  \left\| K\epsilon_{  \rm ini} \right\|_Q + \left\| \epsilon_{  \rm ini} \right\|_\Lambda  \right)^2 \\
                    \ge & \left\| u_{\rm sys} - \epsilon_{u_{  \rm sys}} \right\|_R^2 + \frac{1}{2} \left\| y_{\rm sys} - r - {  \mathcal{T}_N} \epsilon_{u_{  \rm sys}}  \right\|_Q^2 \,.
				\end{split}
			\end{equation}
			Here, the second inequality follows from the simple fact that~$a^2 + b^2 \ge \frac{1}{2} (a + b)^2$, the third inequality holds thanks to the reverse triangle inequality. The last inequality is satisfied if we take $\lambda_u$ and $\lambda_y$ large enough (by assumption) to ensure
            $$ \Lambda \succeq K^\top Q K  \quad \implies \quad \left\| \epsilon_{\rm ini} \right\|_\Lambda  \ge  \left\| K\epsilon_{\rm ini}    \right\|_Q .$$
       From~\eqref{eq:real_perf2}, we have
       \begin{equation*}
       c_{\rm opt} \ge \frac{1}{4} \left( \sqrt{2} \left\| u_{\rm sys} - \epsilon_{u_{ \rm sys}} \right\|_R +   \left\| y_{\rm sys} - r - { \mathcal{T}_N} \epsilon_{u_{ \rm sys}}  \right\|_Q  \right)^2 \,,
       \end{equation*}
       considering that ~$a^2 + b^2 \ge \frac{1}{2} (a + b)^2$. Further, we have
            \begin{equation*}
				\begin{split}
	   & 2 \sqrt{c_{\rm opt}}  \ge  \sqrt{2} \left\| u_{\rm sys} - \epsilon_{u_{ \rm sys}} \right\|_R +   \left\| y_{\rm sys} - r - { \mathcal{T}_N} \epsilon_{u_{ \rm sys}}  \right\|_Q  \\
	                & \ge  \sqrt{2} \left\| u_{\rm sys} \right\|_R - \sqrt{2} \left\| \epsilon_{u_{ \rm sys}} \right\|_R  +   \left\| y_{\rm sys} - r  \right\|_Q  -  \left\| { \mathcal{T}_N} \epsilon_{u_{ \rm sys}}  \right\|_Q \\
	                & \ge  \sqrt{2} \left\| u_{\rm sys} \right\|_R   +   \left\| y_{\rm sys} - r  \right\|_Q  - \eta_p \left( \sqrt{2} \left\| I_{mN}  \right\|_R  +  \left\| { \mathcal{T}_N} \right\|_Q \right)
				\end{split}
			\end{equation*}
		where the second inequality follows from the reverse triangle inequality, and the third due to the Cauchy–Schwarz inequality.
		Notice that $\sqrt{2} \left\| u_{\rm sys} \right\|_R   +   \left\| y_{\rm sys} - r  \right\|_Q \ge \sqrt{c_{\rm realized}}$ due to the simple fact that $a+b \ge \sqrt{a^2+b^2}$ when $a,b \ge 0$. This completes the proof.
		\end{proof}

	{\renewcommand\arraystretch{1.6}
	\begin{table*}[t]
		\centering
		\caption{Comparisons of different uncertainty sets for robust DeePC.}
		\label{tab:uncertainty_sets}
		\begin{tabular}{c|ccccc}
			\hline
			& \makecell{Uncertainty\\type}  & Tractability& \makecell{Setting different bounds\\for different columns} & \makecell{Removing the effects of\\ scaling matrices} & \makecell{Hankel\\ structure}  \rule{0pt}{4ex}\rule[-2.6ex]{0pt}{0pt} \\
			\hline
			Section~\ref{subsec:uns} & Unstructured  & Conic Quadratic  &  \XSolidBrush &  \XSolidBrush &  \XSolidBrush  \\
			\hline
			Section~\ref{subsec:col} & Column-wise  & Conic Quadratic & \Checkmark &  \XSolidBrush & \XSolidBrush  \\
			  \hline
			Section~\ref{subsec:int} & Interval  & Convex Quadratic & \Checkmark  & \Checkmark & \XSolidBrush  \\
			\hline
			Section~\ref{subsec:struct} & Structured  & Semi-definite  Program & \XSolidBrush  & \Checkmark & \Checkmark \\
			\hline
		\end{tabular}	
	\end{table*}
	}

{Theorem~\ref{thm:real_cost} shows that by solving the min-max problem in~\eqref{eq:minmaxDeePC-penalty}, one can obtain robust control input sequences that guarantee the realized input/output cost of the system, even with the initial conditions violated (i.e., we generally have $\bar U_{\rm P}g \ne \bar u_{\rm ini}$ and $\bar Y_{\rm P}g \ne \bar y_{\rm ini}$) in the min-max problem.} The choice of $(\lambda_u,\lambda_y)$ depends only on $Q$ and $K$, where $K$ is a constant matrix that is implicitly determined by the state-space matrices $A$, $B$, $C$, and $D$. Since Theorem~\ref{thm:real_cost} only requires $(\lambda_u,\lambda_y)$ to be sufficiently large, one can use a guess on the upper bound of the largest eigenvalue of $K^\top QK$ and choose $(\lambda_u,\lambda_y)$ accordingly. The value of $\eta_p$ increases with a higher level of input disturbance.
In the absence of input disturbance, that is, $u_{\rm sys} = \bar U_{\rm F}g^\star$, we construct a new upper bound on the realized input/output cost in the following result.

\begin{corollary}\label{thm:real_cost1}
If Assumption~\ref{assum:D} holds and the unknown system \eqref{eq:ABCD} does not have input disturbance,
then there exist sufficiently large $(\lambda_u, \lambda_y)  \in \mathbb R^2_{>0} $ such that
\begin{equation*}
 2c_{\rm opt} \ge c_{\rm realized}  \,.
\end{equation*}
\end{corollary}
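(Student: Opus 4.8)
The plan is to specialize the argument already carried out for Theorem~\ref{thm:real_cost}, exploiting the fact that in the absence of input disturbance the realized control equals the predicted one, $u_{\rm sys} = \bar U_{\rm F} g^\star$. This is precisely the statement that the residual $\epsilon_{u_{\rm sys}} := u_{\rm sys} - \bar U_{\rm F} g^\star$ vanishes. Since $\epsilon_{u_{\rm sys}} = 0$, the auxiliary parameter $\eta_p$ plays no role here, and every term involving $\mathcal{T}_N \epsilon_{u_{\rm sys}}$ drops out of the chain of inequalities~\eqref{eq:real_perf2}. I expect no genuine obstacle: the corollary is essentially the $\eta_p = 0$ instance of Theorem~\ref{thm:real_cost}, but with the square-root manipulations replaced by a direct doubling argument that yields the cleaner constant $2$.

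Concretely, I would first invoke Assumption~\ref{assum:D} exactly as in the theorem to obtain the lower bound~\eqref{eq:real_perf1} on $c_{\rm opt}$, and then use the Fundamental Lemma to produce $\bar g$ satisfying~\eqref{eq:true_system} and the decomposition $\bar Y_{\rm F}\Delta_g = K\epsilon_{\rm ini} + \mathcal{T}_N \epsilon_{u_{\rm sys}}$, with $\Delta_g = \bar g - g^\star$. Setting $\epsilon_{u_{\rm sys}} = 0$ collapses this to $\bar Y_{\rm F}\Delta_g = K\epsilon_{\rm ini}$, so the first line of~\eqref{eq:real_perf2} simplifies to
\begin{equation*}
c_{\rm opt} \ge \left\| u_{\rm sys} \right\|_R^2 + \left\| y_{\rm sys} - K\epsilon_{\rm ini} - r \right\|_Q^2 + \left\| \epsilon_{\rm ini} \right\|_\Lambda^2 \,.
\end{equation*}

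Next I would reuse the same two estimates as in the theorem: apply $a^2 + b^2 \ge \tfrac12(a+b)^2$ to the last two terms, then the reverse triangle inequality, and finally choose $(\lambda_u,\lambda_y)$ large enough that $\Lambda \succeq K^\top Q K$, so that $\|\epsilon_{\rm ini}\|_\Lambda \ge \|K\epsilon_{\rm ini}\|_Q$ and the cross terms cancel nonnegatively. This delivers the tight intermediate bound
\begin{equation*}
c_{\rm opt} \ge \left\| u_{\rm sys} \right\|_R^2 + \frac{1}{2}\left\| y_{\rm sys} - r \right\|_Q^2 \,.
\end{equation*}

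The final step is purely arithmetic and is where the factor $2$ comes out cleanly rather than through the Cauchy--Schwarz bookkeeping needed in Theorem~\ref{thm:real_cost}. Multiplying the previous display by $2$ gives $2c_{\rm opt} \ge 2\|u_{\rm sys}\|_R^2 + \|y_{\rm sys}-r\|_Q^2$, and since $2\|u_{\rm sys}\|_R^2 \ge \|u_{\rm sys}\|_R^2$ (the norm being nonnegative) we conclude $2c_{\rm opt} \ge \|u_{\rm sys}\|_R^2 + \|y_{\rm sys}-r\|_Q^2 = c_{\rm realized}$, as claimed. The only point requiring care is the same one as in the theorem, namely that the threshold on $(\lambda_u,\lambda_y)$ is dictated solely by the spectral condition $\Lambda \succeq K^\top Q K$; everything else is a direct consequence of having eliminated the input-disturbance residual.
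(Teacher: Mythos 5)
Your proposal is correct and takes essentially the same route as the paper: the paper's proof likewise notes that the absence of input disturbance forces $\epsilon_{u_{\rm sys}} = u_{\rm sys} - \bar U_{\rm F} g^\star = 0$ and then reads the bound $c_{\rm opt} \ge \| u_{\rm sys} \|_R^2 + \tfrac{1}{2}\| y_{\rm sys} - r \|_Q^2$ directly off the final line of \eqref{eq:real_perf2}, from which $2c_{\rm opt} \ge c_{\rm realized}$ follows by exactly the doubling arithmetic you describe. Your write-up simply unrolls the specialization of \eqref{eq:real_perf1}--\eqref{eq:real_perf2} (same decomposition via $\bar g$, same use of $a^2+b^2 \ge \tfrac{1}{2}(a+b)^2$, reverse triangle inequality, and the condition $\Lambda \succeq K^\top Q K$) that the paper invokes in one line.
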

\begin{proof}
Since $\epsilon_{u_{\rm sys}} = u_{\rm sys} - \bar U_{\rm F}g^\star = 0$ by assumption, the claim directly follows from \eqref{eq:real_perf2}.
\end{proof}
\noindent Note that with $\eta_p = 0$, the inequality in Corollary~\ref{thm:real_cost1} is at least as tight as the one from Theorem~\ref{thm:real_cost}.

In the absence of uncertainties in input/output data, one can consider $\mathcal D = \{0\}$, and thus \eqref{eq:minmaxDeePC-penalty} reduces to
\begin{equation*}
\begin{array}{cl}
\mathop{\min}\limits_{g}&\|\bar U_{\rm F}g\|_R^2 + \|\bar Y_{\rm F}g-r\|_Q^2\\
&+ \lambda_u \| \bar U_{\rm P}g - \bar u_{\rm ini}\|^2 + \lambda_y \| \bar Y_{\rm P}g - \bar y_{\rm ini}\|^2 \\
{\rm s.t.}& (\bar U_{\rm F}g,\bar Y_{\rm F}g) \in \mathcal{C} \,,
\end{array}
\end{equation*}
which recovers \eqref{eq:DeePC} by further considering $\bar U_{\rm P}g = \bar u_{\rm ini}$ and $\bar Y_{\rm P}g = \bar y_{\rm ini}$ as hard constraints. In this case and in the absence of disturbances, we have $c_{\rm opt} = c_{\rm realized}$.

The minimum of \eqref{eq:minmaxDeePC-penalty} will in general be increasing when a larger uncertainty set is needed to cover possible realized uncertainties, which, according to Theorem~\ref{thm:real_cost}, indicates that the (worst-case) realized input/output cost may also increase, as $c_{\rm opt}$ is always greater than $c_{\rm realized}$. In other words, the (worst-case) performance of robust DeePC degrades with increasing of noise level, and the best performance is achieved when perfect data is accessible. Moreover, one should also make the uncertainty set tight to reduce the conservativeness.

\section{Tractable Reformulations for Robust DeePC}
\label{sec:tra}
	In what follows, we derive tractable reformulations for \eqref{eq:minmaxDeePC-penalty} when various uncertainty sets are considered, and we will compare the conservativeness and the resulting robust counterparts with the different uncertainty sets. 
	Note that~\eqref{eq:minmaxDeePC-penalty}  can be compactly and equivalently represented as
	\begin{equation}
	\displaystyle \mathop {{\rm{min}}}\limits_{g \in \mathcal G} \max_{ \xi \in \mathcal D} \ \ \left\| A(\xi)g  -  b(\xi)  \right\|^2 \,,
	\label{eq:minmaxDeePC-penalty-rep1}
	\end{equation}
	where the elements of $A(\xi)$ and $b(\xi)$ are affine functions of $\xi$, namely,
	\[
	\displaystyle A (\xi) =  A^{(0)}  + \sum_{j= 1}^{n_\xi} A^{(j)} \xi_j,  \ {\rm and} \,\,\,  b (\xi ) = b_0 + \sum_{j= 1}^{n_\xi} b_j \xi_j\,,
	\]
	where
	\begin{equation*}
	\begin{split}
	A^{(j)} &= {\rm col}\left(\lambda_u^{\frac{1}{2}} U_{\rm P}^{(j)}  , \lambda_y^{\frac{1}{2}} Y_{\rm P}^{(j)} , R^{\frac{1}{2}}U_{\rm F}^{(j)} , Q^{\frac{1}{2}}Y_{\rm F}^{(j)}\right) \in \mathbb R^{H_r \times H_c} , \\
	b^{(j)} &= {\rm col} \left(\lambda_u^{\frac{1}{2}} u_{\rm ini}^{(j)} , \lambda_y^{\frac{1}{2}} y_{\rm ini}^{(j)} , 0 ,  0 \right) \in \mathbb R^{H_r} , \ \forall j \in [n_\xi] \,, \\
	A^{(0)} &= {\rm col}\left(\lambda_u^{\frac{1}{2}} \hat U_{\rm P}  , \lambda_y^{\frac{1}{2}} \hat Y_{\rm P} , R^{\frac{1}{2}}\hat U_{\rm F} , Q^{\frac{1}{2}}\hat Y_{\rm F}\right) \in \mathbb R^{H_r \times H_c} ,  \\
	b^{(0)} &= {\rm col} \left(\lambda_u^{\frac{1}{2}} \hat u_{\rm ini} , \lambda_y^{\frac{1}{2}} \hat y_{\rm ini}, 0 , Q^{\frac{1}{2}}r \right) \in \mathbb R^{H_r} ,
	\end{split}
	\end{equation*}
	and $H_r = (m+p)(T_{\rm ini} + N)$.
	
	Below we discuss various uncertainty quantifications and tractable reformulations sorted from coarse to fine, as listed in Table~\ref{tab:uncertainty_sets}. We will discuss the tractability and capabilities of the robust DeePC formulation \eqref{eq:minmaxDeePC-penalty-rep1} when these uncertainty sets are considered.


	\subsection{Unstructured Uncertainties}
	\label{subsec:uns}
	Consider the following variant of~\eqref{eq:minmaxDeePC-penalty-rep1} with unstructured uncertainties in the following form
	\begin{equation} \label{eq:unstructure}
		\displaystyle \mathop {{\rm{min}}}\limits_{g \in \mathcal G} \max_{ ( \Delta_A, \Delta_b) \in \mathcal D_{\rm uns}} \ \ \left\| ( A^{(0)} + \Delta_A) g  - ( b^{(0)} + \Delta_b) \right\|,
	\end{equation}
	where the unstructured uncertainty set is defined through
	\[
	\mathcal D_{\rm uns} = \{ ( \Delta_A, \Delta_b) \mid  \| [\Delta_{A} \ \Delta_b ]\|_F \le \rho_u \},
	\]
	with $\rho_u \in \mathbb R_{>0}$. Here we take the square-root of the quadratic objective function of~\eqref{eq:minmaxDeePC-penalty-rep1}, which will not affect the minimizer of the problem.
	
	Note that \eqref{eq:unstructure} is a special (albeit unstructured) case of~\eqref{eq:minmaxDeePC-penalty-rep1} with $\xi = {\rm col}\left( (\Delta_A)_1,(\Delta_A)_2,\dots,(\Delta_A)_{H_c},\Delta_b \right) \in \mathbb{R}^{H_r(H_c+1)}$ where $(\Delta_A)_i$ is the $i$-th column of $\Delta_A$, and it can be easily verified that the elements of $A(\xi) = A^{(0)}+\Delta_A$ and $b(\xi) = b^{(0)}+\Delta_b$ are affine functions of $\xi$. Moreover, Assumption~\ref{assum:D} can be easily satisfied by choosing a sufficiently large $\rho_u$, which leads to a performance certificate according to Theorem~\ref{thm:real_cost}.
	The following result shows that \eqref{eq:unstructure} can be reformulated into a tractable conic quadratic problem.
	
	\begin{proposition} \label{eq:frob}
		A vector $g^\star$ is a minimizer of~\eqref{eq:unstructure} if and only if $g^\star$ also minimizes the following conic quadratic problem
		\begin{equation}\label{eq:unstructure_1}
			\begin{array}{l}
				\mathop {{\rm{min}}}\limits_{g \in \mathcal{G}} \;\;\|A^{(0)} g - b^{(0)} \| +  \rho_u \sqrt{\|g\|^2 + 1}.
			\end{array}
		\end{equation}
		Moreover, the minima of~\eqref{eq:unstructure} and~\eqref{eq:unstructure_1} coincide.
	\end{proposition}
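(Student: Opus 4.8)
The plan is to evaluate the inner maximization in~\eqref{eq:unstructure} in closed form for each fixed $g \in \mathcal G$ and to show that it coincides exactly with the objective of~\eqref{eq:unstructure_1}. Since both problems minimize over the same feasible set $\mathcal G$, once the two objectives agree pointwise on $\mathcal G$ it follows immediately that their minima coincide and that $g^\star$ solves one if and only if it solves the other. This is the standard robust least-squares argument, and the whole content lies in the inner worst-case computation.

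First I would collapse the two perturbations into a single matrix acting on a single vector. Writing $\tilde g := {\rm col}(g,-1) \in \mathbb R^{H_c+1}$ and $\Delta := [\Delta_A \ \Delta_b]$, the quantity inside the norm becomes
\[
(A^{(0)} + \Delta_A)g - (b^{(0)} + \Delta_b) = \big(A^{(0)}g - b^{(0)}\big) + \Delta \tilde g,
\]
so, with $r_0 := A^{(0)}g - b^{(0)}$ and $\|\tilde g\| = \sqrt{\|g\|^2 + 1}$, the inner problem is $\max_{\|\Delta\|_F \le \rho_u} \|r_0 + \Delta \tilde g\|$.

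Next I would establish the upper bound and then its attainability, which is the crux of the argument. By the triangle inequality, sub-multiplicativity of the induced $2$-norm, and the fact that the induced $2$-norm is dominated by the Frobenius norm,
\[
\|r_0 + \Delta \tilde g\| \le \|r_0\| + \|\Delta\|\,\|\tilde g\| \le \|r_0\| + \rho_u \sqrt{\|g\|^2 + 1}.
\]
The main step is to show this bound is tight by exhibiting an explicit worst-case perturbation. When $r_0 \ne 0$, the rank-one choice $\Delta = \rho_u \, r_0 \tilde g^\top / (\|r_0\|\,\|\tilde g\|)$ has $\|\Delta\|_F = \rho_u$ and yields $\Delta \tilde g = \rho_u \|\tilde g\| \, r_0/\|r_0\|$, so that $r_0 + \Delta \tilde g$ is parallel to $r_0$ and $\|r_0 + \Delta \tilde g\| = \|r_0\| + \rho_u\|\tilde g\|$; when $r_0 = 0$ one instead fixes any unit vector $w$ and takes $\Delta = \rho_u\, w \tilde g^\top/\|\tilde g\|$. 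In either case the upper bound is achieved, so the inner maximum equals $\|A^{(0)}g - b^{(0)}\| + \rho_u \sqrt{\|g\|^2 + 1}$.

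Finally, since this value is precisely the objective of~\eqref{eq:unstructure_1} for every $g \in \mathcal G$, the outer minimizations in~\eqref{eq:unstructure} and~\eqref{eq:unstructure_1} are identical minimizations of the same function over the same feasible set; hence their optimal values coincide and their minimizers agree. The only delicate point is the attainability of the bound via the rank-one construction (and the degenerate case $r_0 = 0$), whereas the bounding direction is routine.
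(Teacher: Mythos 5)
Your proof is correct and follows essentially the same route as the paper's: fix $g \in \mathcal G$, bound the inner maximum via the triangle inequality together with $\|\Delta\| \le \|\Delta\|_F \le \rho_u$, attain the bound with a rank-one perturbation aligned with the residual (handling the zero-residual case with an arbitrary unit vector), and conclude pointwise equality of the two objectives over $\mathcal G$. As a side note, your stacked formulation $\tilde g = {\rm col}(g,-1)$ gets the worst-case perturbation exactly right, whereas the paper's construction $[\hat{\Delta}_A \ \hat{\Delta}_b] = \frac{\rho_u \omega}{\sqrt{\|g\|^2+1}}[g^\top \ 1]$ has a sign slip (it should read $[g^\top \ {-1}]$, since otherwise $\hat{\Delta}_A g - \hat{\Delta}_b = \frac{\rho_u(\|g\|^2-1)}{\sqrt{\|g\|^2+1}}\,\omega$, which is neither of the claimed magnitude nor a nonnegative multiple of the residual when $\|g\|<1$).
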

	\begin{proof}
		The proof is adapted from~\cite[Theorem~3.1]{el1997robust}. Consider a fixed $g$ in~\eqref{eq:unstructure}. It follows from triangle inequality that
		\begin{align*}
		   & \max_{ ( \Delta_A, \Delta_b) \in \mathcal D_{\rm uns}} \ \ \left\| ( A^{(0)} + \Delta_A) g  - ( b^{(0)} + \Delta_b) \right\| \\
	 \le  & \left\| A^{(0)}  g  - b^{(0)} \right\| + \max_{ ( \Delta_A, \Delta_b) \in \mathcal D_{\rm uns}} \left\|  \Delta_A g  - \Delta_b \right\| \\
	    = &  \left\| A^{(0)}  g  - b^{(0)} \right\| + \rho_u \sqrt{ \|g \|^2 + 1}.
		\end{align*}
	Choose $\hat{\Delta} = [\hat{\Delta}_A \ \hat{\Delta}_b] \in \mathcal D_{\rm uns}$ such that
	\[
	[\hat{\Delta}_A \ \hat{\Delta}_b]  = \frac{\rho_u \omega}{\sqrt{ \|g \|^2 + 1}} [ g^\top \ 1 ], \quad \text{where}
	\]
	\[
	\omega = \begin{cases}
	 \frac{A^{(0)}g - b^{(0)}}{{\|A^{(0)}g - b^{(0)}\|}} &  \text{if } A^{(0)}g \ne b^{(0)} \\
	 \text{any unit-norm vector} & \text{otherwise}.
	\end{cases}
	\]
	Since $\hat \Delta$ is a rank-one matrix, we have $\|\hat{\Delta} \|_F = \|\hat{\Delta} \| = \rho_u $, and
	\begin{align*}
	        & \left\| ( A^{(0)} + \hat{\Delta}_A) g  - ( b^{(0)} + \hat{\Delta}_b) \right\| \\
	    = & \left\| A^{(0)}  g  - b^{(0)} \right\| +  \left\| \hat{\Delta}_A g  - \hat{\Delta}_b \right\| \\
	    = & \left\| A^{(0)}  g  - b^{(0)} \right\| + \rho_u \sqrt{ \|g \|^2 + 1} \\
		\le & \max_{ ( \Delta_A, \Delta_b) \in \mathcal D_{\rm uns}} \ \ \left\| ( A^{(0)} + \Delta_A) g  - ( b^{(0)} + \Delta_b) \right\|,
	\end{align*}
	where the first equality follows from triangle inequality and the fact that $(A^{(0)}  g  - b^{(0)})$ is a nonnegative scalar of $(\hat{\Delta}_A g  - \hat{\Delta}_b)$. Since the lower bound coincides with the upper bound on the inner maximization problem~\eqref{eq:unstructure} for any fixed $g$, the claim follows.
	\end{proof}
	
	Similarly, if the vector $b$ in~\eqref{eq:unstructure}  is not affected by uncertainty, that is, $\Delta_b = 0$, then the corresponding tractable reformulation becomes \cite[Theorem~2]{bertsimas2018characterization}
	\begin{equation*}
		\begin{array}{l}
			\mathop {{\rm{min}}}\limits_{g \in \mathcal{G}} \;\;\|A^{(0)} g - b^{(0)} \| +  \rho_u \|g\|.
		\end{array}
	\end{equation*}
	The min-max formulation~\eqref{eq:unstructure} is equivalent to the regularization~\eqref{eq:unstructure_1}, which reminiscent of other DeePC regularizations studied in~\cite{dorfler2021bridging,berberich2019data,coulson2020distributionally,huang2020quad}. We will explore these connections in Section~\ref{sec:regl}.
	
	We remark that the considered uncertainties in \eqref{eq:unstructure} can be overly conservative for several reasons: 1) the constructed worst-case realization of the uncertainties satisfies rank$([\Delta_A\;\Delta_b]) = 1$ \cite{el1997robust}, which cannot be achieved in general when Hankel matrices are used as predictors; 2) Hankel structures cannot be imposed on $\Delta_A$; 3) the uncertainties are indirectly added to the input/output data through the scaling matrices, e.g., $Q^{\frac{1}{2}}$ and $\lambda_y^{\frac{1}{2}}I$; 4) the columns in $[\Delta_A\;\Delta_b]$ are uncorrelated when Page matrices or trajectory matrices are used, but $\mathcal D_{\rm uns}$ considers the Frobenius norm of $[\Delta_A\;\Delta_b]$ to be bounded which may impose correlations among the columns. In the following sections we alleviate these sources of conservatism. First we consider column-wise uncertainties for $[\Delta_A\;\Delta_b]$.

	\subsection{Generalized Column-wise Uncertainties}
	\label{subsec:col}
	Consider the following variant of~\eqref{eq:minmaxDeePC-penalty-rep1} with generalized column-wise uncertainties, that is,
	\begin{equation} \label{eq:column}
		\displaystyle \mathop {{\rm{min}}}\limits_{g \in \mathcal G} \max_{(\Delta_A,\Delta_b) \in \mathcal D_{\rm gcol}} \ \ \left\| ( A^{(0)} + \Delta_A) g  -  (b^{(0)}+\Delta_b) \right\|.
	\end{equation}
	Here the generalized column-wise uncertainty set~\cite[Section 3]{xu2010robust} is defined column-by-column as
	\[
	\begin{split}
	\mathcal D_{\rm gcol}\hspace{-0.7mm} =\hspace{-0.75mm} \{&(\Delta_A,\Delta_b) \mid \exists \rho_A \in \mathbb R^{H_c}\hspace{-1mm} :\hspace{-0.5mm}  \| (\Delta_A)_{i} \| \le (\rho_{A})_i,  \forall i \in [H_c], \\
	&f_j ( \rho_A ) \le 0, \ \forall j \in [J], \Delta_b \le \rho_b \},
	\end{split}
	\]
	where the function $f_j:\mathbb R^{H_c} \rightarrow \mathbb [-\infty, +\infty]$ is a proper, closed and convex function for each $j \in [J]$, and $(\Delta_A)_{i}$ is the $i$-th column of the matrix $\Delta_A$.
	
	\begin{proposition} \label{eq:column-wise}
		If $\mathcal D_{\rm gcol}$ is nonempty and admits a Slater point, then a vector $g^\star$ is a minimizer of~\eqref{eq:column} if and only if there exists a $(\lambda^\star, \{y_j^\star \}_{j=1}^J)$ such that $g^\star$ also minimizes the following convex optimization problem
		\begin{equation}\label{eq:colwise}
		\begin{array}{cll}
		   \displaystyle   \min_{g \in \mathcal{G}, \lambda  \atop  \{y_j\}_{j=1}^J} & \displaystyle \|A^{(0)} g-b^{(0)} \| + \sum_{j \in [J]} \psi_j ( y_j, \lambda_j) + \rho_b\\
		     {\rm s.t.}  &\lambda_j \ge 0 \quad \forall j \in [J] \\
		     &  \displaystyle \sum_{j \in [J]} y_j \le - |g| ,
		\end{array}
		\end{equation}
		where $\psi: \mathbb R^{n_{x}} \times \mathbb R_{\ge 0} \rightarrow  [-\infty, +\infty]$ denotes the perspective function\footnote{We adopt the definition of the perspective function from \cite{Rockafellar1970}, where $\psi(x, t) = t f^*(x/t)$ if $t>0$, and $\psi(x, t) = \delta^*_{{\rm dom} (f)} (x)$ if $t = 0$, and $\delta^*_{\mathcal S}$ denotes the support function over the set $\mathcal S$. The perspective function $\psi$ is proper, convex and closed if and only if $f^*$ is proper, convex and closed.} of the conjugate function $f^*:\mathbb R^{n_{x}} \rightarrow  [-\infty, +\infty]$ of~$f$. Moreover, the minima of~\eqref{eq:column} and~\eqref{eq:colwise} coincide.
	\end{proposition}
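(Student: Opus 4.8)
The plan is to follow the same two-level strategy as in Proposition~\ref{eq:frob}: fix the first-stage variable $g$, evaluate the inner maximization over $\mathcal D_{\rm gcol}$ in closed form, and then reintroduce the outer minimization over $g \in \mathcal G$. For a fixed $g$, the triangle inequality gives
\[
\max_{(\Delta_A,\Delta_b)\in\mathcal D_{\rm gcol}} \left\| (A^{(0)}+\Delta_A)g - (b^{(0)}+\Delta_b)\right\| \le \left\|A^{(0)}g - b^{(0)}\right\| + \max_{(\Delta_A,\Delta_b)\in\mathcal D_{\rm gcol}} \left\|\Delta_A g - \Delta_b\right\|,
\]
and, exactly as in the rank-one construction of Proposition~\ref{eq:frob}, this bound is tight because every column of $\Delta_A$ and the vector $\Delta_b$ can be aligned with the common direction $\omega = (A^{(0)}g-b^{(0)})/\|A^{(0)}g-b^{(0)}\|$ (or any unit vector when the residual vanishes). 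Thus it suffices to compute the worst-case perturbation term and then dualize it.

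First I would decouple the $\Delta_A$ and $\Delta_b$ contributions. The bound on $\Delta_b$ contributes, after alignment, the additive constant $\rho_b$, while the column-wise structure of $\Delta_A$ reduces the remaining maximization to an alignment problem: writing $\Delta_A g = \sum_{i\in[H_c]} (\Delta_A)_i g_i$ and using $\|(\Delta_A)_i\| \le (\rho_A)_i$, the optimal choice points every column along a common direction with sign matching that of $g_i$, so that for a fixed radius vector $\rho_A$ one has $\max_{\Delta_A}\|\Delta_A g\| = \sum_{i\in[H_c]} (\rho_A)_i |g_i| = \rho_A^\top |g|$. Maximizing over the admissible radii, the inner problem becomes the finite-dimensional convex program
\[
v(g) := \max_{\rho_A}\left\{ |g|^\top \rho_A \ \middle| \ f_j(\rho_A) \le 0 \ \forall j\in[J], \ \rho_A \ge 0 \right\},
\]
where the implicit constraint $\rho_A \ge 0$ (radii are nonnegative) will play a decisive role below.

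Next I would dualize $v(g)$. Since $\mathcal D_{\rm gcol}$ is nonempty and admits a Slater point, strong Lagrangian duality applies: introducing multipliers $\lambda_j \ge 0$ for $f_j(\rho_A)\le 0$ and $\mu \ge 0$ for $\rho_A \ge 0$ and eliminating $\rho_A$ yields
\[
v(g) = \min_{\lambda \ge 0}\ \min_{s \ge |g|}\ \Bigl(\textstyle\sum_{j\in[J]} \lambda_j f_j\Bigr)^{*}(s),
\]
where the constraint $s \ge |g|$ precisely records the nonnegativity $\rho_A \ge 0$. The conjugate-of-a-sum (infimal-convolution) identity then rewrites $(\sum_j \lambda_j f_j)^{*}(s)$ as $\min_{\sum_j t_j = s}\sum_j \psi_j(t_j,\lambda_j)$, each summand being the perspective of $f_j^{*}$. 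Substituting $y_j = -t_j$ converts $\sum_j t_j = s \ge |g|$ into the coupling constraint $\sum_j y_j \le -|g|$ of~\eqref{eq:colwise}, the sign of $|g|$ and the direction of the inequality matching the conjugate/perspective convention of~\cite{xu2010robust,Rockafellar1970}; the boundary case $\lambda_j = 0$ is covered by the support-function branch of the perspective recalled in the footnote.

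Finally, assembling the pieces, the inner value for fixed $g$ equals $\|A^{(0)}g - b^{(0)}\| + \rho_b + v(g)$, and reintroducing the outer $\min_{g\in\mathcal G}$ together with the inner $\min_{\lambda,\{y_j\}}$ produces exactly~\eqref{eq:colwise}. Since the objective of~\eqref{eq:colwise}, after minimizing out $(\lambda,\{y_j\})$, coincides pointwise in $g$ with the objective of~\eqref{eq:column}, the two problems share the same minimizers $g^\star$ and the same optimal value, and the optimizers $(\lambda^\star,\{y_j^\star\})$ exist by attainment in the strong-duality and infimal-convolution steps. I expect the main obstacle to be precisely those two convex-analytic steps: verifying that the assumed Slater point yields strong duality for $v(g)$, that the conjugate-of-a-sum identity holds with exact attainment of the infimum, and that the limiting behavior at $\lambda_j = 0$ is correctly captured by the support-function definition of $\psi_j$. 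By contrast, the tightness of the triangle-inequality bound and the column-alignment argument are routine adaptations of Proposition~\ref{eq:frob}.
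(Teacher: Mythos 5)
Your proposal takes the same route as the paper's proof: for fixed $g$, the inner maximum is evaluated as $\|A^{(0)}g-b^{(0)}\| + \max\{\rho_A^\top|g| \,:\, f_j(\rho_A)\le 0\ \forall j,\ \rho_A\ge 0\} + \rho_b$ via a triangle-inequality upper bound plus a sign-aligned worst-case perturbation, and the remaining maximization over the radii $\rho_A$ is then dualized under the Slater condition. The only difference is one of detail: the paper compresses the dualization into a single sentence (``reformulate the inner maximization into its dual \dots thanks to strong duality,'' following \cite{xu2010robust}), whereas you execute it --- Lagrangian with multipliers $\lambda_j$ for $f_j(\rho_A)\le 0$ and $\mu\ge 0$ for $\rho_A\ge 0$, followed by the conjugate-of-a-sum (infimal convolution) identity that splits $(\sum_j\lambda_jf_j)^*$ into the perspectives $\psi_j$. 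Your intermediate result $v(g)=\min_{\lambda\ge0}\min_{s\ge|g|}(\sum_j\lambda_jf_j)^*(s)$ is correct, and this is exactly where \eqref{eq:colwise} comes from.

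The one step that does not hold as claimed is the final relabeling. From your intermediate form the dual reads $\min\{\sum_j\psi_j(t_j,\lambda_j) \,:\, \lambda\ge 0,\ \sum_jt_j\ge|g|\}$; substituting $y_j=-t_j$ flips the sign \emph{inside} each perspective as well, producing $\sum_j\psi_j(-y_j,\lambda_j)$ subject to $\sum_jy_j\le-|g|$, not $\sum_j\psi_j(y_j,\lambda_j)$ as in \eqref{eq:colwise}. No convention absorbs this: a scalar check with $J=1$, $H_c=1$, $f_1(\rho)=\rho-1$ gives inner maximum $|g|$, while $f_1^*(s)$ is finite only at $s=1$, so $\psi_1(y,\lambda)$ is finite only for $y=\lambda\ge0$; the constraint $y_1\le-|g|$ is then infeasible for every $g\ne0$, whereas your constraint $y_1\ge|g|$ returns the correct value $|g|$. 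So you should keep your form $\sum_jy_j\ge|g|$ rather than force-match the printed one (your derivation in fact flags that the sign in \eqref{eq:colwise} as stated deserves scrutiny). Apart from this bookkeeping --- and the need to record that the infimal convolution is exact and attained under the relative-interior condition supplied by the Slater point, which you already acknowledge --- your argument is sound and is in substance the paper's own.
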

	
	\begin{proof}
	The proof is adapted from \cite[Theorem 1]{xu2010robust}. Fix $g$ in~\eqref{eq:column}. We first prove that
	\begin{align*}
	  & \max_{(\Delta_A,\Delta_b) \in \mathcal D_{\rm gcol}} \ \ \left\| ( A^{(0)} + \Delta_A) g  -  (b^{(0)}+\Delta_b) \right\| \\
	= & \|A^{(0)} g - b^{(0)} \| +  \displaystyle \max_{\rho_A \in \mathcal V} \  \rho_A^\top |g| + \rho_b,
	\end{align*}
	where $\mathcal V = \{\rho_A \in \mathbb R^{H_c}_{>0} \mid \ f_j ( \rho_A ) \le 0, \ \forall j \in [J] \}$.
    Note that
    \begin{align*}
	  & \max_{(\Delta_A,\Delta_b) \in \mathcal D_{\rm gcol}} \ \ \| ( A^{(0)} + \Delta_A) g  -  (b^{(0)}+\Delta_b) \| \\
	= & \max_{(\Delta_A,\Delta_b) \in \mathcal D_{\rm gcol}} \ \ \| A^{(0)} g  -  b^{(0)} + \sum_{i \in [H_r]} (\Delta_A)_i g_i - \Delta_b\| \\
	\le  & \| A^{(0)} g  -  b^{(0)}\| + \max_{(\Delta_A,\Delta_b) \in \mathcal D_{\rm gcol}}  \sum_{i \in [H_r]} \|  (\Delta_A)_i g_i \| + \|\Delta_b\| \\
	\le  &  \| A^{(0)} g  -  b^{(0)}\| + \max_{\rho_A \in \mathcal V}  \ \rho_A^\top | g | + \rho_b.
	\end{align*}
	To prove the converse inequality, choose a $\hat \Delta = [\hat{\Delta}_A \ \hat{\Delta}_b] \in \mathcal D_{\rm gcol}$ such that
	\[
	(\hat{\Delta}_A)_i  = - (\rho_A)_i {\rm sgn}(g_i) \omega \  {\rm and} \ \hat{\Delta}_b  = \rho_b \omega, \quad \text{where}
	\]
	\[
	\omega = \begin{cases}
	 \frac{A^{(0)}g - b^{(0)}}{{\|A^{(0)}g - b^{(0)}\|}} &  \text{if} A^{(0)}g \ne b^{(0)} \\
	 \text{any unit-norm vector} & \text{otherwise}.
	\end{cases}
	\]
	Then, we have
    \begin{align*}
	  & \max_{(\Delta_A,\Delta_b) \in \mathcal D_{\rm gcol}} \ \ \| (b^{(0)}+\Delta_b) - ( A^{(0)} + \Delta_A) g    \| \\
	\ge &  \max_{\rho_A \in \mathcal V}  \| b^{(0)} + \hat{\Delta}_b -   A^{(0)} g - \sum_{i \in [H_r]} (\hat{\Delta}_A)_i g_i \| \\
	=  &  \max_{\rho_A \in \mathcal V} \|  b^{(0)} + \rho_b \omega - A^{(0)} g  + \sum_{i \in [H_r]}  g_i  (\rho_A)_i {\rm sgn}(g_i) \omega \| \\
	=  &  \max_{\rho_A \in \mathcal V} \|  b^{(0)} + \rho_b \omega - A^{(0)} g  + \sum_{i \in [H_r]}    \rho_A^\top |g|  \omega \| \\
	= & \|A^{(0)} g - b^{(0)} \| +  \displaystyle \max_{\rho_A \in \mathcal V} \  \rho_A^\top |g| + \rho_b.
	\end{align*}
	Finally, one can reformulate the inner maximization into its dual and obtain~\eqref{eq:colwise} thanks to the strong duality of convex programs, which applies because $\mathcal D_{\rm gcol}$ is nonempty and admits a Slater point.
	\end{proof}
	
	We remark that if the column-wise uncertainty is assumed, one can conservatively approximate~\eqref{eq:column} via~\eqref{eq:unstructure} with a large enough $\mathcal D_{\rm uns}$, e.g., if $\rho_u^2 \ge \|\rho_A\|^2+\rho_b^2$ in $\mathcal D_{\rm uns}$, which implies that the minimum of \eqref{eq:column} is smaller or equal than that of \eqref{eq:unstructure}.
	Problem~\eqref{eq:column} is useful if unstructured Page matrices or trajectory matrices are considered as predictors. Furthermore, different bounds for different columns in the data matrices can be incorporated in~\eqref{eq:column}, which comes in handy when the data comes from a time-varying system or the columns are from independent experiments (i.e., trajectory matrices are used). We relegate this direction to future research.

	
	The following result shows that for a special case of~\eqref{eq:column}, where $\rho_A \in \mathbb{R}^{H_c}$ is a constant vector in $\mathcal D_{\rm gcol}$, then the objective function of~\eqref{eq:column} can be reformulated into a least-square problem with a 1-norm regularization.
	
	\begin{corollary}\label{co:one_norm}
	Consider $\mathcal D_{\rm gcol} = \mathcal D_{\rm col}$ in~\eqref{eq:column}, where
	\begin{equation}\label{eq:column_wise_set}
	\mathcal D_{\rm col} := \{(\Delta_A,\Delta_b) \mid   \| (\Delta_A)_{i} \| \le (\rho_{A})_i,  \forall i \in [H_c],
	\Delta_b \le \rho_b \},
	\end{equation}
	then $g^\star$ is a minimizer of~\eqref{eq:column} if and only if $g^\star$ minimizes
	\begin{equation}\label{eq:colwise2}
			\mathop {{\rm{min}}}\limits_{g \in \mathcal{G}} \;\;\|A^{(0)} g - b^{(0)} \| + \rho_A^\top |g| + \rho_b\,.
	\end{equation}
	Moreover, the minima of~\eqref{eq:column} and~\eqref{eq:colwise2} coincide.
	\end{corollary}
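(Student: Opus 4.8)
The plan is to reduce the claim to a pointwise-in-$g$ identity. For every fixed $g \in \mathcal G$, I will show that the inner maximization in~\eqref{eq:column} with $\mathcal D_{\rm gcol} = \mathcal D_{\rm col}$ evaluates in closed form to
\[
\max_{(\Delta_A,\Delta_b)\in\mathcal D_{\rm col}} \left\| (A^{(0)}+\Delta_A)g - (b^{(0)}+\Delta_b) \right\| = \left\| A^{(0)}g - b^{(0)} \right\| + \rho_A^\top |g| + \rho_b.
\]
Once this identity holds for each feasible $g$, the objectives of~\eqref{eq:column} and~\eqref{eq:colwise2} agree at every $g \in \mathcal G$, so they share the same set of minimizers and the same minimum, which is exactly the assertion. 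Thus everything reduces to evaluating a single inner maximum.

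For the upper bound I would fix $g$ and split the residual column-by-column as $(A^{(0)}+\Delta_A)g - (b^{(0)}+\Delta_b) = (A^{(0)}g - b^{(0)}) + \sum_{i\in[H_c]} (\Delta_A)_i g_i - \Delta_b$, then apply the triangle inequality together with the defining bounds $\|(\Delta_A)_i\| \le (\rho_A)_i$ and $\|\Delta_b\| \le \rho_b$ of $\mathcal D_{\rm col}$. This yields $\|A^{(0)}g - b^{(0)}\| + \sum_{i} (\rho_A)_i |g_i| + \rho_b = \|A^{(0)}g - b^{(0)}\| + \rho_A^\top|g| + \rho_b$, i.e.\ the ``$\le$'' direction. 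The crucial simplification relative to Proposition~\ref{eq:column-wise} is that here $\rho_A$ is a \emph{constant} vector rather than a decision variable constrained by the $f_j$, so no residual maximization over the column bounds remains; this is precisely what collapses the general convex program~\eqref{eq:colwise} to the explicit weighted $1$-norm penalty in~\eqref{eq:colwise2}.

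For the matching lower bound I would exhibit a feasible worst-case perturbation that makes the triangle inequality tight. Setting $\omega = (A^{(0)}g - b^{(0)})/\|A^{(0)}g - b^{(0)}\|$ (or any unit-norm vector when $A^{(0)}g = b^{(0)}$), I would take $(\hat\Delta_A)_i = (\rho_A)_i\,{\rm sgn}(g_i)\,\omega$ and $\hat\Delta_b = -\rho_b\,\omega$; one checks $\|(\hat\Delta_A)_i\| \le (\rho_A)_i$ and $\|\hat\Delta_b\| = \rho_b$, so $\hat\Delta \in \mathcal D_{\rm col}$. With these signs every term becomes the \emph{same} nonnegative multiple of $\omega$, so the residual equals $\bigl(\|A^{(0)}g - b^{(0)}\| + \rho_A^\top|g| + \rho_b\bigr)\omega$, whose norm attains the upper bound. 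This is the construction from the proof of Proposition~\ref{eq:column-wise}, specialized to a fixed $\rho_A$.

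Alternatively, the corollary follows directly from Proposition~\ref{eq:column-wise}: choosing the convex constraints $f_j$ so that the feasible set of column bounds is $\mathcal V = \{\rho' \in \mathbb R^{H_c}_{>0} \mid \rho' \le \rho_A\}$ reproduces $\mathcal D_{\rm col}$, and since $|g|\ge 0$ the inner maximum $\max_{\rho'\in\mathcal V}\rho'^\top|g|$ is attained at $\rho' = \rho_A$, collapsing~\eqref{eq:colwise} to~\eqref{eq:colwise2}. The only point requiring care in either route is the sign bookkeeping in the worst-case construction, together with the degenerate case $A^{(0)}g = b^{(0)}$ where $\omega$ is chosen arbitrarily of unit norm; the remaining manipulations are routine.
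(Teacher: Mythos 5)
Your proof is correct and is essentially the paper's own: the paper disposes of this corollary in a single line by invoking Proposition~\ref{eq:column-wise}, which is precisely your alternative route (instantiate the $f_j$ so that $\mathcal V=\{\rho'\in\mathbb R^{H_c}_{>0}\mid \rho'\le\rho_A\}$ and note $\max_{\rho'\in\mathcal V}\rho'^\top|g|$ is attained at $\rho_A$ since $|g|\ge 0$), while your primary route merely inlines that proposition's proof for a constant $\rho_A$, which needs no Slater/duality step. One point in your favor: your worst-case perturbation $(\hat\Delta_A)_i=(\rho_A)_i\,{\rm sgn}(g_i)\,\omega$, $\hat\Delta_b=-\rho_b\,\omega$ has the sign bookkeeping right (every term aligned with $A^{(0)}g-b^{(0)}$, so the triangle inequality is tight), whereas the paper's proof of Proposition~\ref{eq:column-wise} combines $(\hat\Delta_A)_i=-(\rho_A)_i\,{\rm sgn}(g_i)\,\omega$, $\hat\Delta_b=\rho_b\,\omega$ with the same $\omega=(A^{(0)}g-b^{(0)})/\|A^{(0)}g-b^{(0)}\|$, a sign slip under which the perturbation terms point opposite to the nominal residual and the claimed equality fails as literally written.
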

	\begin{proof}
	The claimed result follows from Proposition~\ref{eq:column-wise}.
	\end{proof}
	

		
	Since $(A^{(0)})_i$ is a weighted input/output trajectory (with scaling matrices $\lambda_y^{\frac{1}{2}}I$, $Q^{\frac{1}{2}}$, etc.), it could be conservative to consider a column-wise uncertainty vector $(\Delta_A)_i$ (with 2-norm bound) on $(A^{(0)})_i$ as the uncertainties are indirectly added to the input/output data through the scaling matrices. We will next demonstrate how to use interval uncertainties to incorporate the effects of the scaling matrices.
	
	\subsection{Interval Uncertainties}
	\label{subsec:int}
	Consider the following special case of~\eqref{eq:minmaxDeePC-penalty-rep1}
	\begin{equation} \label{eq:interval}
		\displaystyle \mathop {{\rm{min}}}\limits_{g \in \mathcal G} \max_{  (\Delta_A, \Delta_b) \in \mathcal D_{\rm int}} \ \ \left\| ( A^{(0)} + \Delta_A) g  - ( b^{(0)} + \Delta_b) \right\|
	\end{equation}
	with the nonempty interval uncertainty set
	\[
	\mathcal D_{\rm int} = \{(\Delta_A, \Delta_b) \mid  | (\Delta_A)_{ij} | \le \bar A_{ij},  \forall (i,j),  | (\Delta_b)_i | \le   \bar b_i ,   \forall i\} ,
	\]
	where $\bar A \in \mathbb R^{H_r \times H_c}_{\ge 0}$, $\bar b \in \mathbb R^{H_r}_{\ge 0}$, and $(\Delta_A)_{ij}$ is the element in the $i$-th row and $j$-th column of~$\Delta_A$. The element of $\bar A$ and~$\bar b$ can be set to $0$ if there is no uncertainty in the corresponding element of the data matrices. The corresponding $\mathcal{G}$ in~\eqref{eq:interval} can be represented as a polyhedron.
	Note that if the interval uncertainty is assumed, one can still (conservatively) capture it with a large enough $\mathcal D_{\rm col}$, e.g., $(\rho_A)_i \ge \|(\bar A)_i\| \ \forall i \in [H_c]$ and $\rho_b \ge \|\bar b\|$ in $\mathcal D_{\rm col}$. In this case, the minimum of \eqref{eq:interval} is less or equal than that of \eqref{eq:column}, suggesting a less conservative result (better performance in the worst-case scenario) by applying $\mathcal D_{\rm int}$. The following result shows that \eqref{eq:interval} can be reformulated as a convex quadratic problem.
	
	\begin{proposition} \label{eq:intval}
		A vector $g^\star$ is a minimizer of~\eqref{eq:interval} if and only if there exists a $(\gamma^\star, \nu^\star)$ such that $(g^\star, \gamma^\star, \nu^\star)$ also minimizes the following convex quadratic problem
			\begin{equation}\label{eq:interval_QP}
			\begin{array}{cl}
				\displaystyle \min_{g \in \mathcal{G}, \gamma, \nu } & \;\; \| \gamma + \bar b +  \bar A \nu \|^2  \\
				{\rm s.t. } &  \;\; \displaystyle -\gamma \le  A^{(0)} g - b^{(0)} \le \gamma  \\
			    &  \;\; 	-\nu \le g \le \nu .
			\end{array}
		\end{equation}
		Moreover, the minimum of~\eqref{eq:interval} coincides with $\| \gamma^\star + \bar b + \bar A \nu^\star \|$.
	\end{proposition}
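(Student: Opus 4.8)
The plan is to fix $g \in \mathcal{G}$, solve the inner maximization in closed form, and then recognize \eqref{eq:interval_QP} as an auxiliary-variable reformulation of the resulting minimization over $g$. Write the residual as $r(\Delta_A,\Delta_b) = (A^{(0)}g - b^{(0)}) + \Delta_A g - \Delta_b$, so its $i$-th component is $r_i = (A^{(0)}g - b^{(0)})_i + \sum_j (\Delta_A)_{ij} g_j - (\Delta_b)_i$. The decisive structural observation is that the box set $\mathcal{D}_{\rm int}$ lets every entry $(\Delta_A)_{ij}$ and $(\Delta_b)_i$ be chosen independently, and each such entry influences only row $i$ of the residual. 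Hence the rows of $r$ decouple, and one can maximize $|r_i|$ separately for each $i$. Aligning signs via $(\Delta_A)_{ij} = \bar A_{ij}\,{\rm sgn}((A^{(0)}g-b^{(0)})_i)\,{\rm sgn}(g_j)$ and $(\Delta_b)_i = -\bar b_i\,{\rm sgn}((A^{(0)}g-b^{(0)})_i)$ yields $\max|r_i| = |(A^{(0)}g-b^{(0)})_i| + \sum_j \bar A_{ij}|g_j| + \bar b_i$.

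Since the maximizing choices for distinct rows lie in disjoint coordinate blocks of the box and therefore do not interfere, and since $\|r\|^2 = \sum_i r_i^2$, the inner maximum of the $2$-norm is attained by maximizing each $|r_i|$ simultaneously. This gives
$$\max_{(\Delta_A,\Delta_b)\in\mathcal{D}_{\rm int}} \|r\| = \big\| \, |A^{(0)}g - b^{(0)}| + \bar A\,|g| + \bar b \, \big\|,$$
where $|\cdot|$ denotes the componentwise absolute value.

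Next I would introduce the auxiliary variables $\gamma$ and $\nu$, imposing $\gamma \ge |A^{(0)}g - b^{(0)}|$ and $\nu \ge |g|$ componentwise, which are precisely the two-sided linear constraints $-\gamma \le A^{(0)}g - b^{(0)} \le \gamma$ and $-\nu \le g \le \nu$ of \eqref{eq:interval_QP}. Because $\bar A \ge 0$ and $\bar b \ge 0$, the vector $\gamma + \bar b + \bar A\nu$ has nonnegative entries whenever $\gamma, \nu \ge 0$, and the Euclidean norm of a nonnegative vector is nondecreasing in each entry. Consequently $\|\gamma + \bar b + \bar A\nu\|$ is minimized by driving $\gamma$ and $\nu$ down to their lower bounds, so the partial optimum is attained at $\gamma = |A^{(0)}g - b^{(0)}|$ and $\nu = |g|$. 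Substituting these back recovers exactly the inner maximum of the previous step, so for every fixed $g$ the inner objective of \eqref{eq:interval} equals the partial minimum over $(\gamma,\nu)$ of the objective of \eqref{eq:interval_QP}.

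Minimizing both sides over $g \in \mathcal{G}$ then shows that the two problems share the same optimal value and the same set of minimizers $g^\star$, and that the minimum equals $\|\gamma^\star + \bar b + \bar A\nu^\star\|$. The only delicate point is the decoupling in the first paragraph: one must check that the row-wise maximizers can be selected jointly while staying feasible, which holds precisely because each uncertain entry appears in a single row, and that the monotonicity argument eliminating the absolute values relies on the sign conditions $\bar A \ge 0$, $\bar b \ge 0$ that are built into $\mathcal{D}_{\rm int}$.
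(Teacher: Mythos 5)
Your proof is correct and takes essentially the same route as the paper: both evaluate the inner maximization row-by-row to obtain $|(A^{(0)}g-b^{(0)})_i| + \bar b_i + \sum_j \bar A_{ij}|g_j|$ and then eliminate the absolute values via the auxiliary variables $\gamma$ and $\nu$, using monotonicity of the objective in these variables. You simply make explicit what the paper leaves to a citation (the sign-aligned worst-case perturbation and the row decoupling), which is fine; the only cosmetic caveat is to fix a sign convention such as ${\rm sgn}(0)=1$ so the worst case is attained even when a residual entry vanishes.
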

	\begin{proof}
	This proof is adapted from \cite[Section 6.2]{ben09}. One may observe that~\eqref{eq:interval} can be equivalently reformulated as
		\begin{equation*}
			\begin{array}{cl}
				\displaystyle \min_{g \in \mathcal{G}, \tau } &  \tau^\top \tau \\
				{\rm s.t. } &   \displaystyle   | (A^{(0)}g - b^{(0)})_i | + \bar b_i + \sum_{j = 1}^{H_c} \bar A_{ij} |g_j| \le \tau_i, \ \forall i \in [H_r],
			\end{array}
		\end{equation*}
		where $(x)_i$ denotes the $i$-th element of the vector $x$,
		and the minimum of~\eqref{eq:interval} coincides with $\| \tau^\star\|$. One can now replace $\tau$ by the introduced auxiliary variables $\gamma_i$ and $\nu_j$ for $| (A^{(0)}g - b^{(0)})_i|$ and $|g_j|$, respectively, for every $i \in [H_r]$ and $j \in [H_c]$.
	\end{proof}	

		The interval uncertainty set allows us to consider independent upper and lower bounds for the possible deviation in each entry of $A^{(0)}$ and $b^{(0)}$. Hence, one can directly consider uncertainties on the input/output data and cancel the effects of the scaling matrices.
		For example, one may consider the measurement error of the $i{\rm th}$ output to be bounded by $\tilde y_i \in \mathbb{R}_{\ge 0}$, and the process noise of the $i{\rm th}$ input to be bounded by $\tilde u_i \in \mathbb{R}_{\ge 0}$ (which can also be zero if no uncertainty occurs). In this case, we have
		\begin{equation}\label{eq:interval_bound}
			\bar A = \begin{bmatrix}
				\lambda_u^{\frac{1}{2}}{\bf 1}_{T_{\rm ini}\times H_c} \otimes \tilde u \\
				\lambda_y^{\frac{1}{2}}{\bf 1}_{T_{\rm ini}\times H_c} \otimes \tilde y \\
				R^{\frac{1}{2}} {\bf 1}_{N\times H_c} \otimes \tilde u \\
				Q^{\frac{1}{2}} {\bf 1}_{N\times H_c} \otimes \tilde y
			\end{bmatrix} \quad {\rm and} \quad
			\bar b = \begin{bmatrix}
				\lambda_u^{\frac{1}{2}}{\bf 1}_{T_{\rm ini}} \otimes \tilde u \\
				\lambda_y^{\frac{1}{2}}{\bf 1}_{T_{\rm ini}} \otimes \tilde y \\
				0 \\
				0
			\end{bmatrix}  \,.
		\end{equation}
		In the above setting, the possible deviation in each entry of $A^{(0)}$ and $b^{(0)}$ is assumed to reside within an independent interval, and thus Hankel structures for uncertainties in $A(\xi)$ cannot be enforced. {For this reason,} in contrast to the case in which Page matrices or trajectory matrices are considered, if Hankel matrices are used in \eqref{eq:interval_QP} as predictors, then the obtained solution could be overly conservative, similar to the cases with unstructured uncertainties or column-wise uncertainties. To this end, in what follows we illustrate how the uncertainties with Hankel structures can be taken into account in the robust DeePC algorithm with a tractable formulation.

	\subsection{Structured Uncertainties}
	\label{subsec:struct}
	Consider~\eqref{eq:minmaxDeePC-penalty-rep1} with the uncertainties residing within the following uncertainty set
	\[{\mathcal D} = \mathcal D_{ \rm struct} := \{ \xi \in \mathbb R^{n_\xi} \ | \ \| \xi \| \le \rho_s \}.   \]
	In this case, the corresponding $\mathcal G$ is a convex set that consists of linear and conic quadratic constraints. Recall that the vector of uncertainties $\xi$ enters the elements of $A(\xi)$ and $b(\xi)$ with affine structures, and we will show below how these affine structures allow us to construct Hankel structures for the uncertainties.

	We start by noting that \eqref{eq:minmaxDeePC-penalty-rep1} can be reformulated as
	\begin{equation}
	\displaystyle \mathop {{\rm{min}}}\limits_{g \in \mathcal G} \max_{ \xi \in \mathcal D} \ \ \left\| D(g) \xi  - c(g)  \right\|^2, \label{eq:minmaxDeePC-penalty-rep2}
	\end{equation}
	where the elements of $D(g)$ and $c(g)$ that are affine functions of $g$, namely,
	\[
	D (g) =  D^{(0)}  + \sum_{\ell= 1}^{H_c}  D^{(\ell)}g_\ell,  \ \text{and} \ c (g) = c_0 + \sum_{\ell= 1}^{H_c} c_\ell g_\ell.
	\]
	The equivalence between~\eqref{eq:minmaxDeePC-penalty-rep1} and~\eqref{eq:minmaxDeePC-penalty-rep2} directly follows from
	\begin{align*}
		(A^{(0)})_\ell =& c_\ell \ \forall \ell \in [H_c],  \ (D^{(0)})_j = b_j \ \forall j \in [n_\xi], \ b_0 = c_0, \\
		\quad (A^{(j)})_\ell =& (D^{(\ell)})_j \ \forall \ell \in [H_c] \ \forall j \in [n_\xi],
	\end{align*}
	where, for instance, $(A^{(0)})_\ell$ is the $\ell$-th column of $A^{(0)}$.
	
	
	\begin{example}[Additive Uncertainties {with Hankel Structures in~\eqref{eq:minmaxDeePC-penalty-rep2}}]\label{Example_1}
		Consider uncertainties directly on the time series recorded input data and measured output data (i.e., $\hat u^{\rm d}$, $\hat u_{\rm ini}$, $\hat y^{\rm d}$, and $\hat y_{\rm ini}$). Let $\xi = {\rm col}(\xi_1 \in \mathbb{R}^{mT},\xi_2 \in \mathbb{R}^{pT},\xi_3 \in \mathbb{R}^{mT_{\rm ini}},\xi_4 \in \mathbb{R}^{pT_{\rm ini}}) \in \mathcal{D}_{ \rm struct}$ such that
		\begin{equation*}
			\begin{bmatrix} U_{\rm P}(\xi)\\ U_{\rm F}(\xi) \end{bmatrix} = \mathscr{H}_{T_{\rm ini}+N}(\hat u^{\rm d} + \alpha_1 \xi_1) = \begin{bmatrix} \hat U_{\rm P}\\ \hat U_{\rm F} \end{bmatrix} + \alpha_1 \mathscr{H}_{T_{\rm ini}+N}(\xi_1)\,,
		\end{equation*}
		\begin{equation*}
			\begin{bmatrix} Y_{\rm P}(\xi)\\ Y_{\rm F}(\xi) \end{bmatrix} = \mathscr{H}_{T_{\rm ini}+N}(\hat y^{\rm d} + \alpha_2 \xi_2) = \begin{bmatrix} \hat Y_{\rm P}\\ \hat Y_{\rm F} \end{bmatrix} + \alpha_2 \mathscr{H}_{T_{\rm ini}+N}(\xi_2)\,,
		\end{equation*}
		\begin{equation*}
			u_{\rm ini}(\xi) = \hat u_{\rm ini} + \alpha_3 \xi_3\,,
			y_{\rm ini}(\xi) = \hat y_{\rm ini} + \alpha_4 \xi_4\,,
		\end{equation*}
		where $\alpha_1$, $\alpha_2$, $\alpha_3$, and $\alpha_4$ are the scaling coefficients. {Note that $\alpha_1 = \alpha_3 = 0$ in the absence of input disturbance.}
		
		We now show that~\eqref{eq:minmaxDeePC-penalty-rep2} with additive uncertainties in the input/output data matrices (that obey the prescribed Hankel structures) admit a compact representation. Notice that
		\begin{equation*}
		\begin{split}
			\begin{bmatrix} U_{\rm P}(\xi)\\ U_{\rm F}(\xi) \end{bmatrix} g =&
			\begin{bmatrix} \hat U_{\rm P}\\ \hat U_{\rm F} \end{bmatrix} g + \alpha_1 \mathscr{H}_{T_{\rm ini}+N}(\xi_1)g \\
			=& \begin{bmatrix} \hat U_{\rm P}\\ \hat U_{\rm F} \end{bmatrix} g + \alpha_1 \mathscr{M}_{T_{\rm ini}+N}(g) \otimes I_m \xi_1\,,
		\end{split}
		\end{equation*}
		\begin{equation*}
		\begin{split}
			\begin{bmatrix} Y_{\rm P}(\xi)\\ Y_{\rm F}(\xi) \end{bmatrix} g =&
			\begin{bmatrix} \hat Y_{\rm P}\\ \hat Y_{\rm F} \end{bmatrix} g + \alpha_2 \mathscr{H}_{T_{\rm ini}+N}(\xi_2)g \\
			=& \begin{bmatrix} \hat Y_{\rm P}\\ \hat Y_{\rm F} \end{bmatrix} g + \alpha_2 \mathscr{M}_{T_{\rm ini}+N}(g) \otimes I_p \xi_2\,,
		\end{split}
		\end{equation*}
		where $\mathscr{M}_L(x) \in \mathbb{R}^{L \times (n_x+L-1)}$ ($x \in \mathbb{R}^{n_x}$) is defined as
		\begin{equation*}
			\mathscr{M}_L(x) = \begin{bmatrix}
				x_1 & x_2 & \cdots & x_L     & \cdots & x_{n_x} & & &   \\
				& x_1 & x_2    & \cdots  & x_L    & \cdots  & x_{n_x} & &   \\
				&     & \ddots & \ddots  &        & \ddots  &         & \ddots & \\
				&     &        & x_1     & x_2    & \cdots  & x_L     & \cdots & x_{n_x}
			\end{bmatrix} .
		\end{equation*}
		One can further partition $\mathscr{M}_{T_{\rm ini}+N}(g)$ into
		\begin{equation*}
			\begin{bmatrix} \mathscr{M}_{T_{\rm ini}+N}^{\rm P}(g) \in \mathbb{R}^{T_{\rm ini} \times T}\\[6pt] \mathscr{M}_{T_{\rm ini}+N}^{\rm F}(g) \in \mathbb{R}^{N \times T} \end{bmatrix} := \mathscr{M}_{T_{\rm ini}+N}(g)\,.
		\end{equation*}
		Since
		\begin{equation*}
			A(\xi) = \begin{bmatrix}\lambda_u^{\frac{1}{2}}U_{\rm P}(\xi) \\\lambda_y^{\frac{1}{2}}Y_{\rm P}(\xi) \\ R^{\frac{1}{2}}U_{\rm F}(\xi) \\ Q^{\frac{1}{2}}Y_{\rm F}(\xi) \end{bmatrix} \quad {\text and} \quad
			b(\xi) = \begin{bmatrix} \lambda_u^{\frac{1}{2}}u_{\rm ini}(\xi) \\\lambda_y^{\frac{1}{2}}y_{\rm ini}(\xi)\\ 0 \\ Q^{\frac{1}{2}}r \end{bmatrix}\,
		\end{equation*}
		by definition, we then have
		\begin{equation}\label{eq:sdp_Dc}
		\begin{split}
		D(g) = & \begin{bmatrix} D_1(g) & D_2(g) \end{bmatrix}\,, \\
		c(g) = & \begin{bmatrix}\lambda_u^{\frac{1}{2}}\hat U_{\rm P} \\\lambda_y^{\frac{1}{2}}\hat Y_{\rm P} \\ R^{\frac{1}{2}}\hat U_{\rm F} \\ Q^{\frac{1}{2}}\hat Y_{\rm F} \end{bmatrix}g - \begin{bmatrix} \lambda_u^{\frac{1}{2}}\hat u_{\rm ini} \\\lambda_y^{\frac{1}{2}}\hat y_{\rm ini}\\ 0 \\ Q^{\frac{1}{2}}r \end{bmatrix}\,,
		\end{split}
		\end{equation}
        where
		\begin{align*}
		&	D_1(g) \hspace{-0.5mm} = \hspace{-1mm} \begin{bmatrix}
				\lambda_u^{\frac{1}{2}}\alpha_1 \mathscr{M}^{\rm P}_{T_{\rm ini}+N}(g) \otimes I_m& 0 \\
				0&\lambda_y^{\frac{1}{2}}\alpha_2 \mathscr{M}^{\rm P}_{T_{\rm ini}+N}(g) \otimes I_p  \\
				R^{\frac{1}{2}}\alpha_1 \mathscr{M}^{\rm F}_{T_{\rm ini}+N}(g) \otimes I_m& 0 \\
				0&Q^{\frac{1}{2}}\alpha_2 \mathscr{M}^{\rm F}_{T_{\rm ini}+N}(g) \otimes I_p
			\end{bmatrix} \\
    	&	\text{and} \quad	D_2(g) = \begin{bmatrix}
				\lambda_u^{\frac{1}{2}}\alpha_3 I_{mT_{\rm ini}}  & 0 \\
				0& \lambda_y^{\frac{1}{2}}\alpha_4 I_{pT_{\rm ini}} \\
				0&  0\\
				0&0
			\end{bmatrix}\,.
		\end{align*}
	\end{example}
	\vspace{3mm}
	
	Example~\ref{Example_1} shows that the formulation in~\eqref{eq:minmaxDeePC-penalty-rep2} admits a compact representation of uncertainties with Hankel structures, as derived in~\eqref{eq:sdp_Dc}.
	The following result explicitly shows how~\eqref{eq:minmaxDeePC-penalty-rep2} with $\mathcal D = \mathcal D_{\rm struct}$ can be solved by considering a tractable semi-definite programming problem.
	
	\begin{proposition} \label{pro:SDP}
		A vector $g^\star$ is a minimizer of~\eqref{eq:minmaxDeePC-penalty-rep2} with $\mathcal D = \mathcal D_{ \rm struct}$ if and only if there exists a $(\tau^\star, \lambda^\star)$ such that $g^\star$ also minimizes the following semi-definite programming problem
		\begin{equation}\label{eq:sdp}
			\begin{array}{cll}
				\displaystyle \min_{g \in \mathcal G, \tau, \lambda } & \;\; \tau \\
				{\rm s.t. } &  \;\; \begin{bmatrix}
					\tau - \lambda \rho_s^2 & 0 & c(g)^\top \\
					0 & \lambda I & D(g)^\top \\
					c(g) &  D(g) & I
				\end{bmatrix} \succeq 0.
			\end{array}
		\end{equation}
		Moreover, the minima of~\eqref{eq:minmaxDeePC-penalty-rep2} and~\eqref{eq:sdp} coincide.
	\end{proposition}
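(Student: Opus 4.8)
The plan is to fix the first-stage variable $g \in \mathcal{G}$, convert the inner maximization over $\xi$ into a positive-semidefiniteness condition via the S-procedure, and then linearize the resulting matrix inequality by a Schur complement. Throughout the inner argument I write $D := D(g)$ and $c := c(g)$; these are affine in $g$, but are treated as constants while the maximization over $\xi$ is handled, and the affine dependence is reinstated only at the very end.

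First I would introduce an epigraph variable $\tau$ and note that, for fixed $g$, the bound $\max_{\|\xi\| \le \rho_s} \|D\xi - c\|^2 \le \tau$ is equivalent to the validity, for \emph{all} $\xi \in \mathbb{R}^{n_\xi}$, of the implication $\rho_s^2 - \|\xi\|^2 \ge 0 \implies \tau - \|D\xi - c\|^2 \ge 0$. This is an implication between two quadratic inequalities, so the key step is to invoke the S-lemma. Since $\rho_s > 0$, the constraint set $\{\xi : \rho_s^2 - \|\xi\|^2 \ge 0\}$ has the strictly interior point $\xi = 0$, so Slater's condition required for the \emph{lossless} (equivalence, not merely sufficiency) form of the S-lemma holds. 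The S-lemma then asserts that the implication is valid if and only if there exists $\lambda \ge 0$ such that
\[
\tau - \|D\xi - c\|^2 - \lambda\bigl(\rho_s^2 - \|\xi\|^2\bigr) \ge 0 \quad \forall \xi .
\]
This exact equivalence, rather than a one-sided relaxation, is what produces the ``if and only if'' in the proposition and is the conceptual crux of the argument.

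Next I would rewrite the global nonnegativity of this inhomogeneous quadratic in $\xi$ as a semidefiniteness condition on its homogenized $(n_\xi+1)\times(n_\xi+1)$ coefficient matrix. Expanding $\|D\xi - c\|^2 = \xi^\top D^\top D \xi - 2 c^\top D \xi + c^\top c$ and collecting terms, the condition reads
\[
\begin{bmatrix} \lambda I & 0 \\ 0 & \tau - \lambda \rho_s^2 \end{bmatrix} - \begin{bmatrix} D^\top \\ c^\top \end{bmatrix}\begin{bmatrix} D & c \end{bmatrix} \succeq 0 .
\]
Because the pulled-out inner block is the identity $I \succ 0$, a Schur complement converts this into the linear matrix inequality
\[
\begin{bmatrix} \lambda I & 0 & D^\top \\ 0 & \tau - \lambda \rho_s^2 & c^\top \\ D & c & I \end{bmatrix} \succeq 0 ,
\]
which is exactly the matrix in~\eqref{eq:sdp} after a symmetric permutation swapping the first two diagonal blocks (a congruence that preserves definiteness). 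Reinstating $D = D(g)$ and $c = c(g)$ and then minimizing over $g \in \mathcal{G}$, $\tau$, and $\lambda \ge 0$ yields the equivalence of the two problems and of their optimal values.

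The main obstacle I anticipate is justifying the \emph{exactness} of the S-procedure: for a single quadratic constraint the classical Yakubovich S-lemma delivers equivalence under Slater's condition, which holds here since $\rho_s > 0$, and I would state this explicitly, since without it one would obtain only an SDP upper bound rather than an exact reformulation. The remaining steps, namely the Schur complement and the block permutation, are routine once the sign conventions in the S-lemma aggregation are fixed, and the Hankel-structured affine parametrization of $D(g)$ and $c(g)$ spelled out in Example~\ref{Example_1} plays no role in the inner derivation beyond guaranteeing that the final matrix inequality is linear in the decision variables.
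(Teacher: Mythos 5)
Your proposal is correct and follows essentially the same route as the paper's proof: fix $g$, invoke the exact (lossless) S-lemma---justified, as in the paper, by $\rho_s > 0$---to turn the semi-infinite constraint into a $2\times 2$ block LMI in $(\tau,\lambda)$, and then apply a Schur complement with the identity block to obtain the $3\times 3$ LMI of~\eqref{eq:sdp}. Your explicit homogenization step and the block permutation at the end are just unpacked versions of what the paper states directly in matrix form, so there is no substantive difference.
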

	\begin{proof}
	 The proof is adapted from~\cite{el1997robust}. For any fixed $g$, we have
		\begin{align*}
		    & \tau \ge \max_{ \xi \in \mathcal D} \left\| D(g) \xi  - c(g)  \right\|^2 \\
     \iff \quad  & \tau \ge \left\| D(g) \xi  - c(g)  \right\|^2  \quad \forall \xi: \| \xi \| \le \rho_s.
		\end{align*}
	This semi-infinite constraint is satisfied if and only if
	\[
	\exists \lambda \ge 0: \begin{bmatrix}
					\tau - \lambda \rho_s^2 - c(g)^\top c(g) & -c(g)^\top D(g)  \\
					-D(g)^\top c(g) & \lambda I - D(g)^\top D(g)
				\end{bmatrix} \succeq 0,
	\]
	thanks to S-lemma \cite[\S 2.3]{Yakubovich:1971}, which applies because $\rho_s \in  \mathbb R_{>0}$. The inner maximization problem in~\eqref{eq:minmaxDeePC-penalty-rep2} can be reformulated as
		\begin{equation*}
			\begin{array}{cll}
				\displaystyle \min_{\tau, \lambda} & \;\;  \tau   \\
				{\rm s.t. } &  \;\; \begin{bmatrix}
					\tau - \lambda \rho_s^2 - c(g)^\top c(g) & -c(g)^\top D(g)  \\
					-D(g)^\top c(g) & \lambda I - D(g)^\top D(g)
				\end{bmatrix} \succeq 0.
			\end{array}
		\end{equation*}
		It then follows from Schur complement \cite[\S A.5.5]{Boyd:2004} that
		\begin{align*}
		    \begin{bmatrix}
					\tau - \lambda \rho_s^2 & 0  \\
					0 & \lambda I_{H_r}
			\end{bmatrix} - \begin{bmatrix}
			c(g)^\top \\ D(g)^\top
			\end{bmatrix}  I_{H_r}
			\begin{bmatrix}
			c(g) & D(g)
			\end{bmatrix} \succeq 0
		\end{align*}
		if and only if
		\[
		\begin{bmatrix}
					\tau - \lambda \rho_s^2 & 0 & c(g)^\top \\
					0 & \lambda I & D(g)^\top \\
					c(g) &  D(g) & I_{H_r}
		\end{bmatrix} \succeq 0,
		\]
		and the claim follows.
	\end{proof}
	
	Note that S-lemma \cite[\S 2.3]{Yakubovich:1971} enabling the proof still applies if a general quadratic uncertainty set $\mathcal D_{ \rm quad} = \{ \xi \in \mathbb R^{n_\xi} \ | \ \xi^\top V \xi + 2 v^\top \xi \le \rho_s^2 \}$ is considered in Proposition~\ref{pro:SDP}, where $V \in \mathbb R^{n_\xi \times n_\xi}$ is a symmetric matrix (not necessarily positive semi-definite) and $v \in \mathbb R^{n_\xi}$.
	
	We remark that the structured uncertainty set in Example~\ref{Example_1} is tighter than the other uncertainty sets introduced before in modeling the uncertainties on Hankel matrices, and thus it leads to less conservative results when Hankel matrices are used as predictors. However, it requires one to solve a semi-definite program and generally needs more computational effort. As will be discussed in the simulation section, the advantages of the structured uncertainty set come at the cost of more computational effort, and we will also show that though the unstructured uncertainty set is conservative, it can lead to satisfactory performance. One can analogously also  consider $\xi$ on the input/output data to construct Pages matrices and derive $D(g)$ and $c(g)$, though it may result in a very high-dimension semi-definite program, as Page matrices require much longer trajectories.
	
	It is worth noting that for a special case of~\eqref{eq:minmaxDeePC-penalty-rep2}, where $D(g)$ is independent of $g$, the semi-definite programming problem in Proposition~\ref{pro:SDP} can be reduced to a second-order cone program, as shown in the following result. This scenario is useful if the Hankel matrix data has no noise (e.g., the data is generated from a prescribed true model in simulations) and only the initial trajectory $(u_{\rm ini}, y_{\rm ini})$ is corrupted by noise.
	\begin{corollary}
	If $D(g) = D$ in~\eqref{eq:minmaxDeePC-penalty-rep2}, then $g^\star$ is a minimizer of~\eqref{eq:minmaxDeePC-penalty-rep2} if and only if there exists a $(\nu^\star, \lambda^\star)$ such that $g^\star$ also minimizes the second-order cone program
	\begin{equation*}
		\begin{array}{cll}
			\displaystyle \min_{g \in \mathcal G, \nu, \lambda } & \;\;  c(g)^\top c(g) + {\bf 1}^\top \nu + \lambda \rho_s^2 \\
			{\rm s.t. } &  \;\;  \left\| \begin{matrix} (2c(g)^\top DS )_\ell \\
				\nu_\ell + \alpha_\ell - \lambda
			\end{matrix}\right\| \le \nu_\ell - \alpha_\ell + \lambda & \quad \forall \ell \in [H_c],
		\end{array}
	\end{equation*}
	where $S$ is a nonsingular matrix that simultaneously diagonalizes $D^\top D$ and $I_{H_c} \in \mathbb R^{H_c \times H_c}$, namely,
	\[
	S^\top D^\top D S =: {\rm diag} (\alpha) \ , \quad  S^\top S = I_{H_c},
	\]
	and $\alpha = [ \alpha_1, \cdots, \alpha_{H_c} ]$.
	\end{corollary}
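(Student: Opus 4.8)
The plan is to specialize the argument of Proposition~\ref{pro:SDP} to the case $D(g) = D$ and exploit that $D^\top D$ is now a \emph{constant} matrix, which can be diagonalized once and for all. First I would start from the intermediate S-lemma characterization already established in the proof of Proposition~\ref{pro:SDP}: for a fixed $g$, the bound $\tau \ge \max_{\|\xi\| \le \rho_s} \|D\xi - c(g)\|^2$ holds if and only if there exists $\lambda \ge 0$ with
\[
\begin{bmatrix} \tau - \lambda \rho_s^2 - c(g)^\top c(g) & -c(g)^\top D \\ -D^\top c(g) & \lambda I - D^\top D \end{bmatrix} \succeq 0 \,.
\]
Because $D$ no longer depends on $g$, the lower-right block $\lambda I - D^\top D$ can be diagonalized by a single congruence. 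Since $D^\top D$ is symmetric and positive semidefinite, it admits an orthonormal eigenbasis, so there is an orthogonal $S$ (hence $S^\top S = I_{H_c}$) with $S^\top D^\top D S = {\rm diag}(\alpha)$, where $\alpha$ collects the eigenvalues of $D^\top D$; this is exactly the simultaneous diagonalization of $D^\top D$ and $I_{H_c}$ asserted in the statement.

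Next I would apply the congruence ${\rm diag}(1,S)$ to the block matrix above. Using $S^\top S = I$ and $S^\top D^\top D S = {\rm diag}(\alpha)$, the lower-right block becomes the diagonal matrix ${\rm diag}(\lambda - \alpha_\ell)$ and the off-diagonal block becomes the row vector $-c(g)^\top D S$; write $d_\ell := (c(g)^\top D S)_\ell$. Taking the Schur complement of the (now diagonal) lower-right block eliminates the $\xi$-coordinates and collapses the matrix inequality to the single scalar condition
\[
\tau \ge \lambda \rho_s^2 + c(g)^\top c(g) + \sum_{\ell \in [H_c]} \frac{d_\ell^2}{\lambda - \alpha_\ell} \,,
\]
valid where $\lambda > \alpha_\ell$ for all $\ell$. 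Minimizing $\tau$ therefore reduces to minimizing the right-hand side over $\lambda$ and $g \in \mathcal G$.

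Finally I would linearize the terms $d_\ell^2 / (\lambda - \alpha_\ell)$ by an epigraph reformulation: introduce $\nu_\ell$ with $\nu_\ell \ge d_\ell^2 / (\lambda - \alpha_\ell)$ and replace the objective by $c(g)^\top c(g) + {\bf 1}^\top \nu + \lambda \rho_s^2$, matching the stated objective. Each epigraph constraint is the hyperbolic constraint $d_\ell^2 \le \nu_\ell (\lambda - \alpha_\ell)$ with $\nu_\ell \ge 0$ and $\lambda - \alpha_\ell \ge 0$, which I would rewrite in rotated second-order cone form as $\left\| {\rm col}\big( 2 d_\ell, \, \nu_\ell + \alpha_\ell - \lambda \big) \right\| \le \nu_\ell - \alpha_\ell + \lambda$; squaring both sides recovers $4 d_\ell^2 \le 4 \nu_\ell (\lambda - \alpha_\ell)$, confirming equivalence. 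Since $2 d_\ell = (2 c(g)^\top D S)_\ell$ is the first entry of the displayed norm, this is exactly the constraint in the statement, and because every transformation preserves optimal values, the minima coincide and any minimizer $g^\star$ is shared.

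The main obstacle I anticipate is the boundary behavior at $\lambda = \alpha_\ell$, where the Schur complement step is not literally valid: there the matrix inequality forces $d_\ell = 0$, and the term $d_\ell^2 / (\lambda - \alpha_\ell)$ must be read as a limit. The rotated-cone formulation handles this gracefully, since it remains well defined and closed at $\lambda = \alpha_\ell$ (forcing $d_\ell = 0$ while letting $\nu_\ell$ absorb the limit), so the cleanest route is to argue the equivalence at the level of the closed feasible sets rather than through the open-domain Schur complement, exactly as in the derivation of~\cite{el1997robust} that the argument adapts. A minor secondary point is to verify that carrying the constant term $c(g)^\top c(g)$ and the affine dependence of $d_\ell$ on $g$ through these steps keeps the program jointly convex in $(g, \nu, \lambda)$ over $g \in \mathcal G$, which holds because $c(g)$ is affine in $g$ and the cone constraints are convex.
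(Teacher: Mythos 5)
Your proposal is correct and takes essentially the same route as the paper's proof: the S-lemma characterization inherited from Proposition~\ref{pro:SDP}, a congruence by ${\rm diag}(1,S)$ with an orthogonal $S$ diagonalizing $D^\top D$, and a Schur-complement reduction of the resulting LMI to the stated second-order cone constraints. The only difference is one of detail: you make explicit the epigraph variables $\nu_\ell$, the rotated-cone rewriting of the hyperbolic constraints $d_\ell^2 \le \nu_\ell(\lambda - \alpha_\ell)$, and the boundary case $\lambda = \alpha_\ell$ (where the generalized Schur complement forces $d_\ell = 0$), all of which the paper compresses into its final sentence.
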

	\begin{proof}
	For any fixed $g$, it follows from S-lemma \cite[\S 2.3]{Yakubovich:1971} that the inner maximization problem in~\eqref{eq:minmaxDeePC-penalty-rep2} can be equivalently reformulated as
		\begin{equation*}
			\begin{array}{l}
				\displaystyle \min_{\tau, \lambda}   \;\;  \tau \\
				{\rm s.t. }   \; \begin{bmatrix}
					\tau - \lambda \rho_s^2 - c(g)^\top c(g)  & -c(g)^\top D  \\
					-D^\top c(g) & \lambda I_{H_c} - D^\top D
				\end{bmatrix}  \succeq 0.
			\end{array}
		\end{equation*}
	Since there exists a nonsingular matrix $S$ that simultaneously diagonalizes $D^\top D$ and $I_{H_c} \in \mathbb R^{H_c \times H_c}$, the linear matrix inequality (LMI) constraint above is satisfied if and only if
	\begin{equation}\label{eq:LMI1}
	\begin{bmatrix}
					\tau - \lambda \rho_s^2 - c(g)^\top c(g)  & -c(g)^\top DS  \\
					-(DS)^\top c(g) & \lambda I_{H_c} - {\rm diag} (\alpha)
				\end{bmatrix}   \succeq 0.
	\end{equation}
	Note that such a matrix $S$ always exists.
    It then follows from Schur complement \cite[\S A.5.5]{Boyd:2004} that~\eqref{eq:LMI1} can be reduced to a set of second-order cone constraints.
	\end{proof}

    \section{Robustness Induced by Regularization}
    \label{sec:regl}

   Proposition~\ref{eq:frob} and Corollary~\ref{co:one_norm} show that when unstructured or column-wise uncertainties are considered, the min-max formulation in~\eqref{eq:minmaxDeePC-penalty} reduces to a minimization problem with an additional regularization term on $g$ (see~\eqref{eq:unstructure_1} and \eqref{eq:colwise2}). The robustness induced by different geometry of uncertainties can thus be interpreted as different types of regularization on~$g$. This is consistent with the observation that regularization is instrumental for ensuring good performance when the system is subject to disturbances (see also the discussion in Section~\ref{sec:intro}). 

    Notice that the 2-norm is considered in the objective functions of~\eqref{eq:unstructure_1} and \eqref{eq:colwise2}, which differ from the existing literature where quadratic costs are generally considered (see, e.g., \cite{coulson2019data,berberich2019data,huang2020quad}). For instance, in the spirit of regularized DeePC, one may consider quadratic costs on the input/output signals and a quadratic regularization on $g$ as
    \begin{equation}
		\begin{array}{cl}
			\displaystyle \mathop {{\rm{min}}}\limits_{g \in \mathcal G} & \| \hat U_{\rm F} g \|_R^2 + \| { \hat Y_{\rm F}g  - r} \|_Q^2 + \lambda_u \| \hat U_{\rm P} g - \hat u_{\rm ini} \|^2  \\
			& + \lambda_y \| \hat Y_{\rm P} g - \hat y_{\rm ini} \|^2 + \lambda_g \|g\|^2,
			\label{eq:quad_eqvl}
		\end{array}
	\end{equation}
	or a 1-norm regularization on $g$ as
	\begin{equation}
		\begin{array}{cl}
			\displaystyle \mathop {{\rm{min}}}\limits_{g \in \mathcal G} & \| \hat U_{\rm F} g \|_R^2 + \| { \hat Y_{\rm F}g  - r} \|_Q^2 + \lambda_u \| \hat U_{\rm P} g - \hat u_{\rm ini} \|^2  \\
			& + \lambda_y \| \hat Y_{\rm P} g - \hat y_{\rm ini} \|^2 + \lambda_g \|g\|_1,
			\label{eq:onenorm_eqvl}
		\end{array}
	\end{equation}
	where the set $\mathcal{G}$ is the same as that in~\eqref{eq:minmaxDeePC-penalty} to further robustify the input/output constraints. The following results show how~\eqref{eq:quad_eqvl} and \eqref{eq:onenorm_eqvl} are respectively related to the tractable formulations of robust DeePC~\eqref{eq:unstructure_1} and \eqref{eq:colwise2}, and thus can be considered as special cases of~\eqref{eq:minmaxDeePC-penalty}.

	\begin{theorem}[Robustness induced by quadratic regularization]\label{thm:quad_regl_robust}
	If $g^\star \neq 0$ is a minimizer of~\eqref{eq:quad_eqvl}, then $g^\star$ minimizes \eqref{eq:unstructure} (and equivalently, \eqref{eq:unstructure_1}) with
	\begin{equation}\label{eq:lambda_g}
	\rho_u =  \begin{cases}
		\frac{\lambda_g \sqrt{\left\| g^\star\right\|^2+1}}{\left\|A^{(0)}g^\star - b^{(0)}\right\|} & \text{if $A^{(0)}g^\star \ne b^{(0)}$} \\
		\lambda_g \sqrt{\left\| g^\star\right\|^2+1} & \text{otherwise.}
	\end{cases}
    \end{equation}
    Moreover, $\rho_u$ in \eqref{eq:lambda_g} is strictly monotonically increasing with $\lambda_g$ chosen in \eqref{eq:quad_eqvl}.
	\end{theorem}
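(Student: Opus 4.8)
The plan is to reduce both problems to the common data $A^{(0)},b^{(0)}$ and then match their first-order optimality conditions over the convex set $\mathcal G$. First I would note that, by the very definition of $A^{(0)}$ and $b^{(0)}$, the quadratic fitting cost in \eqref{eq:quad_eqvl} is exactly $\|A^{(0)}g-b^{(0)}\|^2$, so that \eqref{eq:quad_eqvl} reads $\min_{g\in\mathcal G}\|A^{(0)}g-b^{(0)}\|^2+\lambda_g\|g\|^2$; by Proposition~\ref{eq:frob} it then suffices to certify that $g^\star$ also solves $\min_{g\in\mathcal G}\|A^{(0)}g-b^{(0)}\|+\rho_u\sqrt{\|g\|^2+1}$, i.e.\ \eqref{eq:unstructure_1}. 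Writing $f$ and $\phi$ for these two objectives, both are convex and $\mathcal G$ is convex, so optimality of $g^\star$ is equivalent to the existence of a (sub)gradient $s$ of the objective at $g^\star$ with $-s\in N_{\mathcal G}(g^\star)$, where $N_{\mathcal G}(g^\star)$ denotes the normal cone of $\mathcal G$ at $g^\star$.

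The crux is that $N_{\mathcal G}(g^\star)$ is invariant under multiplication by positive scalars. In the generic case $A^{(0)}g^\star\neq b^{(0)}$ both $f$ and $\phi$ are differentiable at $g^\star$, with $\nabla f(g^\star)=2(A^{(0)})^\top(A^{(0)}g^\star-b^{(0)})+2\lambda_g g^\star$ and $\nabla\phi(g^\star)=\tfrac{1}{\alpha}(A^{(0)})^\top(A^{(0)}g^\star-b^{(0)})+\tfrac{\rho_u}{\beta}g^\star$, where $\alpha:=\|A^{(0)}g^\star-b^{(0)}\|$ and $\beta:=\sqrt{\|g^\star\|^2+1}$. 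Substituting $\rho_u=\lambda_g\beta/\alpha$ from \eqref{eq:lambda_g} forces $\rho_u/\beta=\lambda_g/\alpha$ and collapses the two expressions into the exact identity
\[
\nabla\phi(g^\star)=\frac{1}{2\alpha}\,\nabla f(g^\star).
\]
Since $g^\star$ minimizes \eqref{eq:quad_eqvl} we have $-\nabla f(g^\star)\in N_{\mathcal G}(g^\star)$, and positive-scaling invariance of the cone (using $\alpha>0$) immediately yields $-\nabla\phi(g^\star)\in N_{\mathcal G}(g^\star)$, certifying optimality of $g^\star$ for \eqref{eq:unstructure_1}. The boundary case $A^{(0)}g^\star=b^{(0)}$ would be handled via the subdifferential of the nonsmooth term, namely $\{(A^{(0)})^\top v:\|v\|\le1\}$: selecting $v=0$ and using $\rho_u=\lambda_g\beta$ leaves the subgradient $\rho_u g^\star/\beta=\lambda_g g^\star=\tfrac12\nabla f(g^\star)$, and the same cone argument applies. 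Here the hypothesis $g^\star\neq0$ is what makes $\rho_u$ well defined and pins down the multiplier.

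For the monotonicity claim, the plan is a sandwiching/uniqueness argument rather than explicit differentiation. Both objectives are strictly convex (the terms $\lambda_g\|g\|^2$ and $\rho_u\sqrt{\|g\|^2+1}$ have positive-definite Hessians for $\lambda_g,\rho_u>0$), so \eqref{eq:quad_eqvl} and \eqref{eq:unstructure_1} each have a unique minimizer, and a standard exchange inequality along each regularization path shows that if two weights yield distinct minimizers then the larger weight yields the strictly smaller minimizer norm. Now fix $\lambda_g^{(1)}<\lambda_g^{(2)}$ with minimizers $g_1^\star,g_2^\star$ and associated $\rho_u^{(1)},\rho_u^{(2)}$. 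If $g_1^\star=g_2^\star$, then \eqref{eq:lambda_g} gives $\rho_u^{(2)}>\rho_u^{(1)}$ directly, since only the explicit factor $\lambda_g$ changes. If $g_1^\star\neq g_2^\star$, the quadratic path gives $\|g_1^\star\|>\|g_2^\star\|$; assuming $\rho_u^{(1)}\ge\rho_u^{(2)}$ for contradiction, uniqueness rules out $\rho_u^{(1)}=\rho_u^{(2)}$, while $\rho_u^{(1)}>\rho_u^{(2)}$ would make the conic path give $\|g_1^\star\|<\|g_2^\star\|$, a contradiction. Hence $\rho_u^{(2)}>\rho_u^{(1)}$ in all cases.

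I expect the main obstacle to be the careful treatment of the nonsmooth/boundary behaviour: verifying the first-order condition $0\in\partial f(g^\star)+N_{\mathcal G}(g^\star)$ (valid here because $f$ is finite everywhere, so a relative-interior qualification holds) and handling $A^{(0)}g^\star=b^{(0)}$ where $\phi$ is not differentiable. A secondary delicate point is proving the strict norm-monotonicity along each path and correctly splitting the monotonicity argument into the $g_1^\star=g_2^\star$ and $g_1^\star\neq g_2^\star$ cases, since an active constraint in $\mathcal G$ can freeze the minimizer while $\rho_u$ still moves. As a backup for the unconstrained regime I would differentiate the ridge optimality condition directly to obtain the sensitivity relations $\tfrac{d}{d\lambda_g}\|g^\star\|^2=-2P$ and $\tfrac{d}{d\lambda_g}\|A^{(0)}g^\star-b^{(0)}\|^2=2\lambda_g P$ with $P:=(g^\star)^\top((A^{(0)})^\top A^{(0)}+\lambda_g I)^{-1}g^\star>0$, and reduce $\tfrac{d}{d\lambda_g}\rho_u^2>0$ to a spectral inequality that closes thanks to the ``$+1$'' inside $\sqrt{\|g\|^2+1}$.
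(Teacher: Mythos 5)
Your proposal is correct and takes essentially the same route as the paper's proof: the core step — showing that with $\rho_u$ as in \eqref{eq:lambda_g} the first-order optimality condition of \eqref{eq:unstructure_1} at $g^\star$ is a positive rescaling of that of \eqref{eq:quad_eqvl} — is exactly what the paper does by constructing rescaled KKT multipliers $\mu^\star = \mu_Q^\star/(2\|A^{(0)}g^\star - b^{(0)}\|)$ for its explicit representation of $\mathcal G$; your normal-cone formulation states the same mechanism abstractly (positive-scaling invariance of $N_{\mathcal G}(g^\star)$), and your monotonicity argument is the paper's exchange-inequality argument with the same case split on $g_1^\star = g_2^\star$ versus $g_1^\star \ne g_2^\star$. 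If anything, your explicit appeal to strict convexity of both objectives (the Hessian of $\sqrt{\|g\|^2+1}$ being positive definite) to justify uniqueness and the strictness of the exchange inequalities is slightly more careful than the paper's presentation.
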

	\begin{proof}
	See Appendix~\ref{Appen:A}.
	\end{proof}
	
	Theorem~\ref{thm:quad_regl_robust} shows that by incorporating unstructured uncertainties, robust DeePC reduces to regularized DeePC with quadratic regularization where the input/output constraints are also robustified.
	Note that Theorem~\ref{thm:quad_regl_robust} extends the results in \cite{huang2020quad} by further incorporating robust constraints. It highlights the importance of the regularization from a robust optimization perspective, namely, quadratic regularization of $g$ in \eqref{eq:quad_eqvl} is equivalent to a robust reformulation as in \eqref{eq:unstructure} with an implicit bound for the uncertainty set.
	Hence, by choosing a sufficiently large $\lambda_g$ in~\eqref{eq:quad_eqvl}, a sufficiently large bound $\rho_u$ can be obtained in~\eqref{eq:unstructure} such that Assumption~\ref{assum:D} holds and Theorem~\ref{thm:real_cost} applies.
	
	It has been observed in \cite{coulson2019data} and \cite{dorfler2021bridging} that the 1-norm regularization promotes sparsity, selecting the most informative noisy trajectories to predict the future behaviour. The following result provides new interpretations of 1-norm regularization in DeePC from a min-max optimization perspective.
	
	\begin{theorem}[Robustness induced by 1-norm regularization]
	\label{thm:one_norm}
	If $g^\star \in \mathbb{R}^{H_c}$ is a minimizer of~\eqref{eq:onenorm_eqvl},
	then $g^\star$ minimizes \eqref{eq:column} with $\mathcal D_{\rm gcol} = \{ (\Delta_A,\Delta_b) \ | \ \|(\Delta_A)_i\| \le \rho_c, \forall i \in [H_c],\Delta_b \le \rho_c \}$ (and equivalently, \eqref{eq:colwise2}), where
	\begin{equation}\label{eq:lambda_g_onenorm}
	\rho_c =  \begin{cases}
		\frac{\lambda_g}{2\left\|A^{(0)}g^\star - b^{(0)}\right\|} & \text{if $A^{(0)}g^\star \ne b^{(0)}$} \\
		\lambda_g/2 & \text{otherwise.}
	\end{cases}
    \end{equation}
	Moreover, $\rho_c$ in \eqref{eq:lambda_g_onenorm} is strictly monotonically increasing with $\lambda_g$ chosen in \eqref{eq:onenorm_eqvl}.
	\end{theorem}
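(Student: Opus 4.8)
The plan is to show that both problems collapse to regularized least-residual problems sharing the same minimizer, and then to establish strict monotonicity of the induced weight. First I would rewrite the objective of~\eqref{eq:onenorm_eqvl}: recalling the definitions of $A^{(0)}$ and $b^{(0)}$, a direct expansion gives $\|A^{(0)}g-b^{(0)}\|^2 = \|\hat U_{\rm F}g\|_R^2 + \|\hat Y_{\rm F}g-r\|_Q^2 + \lambda_u\|\hat U_{\rm P}g-\hat u_{\rm ini}\|^2 + \lambda_y\|\hat Y_{\rm P}g-\hat y_{\rm ini}\|^2$, so~\eqref{eq:onenorm_eqvl} is exactly $\min_{g\in\mathcal G}\|A^{(0)}g-b^{(0)}\|^2 + \lambda_g\|g\|_1$. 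On the other side, the prescribed set is $\mathcal D_{\rm col}$ of Corollary~\ref{co:one_norm} with $(\rho_A)_i=\rho_c$ for all $i$ and $\rho_b=\rho_c$, so by that corollary \eqref{eq:column} is equivalent to \eqref{eq:colwise2}, which here reads $\min_{g\in\mathcal G}\|A^{(0)}g-b^{(0)}\| + \rho_c\|g\|_1 + \rho_c$; since $\rho_c$ enters only as an additive constant, its minimizer coincides with that of $\min_{g\in\mathcal G}\|A^{(0)}g-b^{(0)}\| + \rho_c\|g\|_1$. The task thus reduces to matching the minimizers of the ``squared'' and the ``square-root'' regularized problems.

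Second, I would compare first-order optimality conditions. Writing $\phi(g)=\|A^{(0)}g-b^{(0)}\|$, both problems are convex, so $g^\star$ is optimal if and only if $0$ lies in the subdifferential of the objective plus the normal cone $N_{\mathcal G}(g^\star)$. For the squared problem the smooth part has gradient $\nabla(\phi^2)(g^\star)=2\phi(g^\star)\,\nabla\phi(g^\star)$, so optimality reads $0\in 2\phi(g^\star)\nabla\phi(g^\star)+\lambda_g\,\partial\|g^\star\|_1+N_{\mathcal G}(g^\star)$. When $A^{(0)}g^\star\neq b^{(0)}$, dividing this inclusion by the positive scalar $2\phi(g^\star)$ and using that $N_{\mathcal G}(g^\star)$ is a cone (hence invariant under positive scaling) yields $0\in\nabla\phi(g^\star)+\tfrac{\lambda_g}{2\phi(g^\star)}\partial\|g^\star\|_1+N_{\mathcal G}(g^\star)$, which is precisely the optimality condition of the square-root problem with weight $\rho_c=\lambda_g/(2\|A^{(0)}g^\star-b^{(0)}\|)$ as in~\eqref{eq:lambda_g_onenorm}; convexity then promotes stationarity to global optimality. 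When $A^{(0)}g^\star=b^{(0)}$ the smooth gradient vanishes, giving $0\in\lambda_g\partial\|g^\star\|_1+N_{\mathcal G}(g^\star)$; since $g^\star$ then globally minimizes $\phi$ we have $0\in\partial\phi(g^\star)$, and because $N_{\mathcal G}(g^\star)$ is a cone I can rescale the normal-cone multiplier to absorb any positive weight, so $g^\star$ stays optimal for every $\rho_c>0$ and the value $\lambda_g/2$ is merely a convention. I should also note that along any segment joining two optimizers the convex quadratic $\phi^2$ must be affine, which forces $A^{(0)}g^\star$ to be constant across the optimal set, so $\rho_c$ is well defined.

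Finally, for the monotonicity I would reparametrize by $\rho_c$ and study the square-root problem $\min_{g\in\mathcal G}\phi(g)+\rho_c\|g\|_1$ with minimizer $\tilde g(\rho_c)$ and residual $\tilde\phi(\rho_c)=\phi(\tilde g(\rho_c))$. Comparing the optimality inequalities at two weights $\rho_1<\rho_2$ and adding them gives $\|\tilde g(\rho_2)\|_1\le\|\tilde g(\rho_1)\|_1$, and substituting this back into one inequality shows $\tilde\phi$ is non-decreasing in $\rho_c$. The equivalence established above is exactly the identity $\lambda_g=2\rho_c\,\tilde\phi(\rho_c)$; being the product of the strictly increasing factor $\rho_c$ and the non-decreasing positive factor $\tilde\phi(\rho_c)$, the right-hand side is a strictly increasing function of $\rho_c$, whence its inverse $\rho_c(\lambda_g)$ is strictly increasing, as claimed. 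I expect this last step to be the main obstacle: unlike the equivalence, which is a one-line subgradient-rescaling argument, here both $\lambda_g$ and the residual $\|A^{(0)}g^\star-b^{(0)}\|$ grow with regularization, so inspecting $\rho_c=\lambda_g/(2\phi(g^\star))$ directly is inconclusive; the cleanest route is the reparametrization by $\rho_c$, which recasts the claim as monotonicity of the product $2\rho_c\,\tilde\phi(\rho_c)$.
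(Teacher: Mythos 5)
Your reduction of \eqref{eq:onenorm_eqvl} and \eqref{eq:column} to the two compact problems $\min_{g\in\mathcal G}\|A^{(0)}g-b^{(0)}\|^2+\lambda_g\|g\|_1$ and $\min_{g\in\mathcal G}\|A^{(0)}g-b^{(0)}\|+\rho_c\|g\|_1$, and your matching of their minimizers, is correct and is essentially the paper's own argument in subdifferential language: where the paper (Appendix~\ref{Appen:B}) introduces the auxiliary variable $\nu$ with $-\nu\le g\le\nu$ and rescales explicit KKT multipliers as in Appendix~\ref{Appen:A}, you rescale the optimality inclusion by the positive scalar $2\|A^{(0)}g^\star-b^{(0)}\|$ and use positive homogeneity of the normal cone $N_{\mathcal G}(g^\star)$; these are the same mechanism, and your handling of the degenerate case $A^{(0)}g^\star=b^{(0)}$ is also fine.

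The gap is in the monotonicity step. Your inversion argument needs the residual $\tilde\phi(\rho_c)=\|A^{(0)}\tilde g(\rho_c)-b^{(0)}\|$ to be a well-defined, single-valued function of $\rho_c$, i.e., all minimizers of the square-root problem at a given weight must share the same residual. Your well-definedness remark covers only the squared problem: there, constancy of the objective on the convex optimal set forces the convex quadratic $\|A^{(0)}g-b^{(0)}\|^2$ to be affine along it, hence $A^{(0)}g$ constant. The analogous claim fails for $\phi$ itself, because $\|A^{(0)}g-b^{(0)}\|$ can be affine along a segment without $A^{(0)}g$ being constant. Concretely, take $H_c=1$, $A^{(0)}=1$, $b^{(0)}=1$, $\mathcal G=\mathbb R$: at $\rho_c=1$ every $g\in[0,1]$ minimizes $|g-1|+\rho_c|g|$, with residuals filling all of $[0,1]$, so $\tilde\phi(1)$ is not defined; correspondingly, every $\lambda_g\in(0,2)$ yields the squared-problem minimizer $g^\star=1-\lambda_g/2$ with residual $\lambda_g/2$, hence $\rho_c=1$ for \emph{all} such $\lambda_g$, the map $\lambda_g\mapsto\rho_c$ is constant, and the ``$\rho_1=\rho_2$'' case of your argument cannot be excluded. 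This is exactly why the paper's proof invokes uniqueness of the solutions of \eqref{eq:ReglDeePC_Onenorm} and \eqref{eq:ReglDeePC_Onenorm_conic} via \cite[Lemmas~3~and~4]{tibshirani2013lasso}, justified by the genericity assumption that the entries of $A^{(0)}$ are drawn from a continuous distribution; some such non-degeneracy is unavoidable, since without it the strict monotonicity claim itself fails, as the example shows. With that uniqueness imported, your reparametrization $\lambda_g=2\rho_c\,\tilde\phi(\rho_c)$ (a strictly increasing positive factor times a non-decreasing positive factor, with the zero-residual case handled separately) does close the argument and is arguably cleaner than the paper's pairwise comparison of optimality inequalities; without it, your final step has a genuine hole.
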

	\begin{proof}
	See Appendix~\ref{Appen:B}.
	\end{proof}
	
	Theorem~\ref{thm:one_norm} shows that the 1-norm regularization in \eqref{eq:onenorm_eqvl} provides robustness to column-wise uncertainties in the data matrices and the initial trajectory.
	One can further conclude that a sufficiently large $\lambda_g$ in~\eqref{eq:ReglDeePC_Onenorm} leads to a sufficiently large $\rho_c$ in~\eqref{eq:column} such that Assumption~\ref{assum:D} is satisfied and Theorem~\ref{thm:real_cost} applies.
	The 1-norm regularization on $g$ may be more appropriate than the quadratic regularization when Page matrices or trajectory matrices are used as predictors, because the 1-norm regularization implies a column-wise uncertainty set that assumes no correlation among different columns. Moreover, \eqref{eq:onenorm_eqvl} can be solved by quadratic programming since it can be reformulated as~\eqref{eq:reform Onenorm_Regl_DeePC1} in the proof.

	\section{Simulation Results}
	\label{sec:simulation}
	
	\begin{figure}
\begin{center}
\includegraphics[width=8.6cm]{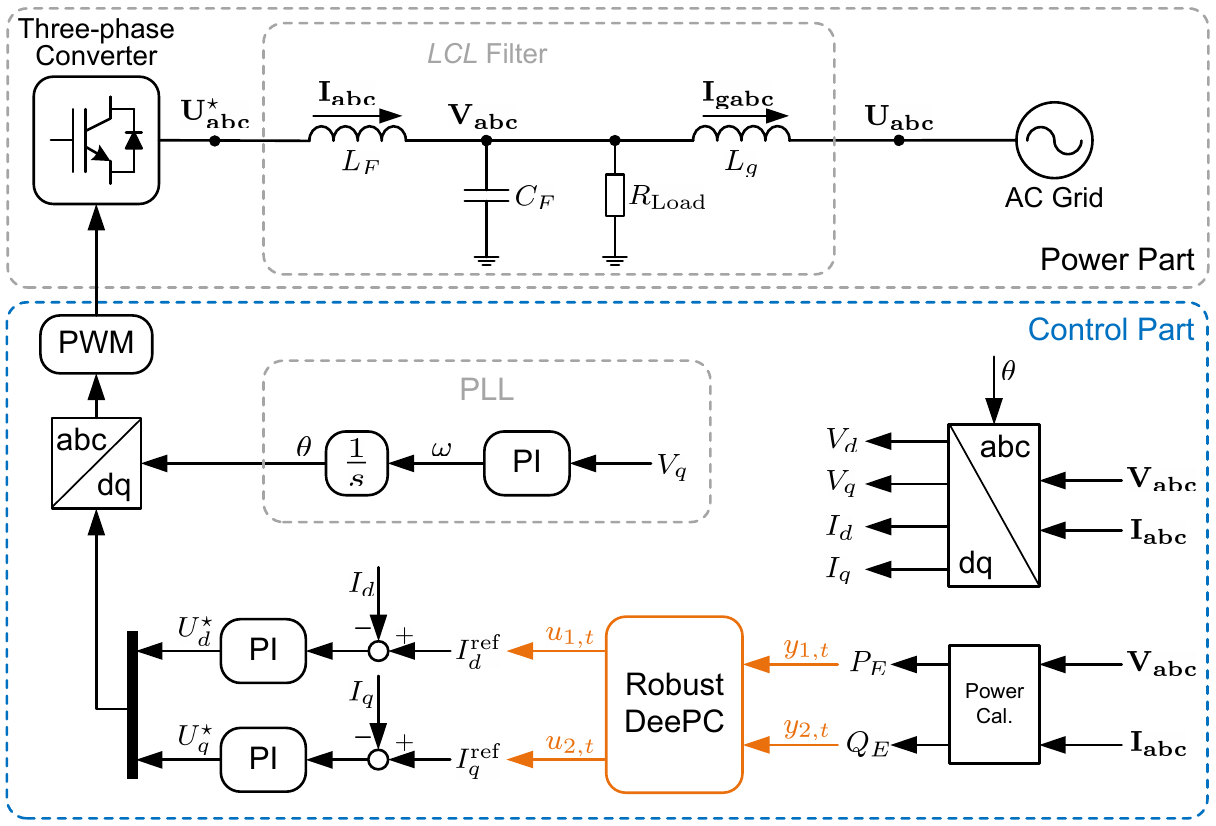}
\vspace{-2mm}
\caption{Control scheme for a grid-connected converter.}
\vspace{0mm}
\label{Fig_Converter}
\end{center}
\end{figure}

    In this section, we provide simulation results to illustrate the effectiveness of the proposed robust DeePC algorithm.

    \subsection{Simulation Case Study}

    We consider a grid-connected three-phase power converter represented as in Fig.~\ref{Fig_Converter}, and apply robust DeePC to regulate the active and reactive power. Conventionally, power regulation of grid-connected converters can be achieved by PI controllers. However, the power grid is ever-changing and in general unknown from the perspective of a converter, which significantly affects the performance of (fixed) PI controllers and may even result in instabilities \cite{huang2019grid}.
    As a remedy, we employ robust DeePC to perform model-free, robust, and optimal power control for converters. Thanks to the data-centric representation, input/output data of the converter can be collected to capture the system dynamics, predict the future behaviors, and calculate optimal control sequences, without assuming a specific model.

    As shown in Fig.~\ref{Fig_Converter}, we choose the active power $P_E = V_dI_d+V_qI_q$ and the reactive power $Q_E = V_qI_d - V_dI_q$ to be the output signals of the converter system, and the robust DeePC algorithm provides optimal control inputs for the current references $I_d^{\rm ref}$, $I_q^{\rm ref}$. Note that $u_{i,t}$ is the $i{\rm th}$ element of $u_t$ and $y_{i,t}$ is the $i{\rm th}$ element of $y_t$ in Fig.\ref{Fig_Converter}. The power regulation is achieved by setting the reference vector in \eqref{eq:minmaxDeePC-penalty} to $r = I_N \otimes {\rm col}(P_0,Q_0)$, where $P_0$ is the active power reference value and $Q_0$ is the reactive power reference value.

    Throughout our simulations, we use the nonlinear converter model; the same trends in the results are, however, also observed if one uses a linearization. We use the base values $f_{\rm b} = 50{\rm Hz}$, $S_{\rm b} = 1.5{\rm kW}$, and $U_{\rm b} = 280{\rm V}$ for per-unit calculations of the converter system. The {LCL} parameters are: $L_F = 0.05{\rm (p.u.)}$ (with resistance $R_F = 0.01{\rm (p.u.)}$), $L_g = 0.05{\rm (p.u.)}$ (with resistance $R_g = 0.01{\rm (p.u.)}$), and capacitor $C_F = 0.05{\rm (p.u.)}$. The local load is $R_{\rm Load} = 2{\rm (p.u.)}$. The PI parameters of the current control are $\{0.2,10\}$. The sampling time for the robust DeePC algorithm is $1{\rm ms}$. The parameters for the robust DeePC are: $T_{\rm ini} = 5$, $N = 25$, $T = 120$, $R = I_{2N}$, $Q = 10^{5}I_{2N}$, $\lambda_u = \lambda_y = 10^5$. The control horizon is $k=N$. 
    In what follows, we assume that the output data (in the Hankel matrices and the initial trajectory) are corrupted by measurement noise, while the input data are known exactly, and we show the realized trajectory of the system outputs (without showing the noise).
    Before the robust DeePC algorithm is activated, persistently exciting white noise signals are injected into the system through $I_d^{\rm ref}$ and $I_d^{\rm ref}$ for $0.12{\rm s}$ to collect the input/output data.  Fig.~\ref{Fig_exciting_signals} plots the input/output responses during this data-collection period, which implies that the active power and the reactive power are perturbed, but within an acceptable range.

\begin{figure}
\begin{center}
\includegraphics[width=8.6cm]{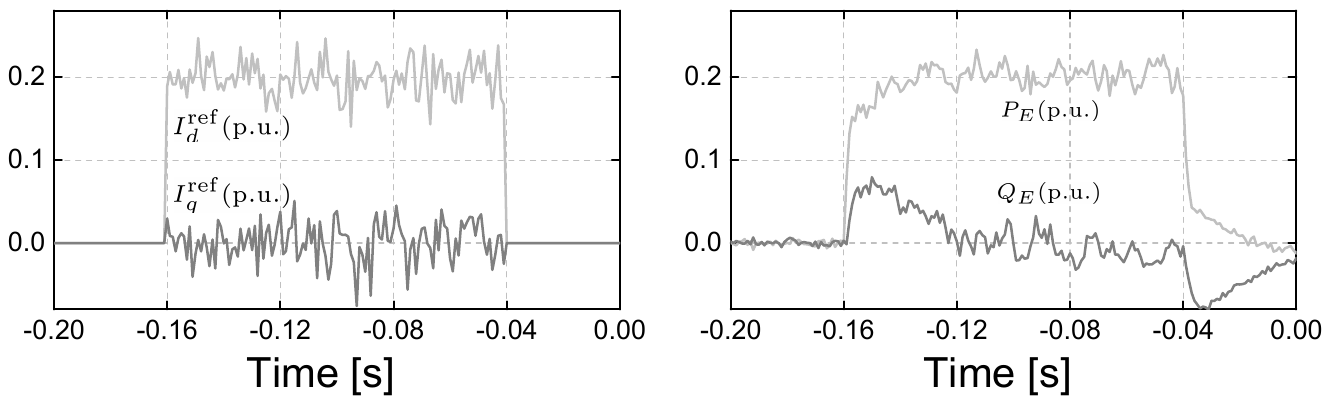}
\vspace{-2mm}
\caption{Input and output responses during the data-collection period.}
\vspace{-2mm}
\label{Fig_exciting_signals}
\end{center}
\end{figure}

    \subsection{Comparison of Conservativeness}

\begin{figure}
\begin{center}
\includegraphics[width=8.0cm]{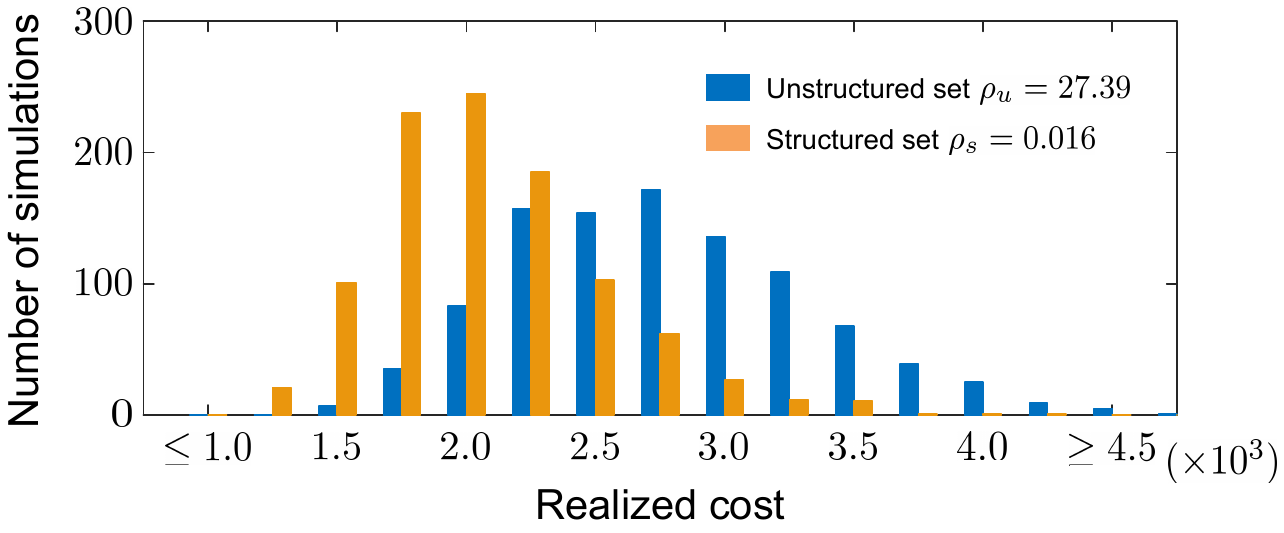}
\vspace{-2mm}
\caption{Comparison of robust DeePC with structured set and unstructured set when the uncertainty samples are drawn from a structured set. Hankel matrices are used as predictors.}
\vspace{-4mm}
\label{Fig_hist1}
\end{center}
\end{figure}

    In this subsection, we compare the performance and conservativeness of robust DeePC when incorporating different uncertainty sets.
    We activate the robust DeePC algorithm at $t=0{\rm s}$ with $P_0 = 0$ and $Q_0 = 0$ and change $P_0$ from $0$ to $0.1{\rm (p.u.)}$ at $t=0.2{\rm s}$. We consider the realized cost of applying the optimal control sequence obtained by solving~\eqref{eq:minmaxDeePC-penalty} at $t=0.2{\rm s}$ in open loop for the whole horizon $N$. Though the realized cost is related to the optimization cost in Theorem~\ref{thm:real_cost}, it is still unclear if a smaller (albeit tight) uncertainty set in robust DeePC leads to a better average realized cost.

    We start with a study where Hankel matrices are used as predictors. Theoretical intuition from Section~\ref{sec:tra} suggests that when Hankel matrices are used, robust DeePC with structured uncertainties should be less conservative than that with unstructured uncertainties. To test this intuition, we assume that the output data $\hat y^{\rm d}$ and $\hat y_{\rm ini}$ are affected by uncertainties residing in the structured set with $\rho_s = 0.016$, and draw 1000 samples that are uniformly-distributed over this structured set. Fig.~\ref{Fig_hist1} plots the corresponding realized costs of applying robust DeePC with the structured set (by solving~\eqref{eq:sdp} where $\rho_s=0.016$) and the unstructured set (by solving~\eqref{eq:unstructure_1} where $\rho_u=27.39$, the smallest value such that the unstructured set contains the structured set), respectively.
    Fig.~\ref{Fig_hist1} shows that robust DeePC with structured set performs better than that with unstructured set, which confirms our intuition.
    Realized cost becomes even higher with the column-wise and interval uncertainty sets when scaled to contain the structured uncertainty set (data not shown).

\begin{figure}
\begin{center}
\includegraphics[width=8.0cm]{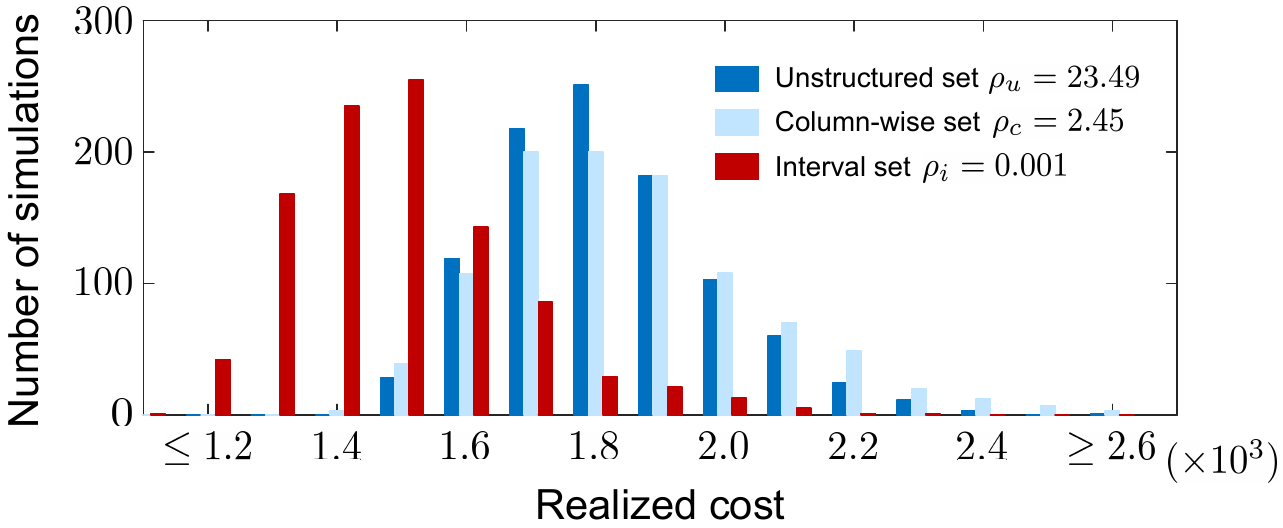}
\vspace{-2mm}
\caption{Comparison of robust DeePC with interval set, column-wise set, and unstructured set when the uncertainty samples are drawn from a interval set. Page matrices are used as predictors.}
\vspace{-4mm}
\label{Fig_hist2}
\end{center}
\end{figure}

    The analysis in Section~\ref{sec:tra} indicates that column-wise set and interval set are more appropriate choices when Page matrices are used.
    To validate this, we next consider Page matrices as predictors and compare the performance of unstructured set, column-wise set, and interval set. We collect a longer input/output trajectory such that the constructed Page matrices have the same dimensions as the Hankel matrices. We first assume that the output data is affected by uncertainties residing in an interval set with the upper and lower bounds for the output signals being $\pm \rho_i$ ($\rho_i=0.001$).
    We consider 1000 samples uniformly-distributed over this interval set and plot in Fig.~\ref{Fig_hist2} the corresponding realized costs of applying robust DeePC with respectively the same interval set (by solving~\eqref{eq:interval_QP}), the column-wise set (by solving~\eqref{eq:colwise2} where for all $k \in [H_c]$: $(\rho_A)_k = \rho_c=2.45$), and the unstructured set (by solving~\eqref{eq:unstructure_1} where $\rho_u=23.49$); we again choose the smallest $\rho_c$ and $\rho_u$ such that the considered interval set is contained in the column-wise set and the unstructured set.
    Fig.~\ref{Fig_hist2} shows that robust DeePC with the interval set achieves better average and worst-case performance than the other two methods, because the interval set is tight in this case. However, the cases with the column-wise set did not show superior performance over the unstructured set, as the column-wise set is also conservative under the above setting.

\begin{figure}
\begin{center}
\includegraphics[width=8.0cm]{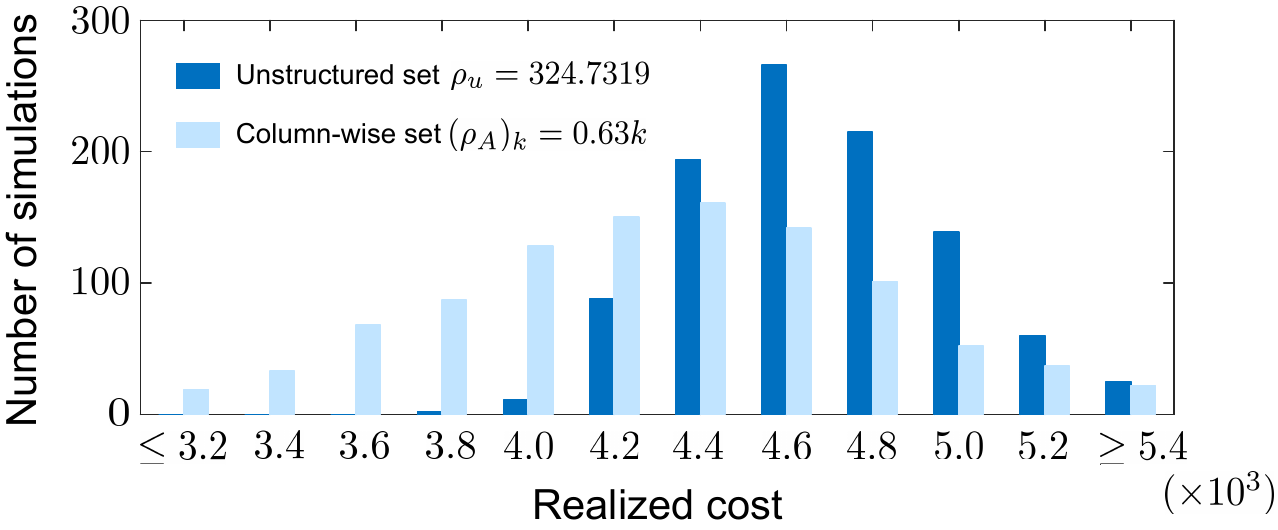}
\vspace{-2mm}
\caption{Comparison of robust DeePC with column-wise set and unstructured set when the uncertainty samples are drawn from a column-wise set. Trajectory matrices are used as predictors.}
\vspace{-4mm}
\label{Fig_hist3}
\end{center}
\end{figure}

    One advantage of incorporating column-wise uncertainties in robust DeePC is that one can compactly consider different bounds for different columns in the data matrices. To demonstrate this, we use trajectory matrices as predictors and test the performance of robust DeePC when the columns in the trajectory matrices are subject to different levels of uncertainties, which could be the case when the columns are from different experiments. We consider 1000 uncertainty samples uniformly-distributed over the column-wise set~\eqref{eq:column_wise_set} where for all $k \in [H_c]$: $(\rho_A)_k = 0.63k$ and $\rho_b = 0.63H_c$. Fig.~\ref{Fig_hist3} shows the performance of robust DeePC with respectively the considered column-wise set (by solving~\eqref{eq:colwise2}) and the unstructured set that contains the column-wise set (by solving~\eqref{eq:unstructure_1} where $\rho_u=324.73$). Under this setting, we observe that robust DeePC with the column-wise set has a better average performance than that with the unstructured set, thanks to the tighter uncertainty representation.

    The average runtime of these simulations (on an Intel Core i7~9750H CPU with 16GB~RAM) is shown in Table~\ref{tab:runtime}.
    As expected, it takes significantly more time to solve the semi-definite program in \eqref{eq:sdp}.

    \begin{table}
        \renewcommand\arraystretch{1.6}
		\centering
		\caption{Average runtime when applying different uncertainty sets.}
		\label{tab:runtime}
		\begin{tabular}{c|cc}
			\hline
			Uncertainty sets  & Tractable formulations & Average runtime  \\ \hline
			Unstructured set & Eq.~\eqref{eq:unstructure_1} & $0.0146{\rm s}$ \\
			Column-wise set & Eq.~\eqref{eq:colwise2} & $0.0269{\rm s}$ \\
			Interval set
			& Eq.~\eqref{eq:interval_QP} with \eqref{eq:interval_bound} & $0.1063{\rm s}$ \\
			Structured set
			& Eq.~\eqref{eq:sdp} with \eqref{eq:sdp_Dc} & $4.3234{\rm s}$ \\ \hline
		\end{tabular}
		\end{table}

    \subsection{Comparison of On-line Performance}

    In the previous section, we consider uncertainty samples that are uniformly-distributed over some prescribed uncertainty set to test the conservativeness of robust DeePC. However, in practice, measurement noise may be better represented by white noise.
    Hence, for a more realistic comparison, we then turn to band-limited Gaussian white noise~\cite{whitenoise} in the output measurements and test the on-line performance of robust DeePC under two different noise levels: 1) noise power: $4\times 10^{-10} {\rm (p.u.)}$; 2) noise power: $1.6\times 10^{-7} {\rm (p.u.)}$. Two example sequences are given in Fig.~\ref{Fig_noise_examples} to illustrate the range of uncertainties with different noise power.

    In what follows, we consider Hankel matrices as predictors. After the robust DeePC algorithm is activated at $t = 0{\rm s}$ with $P_0 = 0$ and $Q_0 = 0$, we change $P_0$ from $0$ to $0.1{\rm (p.u.)}$ at $t = 0.2{\rm s}$ and back to $0$ at $t = 0.4{\rm s}$. Fig.~\ref{Fig_hist4}~(a) plots the active power responses of applying robust DeePC with different uncertainty sets, where the noise power is $4\times 10^{-10} {\rm (p.u.)}$. For the interval set, we consider the upper and lower bounds of the output signals to be $\pm 0.001$ (compare Fig.~\ref{Fig_noise_examples}~(a)), then choose the smallest $\rho_c$ such that the interval set is contained in the column-wise set; we choose the smallest $\rho_s$ such that the structured set contains the interval uncertainty $[-0.001,0.001]$ in the outputs; finally, we choose the smallest $\rho_u$ such that the unstructured set contains the structured set.
    In all cases, the converter has satisfactory tracking performance. Moreover, the performance is improved by reducing the control horizon from $25$ to $5$, thanks to the faster feedback.

    The above simulations with $k=25$ (i.e., the whole input sequence is applied, in line with Theorem~\ref{thm:real_cost}) are repeated 100 times with different data sets to construct the Hankel matrices and different random seeds to generate the measurement noise. The histogram in Fig.~\ref{Fig_hist4}~(b) shows the realized input/output cost from $0$ to $0.6{\rm s}$ (i.e., $\sum_{i = 0}^{600} \| u_{{\rm sim},i} \|^2_R + \| y_{{\rm sim},i} - r_{{\rm sim},i} \|^2_Q$ where $u_{{\rm sim},i}$, $y_{{\rm sim},i}$, and $r_{{\rm sim},i}$ are respectively the input, output, and reference data at time $i$ obtained from the simulations). All the cases have satisfactory performance, and the cases with the structured set and the interval set have better average performance than the cases with the unstructured set or the column-wise set, consistent with the analysis in Section~\ref{sec:tra}.

\begin{figure}
\begin{center}
\includegraphics[width=8.6cm]{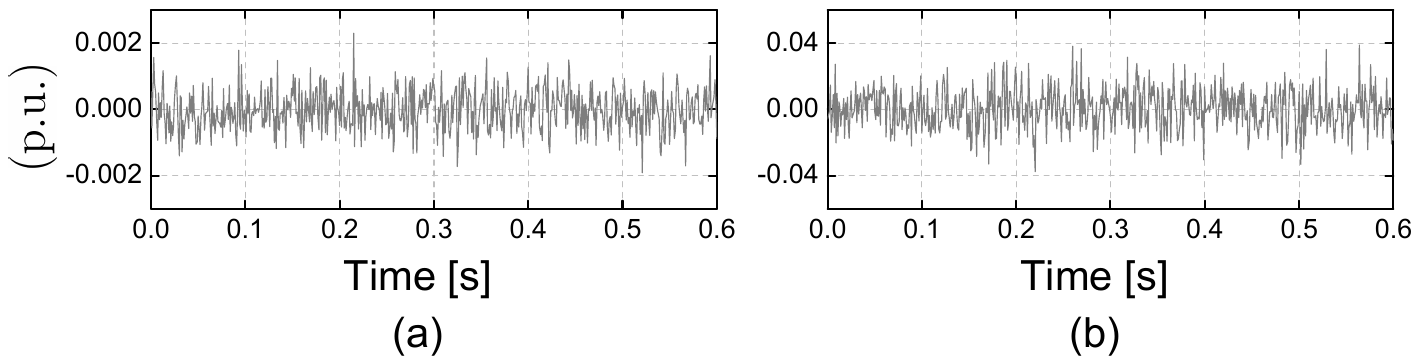}
\vspace{-2mm}
\caption{Two example sequences of white noise signals. (a) Noise power: $4\times 10^{-10} {\rm (p.u.)}$. (b) noise power: $1.6\times 10^{-7} {\rm (p.u.)}$.}
\vspace{-4mm}
\label{Fig_noise_examples}
\end{center}
\end{figure}

\begin{figure}
\begin{center}
\includegraphics[width=8.6cm]{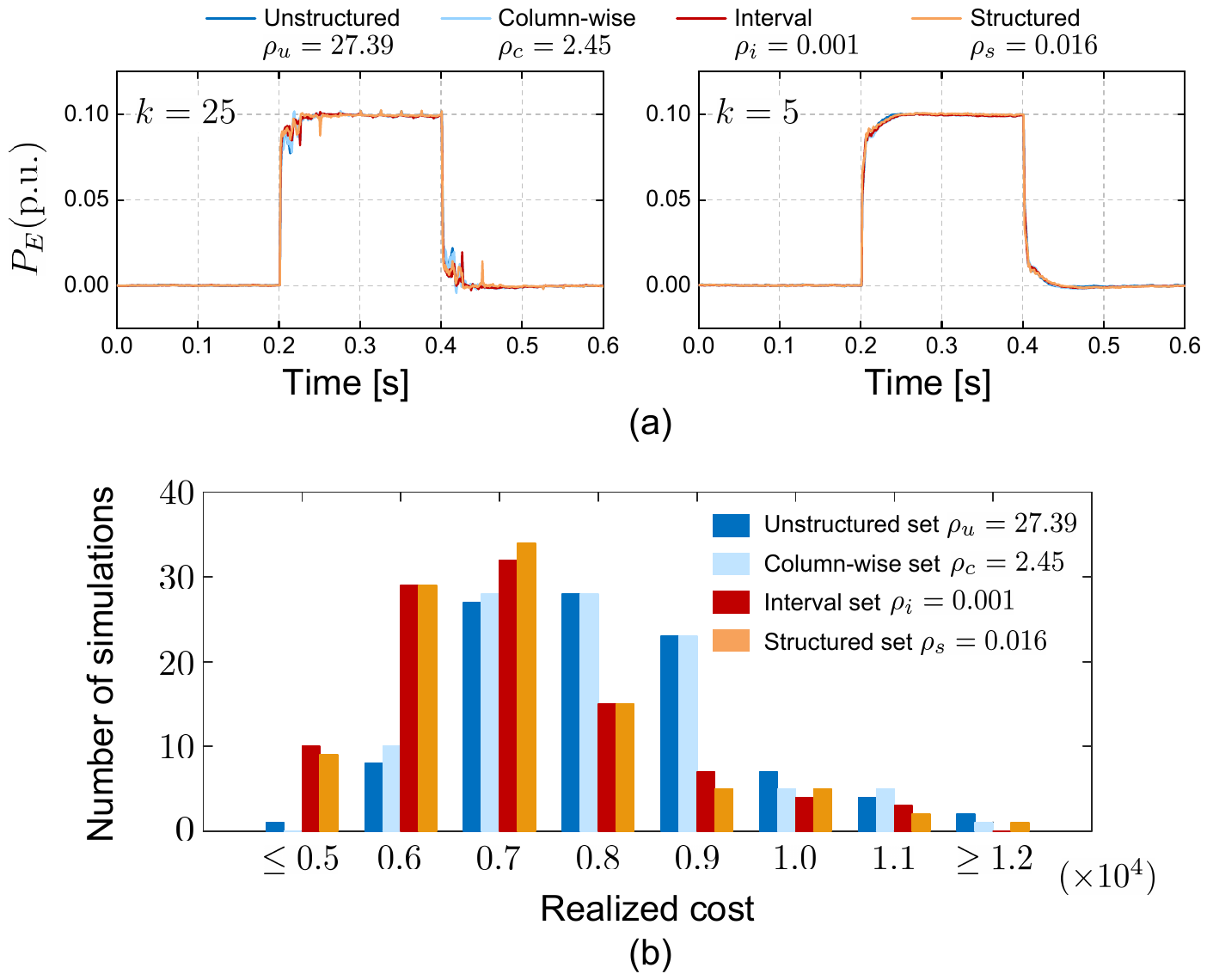}
\vspace{-2mm}
\caption{Comparison of on-line performance with noise power being $4\times 10^{-10}{\rm (p.u.)}$. (a) Time-domain responses. (b) Realized costs.}
\vspace{0mm}
\label{Fig_hist4}
\end{center}
\end{figure}

\begin{figure}
\begin{center}
\includegraphics[width=8.6cm]{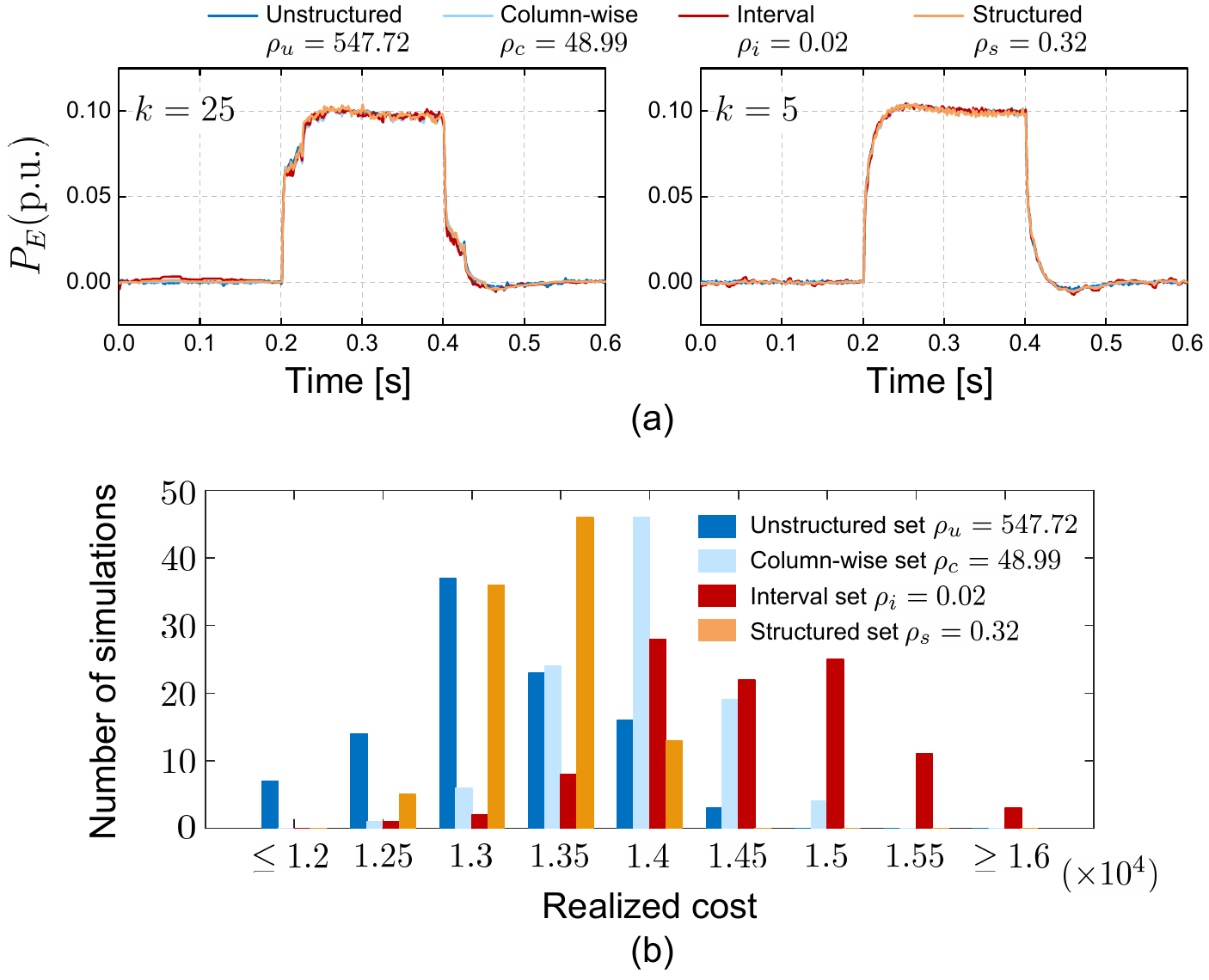}
\vspace{-2mm}
\caption{Comparison of on-line performance with noise power being $4\times 1.6^{-7}{\rm (p.u.)}$. (a) Time-domain responses. (b) Realized costs.}
\vspace{-3mm}
\label{Fig_hist5}
\end{center}
\end{figure}

    We next test the performance of robust DeePC when a higher measurement noise level is considered. Fig.~\ref{Fig_hist5} plots the active power responses when the noise power is $4\times 1.6^{-7}{\rm (p.u.)}$, with the bounds for the uncertainty sets scaled accordingly. For instance, the upper and lower bounds of the output signals are $\pm 0.02$ in the interval set (compare Fig.~\ref{Fig_noise_examples}~(b)). All the cases still have satisfactory tracking performances, but are worse than those in Fig.~\ref{Fig_hist4}~(a) as the impact of noise increases. The above simulations with $k=25$ are repeated 100 times, and the histogram in Fig.~\ref{Fig_hist5}~(b) compares the performance of robust DeePC with different uncertainty sets. Surprisingly, the cases with the unstructured set have the best average performance.
    This counter-intuitive effect is due to the fact that the noise is not uniformly-distributed over any of the uncertainty sets, and it suggests that robust DeePC with unstructured set may have superior performance in practice.

\subsection{Resilience in Presence of Outliers}

To test the resilience of robust DeePC, we consider bad data points in the output trajectory $\hat y^{\rm d}$ used to construct the Hankel matrices. The output data is set as 0 in these bad data points, to reflect, for example, the loss of a measurement. The histogram in Fig.~\ref{Fig_hist6} displays the realized input/output cost from $0$ to $0.6{\rm s}$ in the presence of $5$ such bad data points (appeared in the middle of the data sequence of $P_E$). The robust DeePC with the structured set has the best average performance in this case, which we attribute to the Hankel structure of the uncertainties. Robust DeePC also leads to satisfactory performance when the interval set or the column-wise set are considered. By comparison, the performance degrades significantly when the unstructured set is used, suggesting that this may result in unacceptable resilience. This problem can possibly be resolved by setting a sufficiently large bound, but at the cost of being overly conservative in normal situations.

\begin{figure}
\begin{center}
\includegraphics[width=7.7cm]{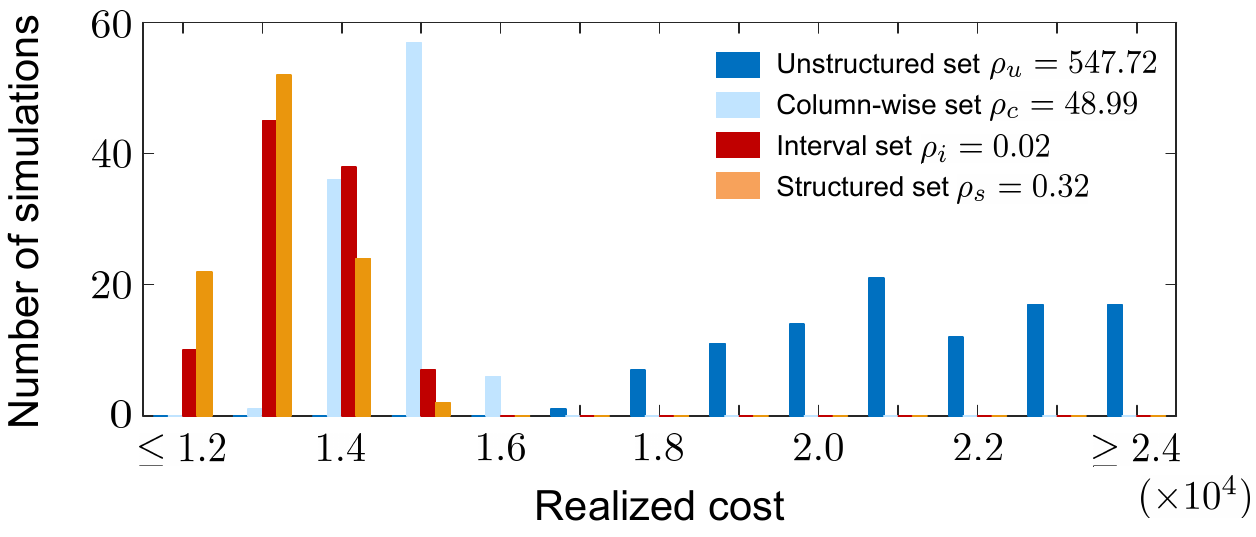}
\vspace{-2mm}
\caption{Comparison of on-line performance with $5$ bad data points.}
\vspace{-3mm}
\label{Fig_hist6}
\end{center}
\end{figure}

\section{Conclusions}
	This paper proposed a robust DeePC framework to perform robust and optimal model-free predictive control. The robust DeePC involves solving a min-max optimization problem to robustify the optimal control sequence against uncertainties in the input/output data that used for predictions. We showed that by applying robust DeePC, the realized input/output cost can be bounded if the considered uncertainty set captures the perfect input/output data. We explicitly derived tractable formulations for robust DeePC when different geometries of uncertainty sets are incorporated. In particular, when Hankel matrices are used as predictors, we illustrated how uncertainties with Hankel structures can be taken into account in a structured uncertainty set to reduce the conservativeness. By incorporating appropriate geometries of uncertainty sets, the robust DeePC recovers existing regularized DeePC algorithms with robustified constraints. In particular, a quadratic regularization corresponds to considering an unstructured uncertainty set, while a 1-norm regularization corresponds to a column-wise uncertainty set. The robust DeePC algorithm with different uncertainty sets was tested by high-fidelity simulations on a grid-connected converter system, which shows satisfactory performance even with noisy measurements and bad data.

	\appendices
	
	\section{Proof of Theorem~\ref{thm:quad_regl_robust}} \label{Appen:A}
	
	We start by compactly rewriting~\eqref{eq:quad_eqvl} as
	\begin{equation}\label{eq:quad_eqvl_1}
	\mathop{\min}\limits_{g \in \mathcal G}\;
	\| A^{(0)}g - b^{(0)} \|^2 + \lambda_g \|g\|^2\,.
	\end{equation}
	Note that the set $\mathcal G$ admits a tractable reformulation as the semi-infinite constraints can be reformulated into a finite number of convex quadratic inequalities using standard robust optimization techniques, as shown in Table~\ref{tab:g.examples}. {Accordingly, we compactly rewrite $\mathcal{G}$ as $\mathcal{G} = \{g \ | \ Gg+ \mathcal{N}_V(g) \le q\}$ where $\mathcal{N}_V(g) = [\|V^{(1)}g\|\; \|V^{(2)}g\|\; \cdots \; \|V^{(n_q)}g\|]^\top$ for some $G \in \mathbb{R}^{n_q \times H_c}$, $v,q \in \mathbb{R}^{n_q}$, and $V^{(i)} \in \mathbb{R}^{n_q \times n_q}\; \forall i \in [n_q]$.}
	
	Consider the Lagrangian of \eqref{eq:quad_eqvl_1}
	\begin{equation}
	\mathcal{L}_Q(g,\mu_Q) = \|A^{(0)}g-b^{(0)}\|^2 + \lambda_g \|g\|^2 + \mu_Q^\top (Gg + \mathcal{N}_V(g) - q),
	\end{equation}
	where $\mu_Q$ is the vector of the dual variables. Since $\mathcal G$ is nonempty, there exist a solution $(g^\star,\mu_Q^\star)$ to the Karush–Kuhn–Tucker (KKT) conditions of \eqref{eq:quad_eqvl_1}
	\begin{subequations}
    \begin{empheq}[left={\empheqlbrace}]{align}
	& \begin{array}{rl}
	\hspace{-1.6mm} 2{(A^{(0)})^\top (A^{(0)}g-b^{(0)})} + 2{\lambda_g g} +   G^\top \mu_Q & \\
	+ \sum_{i \in [n_q]}(\mu_Q)_i  \frac{(V^{(i)})^\top V^{(i)}g}{\|V^{(i)}g\|} & \hspace{-2.5mm} = 0  \,,
	\end{array}           \label{eq:KKT_LQ1}\\
	& \mu_Q^\top (Gg + \mathcal{N}_V(g) - q) = 0  \,,   \label{eq:KKT_LQ2}  \\
	& Gg + \mathcal{N}_V(g) \le  q            \,,     \label{eq:KKT_LQ3}  \\
	& \mu_Q \ge 0          \,.       \label{eq:KKT_LQ4}
    \end{empheq}
    \end{subequations}
    Therefore, the vector $g^\star$ is a minimizer of \eqref{eq:quad_eqvl_1}.
	
	Following Proposition~\ref{eq:frob}, the min-max problem~\eqref{eq:unstructure} can be equivalently reformulated as~\eqref{eq:unstructure_1}.
Consider the Lagrangian of \eqref{eq:unstructure_1}
\begin{equation}\label{eq:L}
\mathcal{L}(g,\mu) {\hspace{-0.7mm}} = {\hspace{-0.7mm}} \|A^{(0)}g-b^{(0)}\| + {\rho_u}\sqrt{\left\| g \right\|^2+1} + \mu^\top(Gg + \mathcal{N}_V(g) - q),
\end{equation}
where $\mu$ is the vector of the dual variables. By choosing $\rho_u$ in \eqref{eq:lambda_g}, it can be verified that $(g^\star,  \mu^\star, y^\star)$, where
\begin{equation*}
	(\mu^\star,  y^\star)  = \begin{cases}
	\hspace{-1.75mm}  \begin{array}{r}
	\left( \frac{\mu_Q^\star}{2 \| A^{(0)}g^\star - b^{(0)} \|}, \frac{(A^{(0)})^\top (A^{(0)} g^\star - b^{(0)})}{\| A^{(0)} g^\star - b^{(0)} \|} \right)  \\
	\text{if $A^{(0)}g^\star \ne b^{(0)}$,}
	\end{array} \\
	\left(\frac{\mu_Q^\star}{2} , 0 \right) \quad  \text{otherwise,} \end{cases}
\end{equation*}
and $g^\star$ as before,
satisfy the KKT conditions of \eqref{eq:L}:
\begin{subequations}
\begin{empheq}[left={\empheqlbrace}]{align}
& \begin{array}{ll}
\hspace{-1.6mm} y + \frac{\rho_u g}{\sqrt{\|g\|^2+1}} +   G^\top \mu  \\
\hspace{+1.0mm} + \sum_{i \in [n_q]}\mu_i  \frac{(V^{(i)})^\top V^{(i)}g}{\|V^{(i)}g\|}= 0 \,,
\end{array}    \label{eq:KKT_L1}\\
& \hspace{-2mm} \begin{array}{r}\|A^{(0)} \bar g - b^{(0)} \| \ge \|A^{(0)} g - b^{(0)} \|  + y^\top (\bar g - g) \\ \forall \bar g \in \mathbb R^{H_c} \,, \end{array} \\
& \mu^\top (Gg + \mathcal{N}_V(g) - q) = 0  \,,   \label{eq:KKT_L2}  \\
& Gg + \mathcal{N}_V(g) \le  q            \,,     \label{eq:KKT_L3}  \\
& \mu \ge 0            \,.     \label{eq:KKT_L4}
\end{empheq}
\end{subequations}
Thus, the vector $g^\star$ is also a minimizer of \eqref{eq:unstructure_1} (and \eqref{eq:unstructure}).
	
	Next we prove the monotonic relationship between $\rho_u$ and~$\lambda_g$. Let $g_1$ be the minimizer of \eqref{eq:quad_eqvl_1} with $\lambda_g = \lambda_{g1} >0$ (and the minimizer of \eqref{eq:unstructure_1} with $\rho_u = \rho_1$), and let $g_2$ be the minimizer of \eqref{eq:quad_eqvl_1} with $\lambda_g = \lambda_{g2} > \lambda_{g1}$ (and the minimizer of \eqref{eq:unstructure_1} with $\rho_u = \rho_2$). If $g_1 \neq g_2$, according to the definitions of $g_1$ and $g_2$, we have
\begin{equation*}
\|A^{(0)}g_1-b^{(0)}\|^2 + \lambda_{g1}{\left\| g_1 \right\|^2} - \lambda_{g1}{\left\| g_2 \right\|^2} < \|A^{(0)}g_2-b^{(0)}\|^2 \,,
\end{equation*}
\begin{equation*}
\|A^{(0)}g_2-b^{(0)}\|^2 < \|A^{(0)}g_1-b^{(0)}\|^2 + \lambda_{g2}{\left\| g_1 \right\|^2} - \lambda_{g2}{\left\| g_2 \right\|^2} \,,
\end{equation*}
leading to
\begin{equation*}
\lambda_{g1}({\left\| g_1 \right\|^2} - {\left\| g_2 \right\|^2}) < \lambda_{g2}({\left\| g_1 \right\|^2} - {\left\| g_2 \right\|^2}) \,,
\end{equation*}
which indicates that ${\left\| g_1 \right\|^2} > {\left\| g_2 \right\|^2}$ because $ 0 < \lambda_{g1} < \lambda_{g2}$. Then, we have ${\left\| g_1 \right\|} > {\left\| g_2 \right\|}$. Since $g_1$ minimizes~\eqref{eq:unstructure_1} with $\rho = \rho_1$ and $g_2$ minimizes~\eqref{eq:unstructure_1} with $\rho = \rho_2$, we also have
\begin{equation*}
\begin{split}
&\|A^{(0)}g_1-b^{(0)}\| + \rho_1\sqrt{\left\| g_1 \right\|^2+1} - \rho_1\sqrt{\left\| g_2 \right\|^2+1} \\
<& \|A^{(0)}g_2-b^{(0)}\| \\
< &\|A^{(0)}g_1-b^{(0)}\| + \rho_2\sqrt{\left\| g_1 \right\|^2+1} - \rho_2\sqrt{\left\| g_2 \right\|^2+1} \,.
\end{split}
\end{equation*}
It can then be deduced that $\rho_1 < \rho_2$. If $g_1 = g_2$, we have $\rho_1 < \rho_2$ according to \eqref{eq:lambda_g}. Hence, $\rho_u$ is increasing with the increase of $\lambda_g$ if $g^\star \neq 0$. This completes the proof.

\section{Proof of Theorem~\ref{thm:one_norm}}\label{Appen:B}
	Notice that \eqref{eq:onenorm_eqvl} can be compactly rewritten as
	\begin{equation}\label{eq:ReglDeePC_Onenorm}
	\mathop{\min}\limits_{g \in \mathcal G}\;
	\| A^{(0)}g - b^{(0)} \|^2 + \lambda_g \|g\|_1\,.
	\end{equation}
	When $(\rho_A)_i = \rho_b = \rho_c \in \mathbb{R}_{>0} \ \forall i$, Corollary~\ref{co:one_norm} implies that $g^\star$ is a minimizer of \eqref{eq:column} if and only if $g^\star$ is a minimizer of the following conic quadratic optimization problem
	\begin{equation}\label{eq:ReglDeePC_Onenorm_conic}
	\mathop{\min}\limits_{g \in \mathcal G}\;
	\| A^{(0)}g - b^{(0)} \| + \rho_c \|g\|_1\,.
	\end{equation}

	In what follows, we show that If $g^\star$ is a minimizer of \eqref{eq:ReglDeePC_Onenorm}, then $g^\star$ also minimizes \eqref{eq:ReglDeePC_Onenorm_conic} by choosing $\rho_c$ according to \eqref{eq:lambda_g_onenorm}. To this end, first observe that both \eqref{eq:ReglDeePC_Onenorm} and \eqref{eq:ReglDeePC_Onenorm_conic} have the same feasible region, and they can be equivalently reformulated into the following compact forms
	\begin{align}
    &\min_{\left(g, \nu \right) \in \mathcal G' }  \;\; \| A^{(0)}g-b^{(0)}  \|^2 + \lambda_g {\bf 1}_{H_c}^\top  \nu, & {\rm and}
    \label{eq:reform Onenorm_Regl_DeePC1}\\
    &\min_{\left(g, \nu \right) \in \mathcal G' }  \;\; \| A^{(0)}g-b^{(0)}  \| + \rho_c {\bf 1}_{H_c}^\top  \nu , & {\rm respectively,}
    \label{eq:reform convex quadratic one norm}
	\end{align}
	where $\mathcal G' = \{ (g, \nu ) \ | \ g \in \mathcal G, \ -\nu \le g \le \nu \}$ is a nonempty polyhedron. Note that the set $\mathcal G$ can be compactly rewritten as $\mathcal G = \{ g \ | \ G_1g + V_1 |g| \le q_1 \}$ for some $G_1, V_1 \in \mathbb{R}^{n_{q1}\times H_c}$, and $q_1 \in \mathbb{R}^{n_{q1}}$. Then, one can replace $|g|$ by $\nu$ such that the set $\mathcal G'$ can be compactly rewritten as $\mathcal G' = \{ (g,\nu) \ | \ G_\nu {\rm col}(g,\nu) \le q_\nu \}$ for some $G_\nu \in \mathbb{R}^{n_{q_\nu}\times H_c}$ and $q_\nu \in \mathbb{R}^{n_{q_\nu}}$.
	
	Then, analogous to the proof of Theorem~\ref{thm:quad_regl_robust}, one can derive the KKT conditions of~\eqref{eq:reform Onenorm_Regl_DeePC1} and~\eqref{eq:reform convex quadratic one norm}, and then check that by choosing $\rho_c$ according to~\eqref{eq:lambda_g_onenorm}, if $g^\star$ is a minimizer of~\eqref{eq:reform Onenorm_Regl_DeePC1}, then it is also a minimizer of~\eqref{eq:reform convex quadratic one norm}.

	It remains to prove the monotonic relationship between $\rho_c$ and $\lambda_g$. Note that~\eqref{eq:ReglDeePC_Onenorm} and~\eqref{eq:ReglDeePC_Onenorm_conic} have unique solutions thanks to~\cite[Lemmas~3~and~4]{tibshirani2013lasso}, which apply because $A^{(0)}$ contains noisy data so that each element of $A^{(0)}$ can be seen as a random variable drawn from a continuous distribution. Based on this fact, one can analogously prove the monotonic relationship by referring to the proof of Theorem~\ref{thm:quad_regl_robust}.

	\normalem
    \bibliographystyle{IEEEtran}
	\bibliography{references}

\end{document}